\newcommand{\vertiii}[1]{{\left\vert\kern-0.25ex\left\vert\kern-0.25ex\left\vert #1
    \right\vert\kern-0.25ex\right\vert\kern-0.25ex\right\vert}}
\def\T{{ \mathrm{\scriptscriptstyle T} }}
\def\vcm{VCM}
\def\vv{\mathrm{v}}
\DeclareMathOperator*{\Cov}{Cov}
\DeclareMathOperator*{\var}{Var}
\DeclareMathOperator*{\diag}{diag}
\DeclareMathOperator*{\tr}{tr}
\DeclareMathOperator*{\Tr}{Tr}
\def\T{{ \mathrm{\scriptscriptstyle T} }}
\newtheorem{theorem}{Theorem}
\newtheorem{lemma}{Lemma}
\theoremstyle{definition}
\begin{document}

\title{\bf Distributed Bayesian Varying Coefficient Modeling Using a Gaussian Process Prior}


\author[1]{Rajarshi Guhaniyogi \thanks{rajguhaniyogi@tamu.edu}}
\author[2]{Cheng Li \thanks{stalic@nus.edu.sg}}
\author[3]{Terrance D. Savitsky \thanks{savitsky.terrance@bls.gov}}
\author[4]{Sanvesh Srivastava \thanks{sanvesh-srivastava@uiowa.edu}}

\affil[1]{Department of Statistics, Texas A \& M University}
\affil[2]{Department of Statistics and Applied Probability, National University of Singapore}
\affil[3]{U.S. Bureau of Labor Statistics, Office of Survey Methods Research}
\affil[4]{Department of Statistics and Actuarial Science, University of Iowa}

\date{}
\maketitle

\begin{abstract}
  Varying coefficient models (VCMs) are widely used for estimating nonlinear regression functions in functional data models. Their Bayesian variants using Gaussian process (GP) priors on the functional coefficients, however, have received limited attention in massive data applications. This is primarily due to the prohibitively slow posterior computations using Markov chain Monte Carlo (MCMC) algorithms. We address this problem using a divide-and-conquer Bayesian approach that operates in three steps. The first step creates a large number of data  subsets with much smaller sample sizes by sampling without replacement from the full data. The second step formulates VCM as a linear mixed-effects model and develops a data augmentation (DA)-type algorithm for obtaining MCMC draws of the parameters and predictions on all the subsets in parallel. The DA-type algorithm appropriately modifies the likelihood such that every subset posterior distribution is an accurate approximation of the corresponding true posterior distribution. The third step develops a combination algorithm for aggregating MCMC-based estimates of the subset posterior distributions into a single posterior distribution called the Aggregated Monte Carlo (AMC) posterior. Theoretically, we derive minimax optimal posterior convergence rates for the AMC posterior distributions of both the varying coefficients and the mean regression function. We provide quantification on the orders of subset sample sizes and the number of subsets according to the smoothness properties of the multivariate GP. The empirical results show that the combination schemes that satisfy our theoretical assumptions, including the one in the AMC algorithm, have better nominal coverage, shorter credible intervals, smaller mean square errors, and higher effective sample size than their main competitors across diverse simulations and in a real data analysis.
\end{abstract}

{\bf Keywords:} Varying coefficient models, distributed Bayesian computations, data augmentation, multivariate Gaussian process, posterior convergence rates.

\section{Introduction}
\label{sec:introduction}

We first introduce the motivation of studying Bayesian varying coefficient models with a Gaussian process prior for massive data applications. Then, we outline our main contributions in this work and discuss the related literature.

\subsection{Varying Coefficient Models Using a GP Prior}

VCMs are a flexible and popular extension of the linear regression model \citep{HasTib93}, in which the regression coefficients can be smooth functions that capture nonlinear dependence of the response function on the covariates. VCMs are extensively used in practice, including time-series \citep{CheTsa93,Caietal00}, longitudinal \citep{Wuetal98,Rupetal03}, spatial \citep{Geletal03}, and spatiotemporal data analysis \citep{Luetal09}. Bayesian VCMs combine the flexibility of nonparametric models and the interpretability of parametric models and provide uncertainty estimates in inference and predictions via MCMC draws from the posterior distribution; therefore, they are well-suited for the Bayesian analysis of massive time-series, healthcare, and spatial/spatiotemporal databases.

We focus on Bayesian VCMs in which the varying coefficients are assigned a multivariate GP prior. Without loss of generality, we assume that all functional variables (responses or covariates) are defined on the $d$-dimensional indexing space $[0,1]^d$ ($d \in \NN$). The index is time and $d=1$ in purely times series applications, whereas $d=2$ and the index is a spatial location in purely spatial applications. The two indices are combined in spatiotemporal applications, where $d=3$ and the index is a space-time tuple. More generally, for a sample of indexes $\{u_i:i=1,\ldots,n\}$ ($n \in \NN$) from $[0,1]^d$, we observe the $s_i$-dimensional $i$th response vector $y(u_i) \in \RR^{s_i}$ ($s_i \in \NN$) and the matrix of $i$th covariate functions $X(u_i) \in \RR^{s_i \times p}$, where $p \in \NN$ is the number of covariates. We consider a \vcm\ with the form
\begin{align} \label{eq:mdl1}
y(u_i) = X(u_i) \beta(u_i) + \epsilon(u_i), \quad \epsilon(u_i) \stackrel{\text{ind}}{\sim} N \left(0, \tau^2 I_{s_i} \right), \quad i = 1, \ldots, n,
\end{align}
where $\beta(u) = \{\beta_1(u), \ldots, \beta_p(u)\}^\T \in \RR^p$ for $u \in [0, 1]^d$ is the vector of varying regression coefficients, 0 and $I_s$ are a zero vector and an identity matrix of dimension $s$, and $\epsilon(u_i) \in \RR^{s_i}$ ($i=1, \ldots, n$) are idiosyncratic normal errors. The responses in \eqref{eq:mdl1} are allowed to have different dimensions, but $s_i = s$ for every $i$ ($s \in \NN$) in a typical scientific application.

The VCM setup in \eqref{eq:mdl1} has advantages over its peers in the literature. For example, the varying coefficients $\beta(u)$ provide a more flexible and realistic modeling of responses and predictors with space or space-time indices, so they perform better in practice than fitting deterministic trends in covariates, such as polynomial regression \citep{Geletal03}. There are some methods for modeling the varying coefficients $\beta(u) $ for $u \in [0, 1]^d$ in \eqref{eq:mdl1}, but we use a multivariate GP prior distribution on $\beta(u)$; see Section 2 for a detailed description. In fact, VCMs in the existing literature typically rely on basis expansions using local polynomials, P-splines, and trees for modeling the varying coefficients $\beta(u)$ \citep{LiRac07,Mar10,Beretal19}. The specification of the number of basis functions or the height of the tree and the choice of knots or split locations is usually difficult in practice. In comparison, Bayesian inference using GP priors only requires a tuning-free prior specification for the covariance parameters. Even low-dimensional structures in the data are conveniently modeled using a GP prior projected on a moderately large number of inducing points \citep{QuiRas05}.

While the VCM in \eqref{eq:mdl1} with a GP prior on $\beta(u)$ has such advantages in the modeling of structured data, there are several practical considerations, including inefficient posterior computations, that have severely restricted its application in massive data settings. Posterior sampling involving GPs are already prohibitively slow if the sample size is large. In fact, the cost per MCMC iteration for updating $\beta(u)$ scales as $O(p^2n^2)$ for storage and $O(p^3n^3)$ for computations. As a result, the simple posterior sampling scheme for inference in \eqref{eq:mdl1} using Gibbs and slice sampling as originally proposed in \citet{Geletal03} becomes infeasible in practice. Even with the low-rank GP approximation techniques using $r$ inducing points \citep{QuiRas05,Alvetal12}, one can only reduce the cost per MCMC iteration from $O(n^3p^3)$ to $O(npr^2)$ \citep{AlvLaw11}. Furthermore, $r$ is chosen to be sufficiently large, typically of the order $O\{(\log n)^d \}$, to achieve {satisfactory} approximation accuracy \citep{Buretal19}. The use of deep GP priors in \eqref{eq:mdl1} further worsens the computational burden and cannot be used in practice \citep{DamLaw13,Duvetal14}. Finally, variational inference has been widely used in machine learning for inference in applications based on VCMs with multivariate GPs involving big data, but MCMC-based inference has remained relatively unexplored in this context \citep{AlvLaw11,Alvetal19,Youetal19}. MCMC based inference has the natural advantage of accurately characterizing the uncertainty of inference and prediction in VCMs with functional data, having strong local features. This is crucial in the spatio-temporal application of interest in Section~\ref{sec:real-data-analysis}, which aims at understanding local features in the space-time varying relationship between sea surface temperature and sea salinity for the Atlantic ocean based on large functional data. Our posterior inference algorithm fills this gap, providing a scalable MCMC-based alternative.

Addressing the computational bottlenecks for the VCM in \eqref{eq:mdl1} with a GP prior on $\beta(u)$, {we develop a three stage distributed Bayesian inferential approach for efficient computation with functional response and covariates obtained at a large number of indices. The first stage of the algorithm constructs $k$ subsets by randomly selecting $m$ samples without replacement from the full data, where $k$ is large and posterior computations with $m$ is tractable. The second step obtains $k$ MCMC-based approximations of the full data posterior distribution by fitting the VCM in \eqref{eq:mdl1} with a GP prior on $\beta(u)$ on all the subsets in parallel. This step has two main novelties. First, we compensate for the missing $(1-m/n)$-fraction of the full data in each subset by appropriately modifying the subset likelihood. Second, we reformulate the VCM in \eqref{eq:mdl1} with a GP prior on $\beta(u)$ as a linear mixed-effects model using parameter expansion. This leads to an MCMC algorithm that has closed-form full conditional distributions for all the parameters, except those used for defining the covariance function of the GP prior. We draw these parameters using elliptical slice sampling \cite[ESS,][]{Nisetal14}, which bypasses the proposal tuning problems of Metropolis-Hastings algorithm. The parameter expanded DA with the ESS step constitutes our DA-type algorithm for posterior inference and predictions on the subsets.

The subset posterior computations are tractable because $m \ll n$ and parameter updating is efficient due to the closed-from full conditionals; however, posterior computations a subset condition on $m$ samples only. The third stage of the algorithm develops a combination scheme that aggregates MCMC-based approximations of the true posterior distribution from the $k$ subsets into the AMC posterior, which uses information from all the $n$ samples. This step has several theoretical novelties. First, we identify regularity assumptions under which the AMC posterior distributions of both the varying coefficients and the mean regression function have minimax optimal posterior convergence rates in the $L_2$ norm toward their truth. Development of such guarantees in VCMs with multivariate latent GPs remains an open problem since their proposal in \citet{Geletal03}. Second, our results provide quantification on the orders of the subset size $m$, the number of subsets $k$, and modification of the subset likelihood according to the underlying smoothness of varying coefficients. Finally, our theory only requires a weak condition on the combination scheme, so it encompasses a few existing combination methods, including the AMC posterior proposed in this paper as well as the double parallel Monte Carlo \citep[DPMC,][]{XueLia19}, Wasserstein posterior \citep[WASP,][]{Srietal15}, and posterior interval estimation \citep[PIE,][]{Lietal17} algorithms. The minimax optimality of the AMC posterior distribution implies that it can be used for principled Bayesian inference in massive data settings with very large $n$ if $m$ and $k$ are chosen appropriately.

\subsection{Related Work}

The theoretical and computational properties of frequentist estimation methods for VCMs have been studied extensively. The theoretical results focus mainly on VCMs that use local polynomial smoothing, regularized basis expansions, and boosted trees \citep{HasTib93,FanZha99,huang2002varying,ZhoHoo19}; see \citet{Paretal15} for a recent review. The software for fitting VCMs is also well-developed \citep{Woo17}. On the other hand, Bayesian VCMs have been widely applied to different types of data \citep{Geletal03,Baketal15,Ham15,Datetal16}, but the literature on their theoretical properties is sparsely populated. Recently, \citet{Baietal19} have studied the theoretical properties of Bayesian VCMs based on regularized basis expansions; however, their model is different from the VCM with multivariate GPs considered in this paper, and their main focus is on the high dimensional variable selection problem, which is essentially different from our focus on applications with massive $n$.  Furthermore, frequentist properties of the posterior distribution of regression function obtained using a univariate GP prior are known \citep{VarZan11}, but their extensions to a  multivariate GP prior, similar to the one used for Bayesian inference in \eqref{eq:mdl1}, are non-trivial and have not been studied.

Furthermore, Bayesian VCMs with multivariate response functions have not been studied extensively in the literature. There are some extensions of factor models based on independent GP priors that are used for modeling multivariate responses. One such example is a spatial factor model \citep{RenBan13} that is defined as
\begin{align}
  \label{eq:fact-mdl1}
  y(u_i) = \beta(u_i) + \epsilon(u_i), \quad \beta(u_i) = L \nu(u_i), \quad y(u_i) \in \RR^s, \; s \in \NN, \quad u_i \in [0, 1]^2,
\end{align}
for $i = 1, \ldots, n$, where $L$ is a $p$-by-$q$ factor loading matrix and $\nu(\cdot) = \{\nu_1(\cdot), \ldots, \nu_q(\cdot)\}^\T$ is a vector of $q$ spatial factors, all following mutually independent  univariate GPs $\nu_1(\cdot), \ldots, \nu_q(\cdot)$. \citet{GuShe20} and \citet{RenBan13} also specify identifiability constraints on $L$ in \eqref{eq:fact-mdl1} for valid frequentist estimation and Bayesian inference using MCMC, respectively. Compared to \eqref{eq:mdl1}, every $y(u_i)$ in (\ref{eq:fact-mdl1}) has the same dimension and the covariate matrix $X(u_i)$ is unobserved due to the unsupervised nature of the model.  {While we also propose a very similar formulation of the varying coefficients $\beta(u)$ using the linear model of co-regionalization (LMC) approach, our main focus is posterior inference on the regression coefficient $\beta(u)$ and prediction of $y(u)$ in \eqref{eq:mdl1}, which does not require inference on $L$ or $\nu(\cdot)$.

We now turn our focus to distributed Bayesian inference in  \eqref{eq:mdl1}. The strategy of modifying the subset likelihoods for obtaining better uncertainty characterization in distributed Bayesian inference for parametric models has been discussed in \citet{Minetal17}. Each subset contains only $m/n$-fraction of the full data, so the posterior distribution computed from the usual likelihood overestimates the uncertainty relative to the true posterior distribution. Thus, the modification of the subset likelihood is essential for accurate uncertainty quantification in parametric models \citep{Minetal17}; however, the subset likelihood modification strategy for parametric models cannot be straightforwardly applied for Bayesian VCMs, due to the lack of any supporting theoretical result. One of our main contributions is to identify the subset likelihood modification and to justify it through rigorous theoretical results for VCMs; see Sections~\ref{sec:sampling-step} and \ref{sec:theory}. If we use the asymptotic posterior $L_2$-risk of a combined posterior distribution for quantifying its performance, then the likelihood modifications required for asymptotic optimality are different in the parametric models and VCMs based on an ``appropriately tuned'' multivariate GP prior.

The AMC algorithm belongs to the class of divide-and-conquer (or distributed) methods for Bayesian inference. These methods have been studied extensively for scalable Bayesian inference in parametric models \citep{Scoetal16, Entetal17,Minetal17,Lietal17,Srietal18,XueLia19,Joretal19} and nonparametric regression using univariate GP priors \citep{Zhaetal15,CheSha17,ShaChe19,SzaVan19,SzaVan20,ZhaWil19,guhaniyogi2017divide}. Unfortunately, the literature fails to address distributed Bayesian inference in \eqref{eq:mdl1} using multivariate GP priors, which is our main focus. All these methods consist of three main steps: dividing the massive data set into smaller computationally manageable subsets, performing statistical estimation on the subsets in parallel, and combining the subset estimates into a global estimate, which is used as an alternative to the true posterior distribution. Existing distributed Bayesian inference methods differ mainly in the third step that computes the global estimate. Given some minimal requirements are met in the combination step, our theoretical results can be used to obtain posterior convergence rates for any of these combination schemes used for distributed Bayesian inference in \eqref{eq:mdl1}. We demonstrate that such requirements are indeed met by the combination schemes used in the AMC, DPMC, PIE, and WASP algorithms.

\section{Model Setup and Prior Specification}
\label{sec:vary-coef-full}

In this section, we describe the varying-coefficient model setup and its equivalent formulation as a linear mixed-effects model. We then provide the prior specification and outline the main data augmentation algorithm that is used to fit the model.

\subsection{Model Reformulation}
\label{subsec:modelreform}

Consider a general VCM setup based on \eqref{eq:mdl1} with $p$ predictors out of which $q$ predictors have varying coefficients such that $p\geq q$. Without loss of generality, assume that the first $q$ predictors have varying coefficients, so that $\beta(u) = \{\beta_{\text{va}}(u), \beta_{\text{nv}}\}^\T$ in \eqref{eq:mdl1}, where $\beta_{\text{va}}(u) \in \RR^q$ for every $u \in [0,1]^d$ and  $\beta_{\text{nv}} \in \RR^{p-q}$ are the varying and non-varying coefficients blocks. For performing Bayesian inference on $\beta(\cdot)$, a typical strategy is to assign multivariate GP prior and Gaussian prior distributions on $\beta_{\text{va}}(\cdot)$ and $\beta_{\text{nv}}$, respectively, and obtain MCMC draws from the posterior distribution of $\beta(\cdot)$ using \eqref{eq:mdl1}.

The most important part of the prior specification is to choose a cross-covariance function for the multivariate GP prior on $\beta_{\text{va}}(\cdot)$ that is flexible and leads to simple posterior computations. Versatile constructions exist for specifying the cross covariance of $\beta_{\text{va}}(u)$ \citep{GasCoh99,MajGel07,Wac06,Zha07,genton2015cross,bourotte2016flexible}. We, however, adopt the LMC technique \citep{Alvetal12} for inducing correlation among the components of $\beta_{\text{va}}(u)$ due to its simplicity and relatively efficient computation. Under the LMC framework, we set $\beta_{\text{va}}(\cdot) = \alpha_{\text{va}} + \Gamma \nu(\cdot)$, where $\alpha_{\text{va}} \in \RR^q$, $\Gamma \in \RR^{q \times q}$, and $\nu(u)=\{\nu_1(u), \ldots ,\nu_q(u)\}^\T$ is a vector of $q$ independent GPs indexed by $[0,1]^d$ with mean functions 0 and correlation functions $\rho_1(\cdot,\cdot), \ldots, \rho_q(\cdot,\cdot)$ with parameters $\theta_1, \ldots, \theta_q$, respectively. The independent GP priors on $\nu_1(\cdot), \ldots ,\nu_q(\cdot)$ induces a multivariate GP prior on $\beta_{\text{va}}(\cdot)$. Specifically, given  $\alpha_{\text{va}}$, $\Gamma$, and  $\theta_1, \ldots, \theta_q$, $\beta_{\text{va}}(\cdot) = \alpha_{\text{va}}+\Gamma \nu(\cdot)$ is a $q$-variate GP with mean function $\alpha_{\text{va}}$ and covariance function $C(u, u')$ defined as
\begin{equation}\label{eq:cor-fun}
   C(u, u') = \Cov\{\Gamma \nu(u), \Gamma \nu(u')\}  = \sum_{a=1}^{q} \Gamma_a \rho_a(u, u') \Gamma_a^\T = \sum_{a=1}^{q} \Gamma_a \{R(u, u')\}_{aa} \Gamma_a^\T ,
\end{equation}
where $u, u' \in [0, 1]^d$, $\Gamma_a$ is the $a$th column of $\Gamma$, and $R(u, u')= \diag\{\rho_1(u, u'), \ldots, \rho_q(u, u')\}$ is a $q$-by-$q$ diagonal matrix of correlations determined by $\theta^\T = (\theta_1^\T, \ldots, \theta^\T_q)$.

We now reformulate the VCM in \eqref{eq:mdl1} with a GP prior imposed on $\beta_{\text{va}}(\cdot)$ using the LMC technique as linear mixed-effects model. Define $\alpha=(\alpha_{\text{va}},\beta_{\text{nv}})^\T \in \RR^p$ and $Z(u_i) \in \RR^{s_i \times q}$ to be the matrix that includes the first $q$ columns of $X(u_i)$ ($i=1, \ldots, n$). Reformulate \eqref{eq:mdl1} as
\begin{align} \label{eq:mdl11}
y(u_i) = X(u_i) \alpha + Z(u_i) \Gamma \nu(u_i) + \epsilon(u_i), \quad \nu(\cdot) \sim \text{GP}\{0, R(\cdot, \cdot)\}, \quad i = 1, \ldots, n,
\end{align}
where $R(\cdot, \cdot) = \diag\{\rho_1(\cdot, \cdot), \ldots, \rho_q(\cdot, \cdot)\}$ is the correlation ``function'' for $\nu(\cdot)$. The models in \eqref{eq:mdl1} and \eqref{eq:mdl11} are equivalent if we let $\beta_{\text{va}}(u) = \alpha_{\text{va}} + \Gamma \nu(u)$ for all $u\in [0,1]^d$. The parameters $\alpha$ and $\Gamma$ in \eqref{eq:mdl11} cannot be estimated uniquely from the data $\{y(u_i),X(u_i):i=1,\ldots,n\}$ {but the vector $\{\beta(u_1),\ldots,\beta(u_n)\}$ is still estimable if the design matrix formed by $\{X(u_i):i=1,\ldots,n\}$ as the row blocks is of full column rank. The prior distributions on the unknown parameters $\alpha, \Gamma, \tau^2, \theta$ are spelled out in Section \ref{subsec:prior}.

Many widely used models are obtained as special cases of \eqref{eq:mdl11}. If $\rho_a (u, u') = 1_{u = u'} $ for every $a$, where $1_{u = u'}$ equals 1 if $u=u'$ and 0 otherwise, then we recover the linear-mixed effects model using \eqref{eq:mdl11}, where $\Gamma \Gamma^\T$ equals the covariance matrix of the random effects. If $s_i=1$, $p=q$, $\Gamma $ is a diagonal matrix, and $X(u_i) = Z(u_i)$, then \eqref{eq:mdl1} reduces to
\begin{align} \label{eq:subm1}
y(u_i) = X(u_i) \{\alpha + \Gamma \nu(u_i)\} + \epsilon(u_i) \equiv X(u_i) \beta(u_i) + \epsilon(u_i), \quad i = 1, \ldots, n,
\end{align}
where $\alpha$ and $\Gamma \nu(\cdot)$ model the global and local effects, respectively, the diagonal entries of $\Gamma$ determine the scale of local effects, and $\beta(\cdot)$ is the $p$-by-1 varying coefficients vector. The spatiotemporal varying coefficient model is a special case of \eqref{eq:subm1} when $u \in [0, 1]^3$ \citep{Geletal03,gelfand2010multivariate}. Finally, assuming $u$ to be the time domain in \eqref{eq:subm1} yields a regression model for longitudinal data analysis.

\subsection{Prior Specification}
\label{subsec:prior}

The parameters $(\alpha, \Gamma, \tau^2)$ are jointly assigned a noninformative prior with density  $p(\alpha,\Gamma,\tau^2)\propto 1 / \tau^2$. If $\gamma$ represents the $q^2$-dimensional vector formed by stacking the columns of $\Gamma$, then this prior is a limiting case of the normal-inverse-gamma prior distribution on $\{(\alpha, \gamma), \tau^2\}$, where $(p + q^2)$-variate normal prior distribution is assigned on $(\alpha, \gamma)$.  We are also not concerned with the identifiability of $\Gamma$ or $\alpha$ since they are intermediate latent variables enabling efficient estimation of $\beta_{\text{va}}(u)$ for every $u \in [0, 1]^d$.

As far as the choice of $\rho_a(\cdot,\cdot)$ ($a=1,..,q$) is concerned, two types of correlation functions are used in this paper. The first one is the exponential correlation function defined as $\rho_a(u, u') = e^{-\phi_a \| u - u ' \|_2 }$ for any $u, u' \in [0, 1]^d$, where $\phi_a > 0$, $\| \cdot \|_2$ is the Euclidean norm, and $\theta_a = \{\phi_a\}$ ($a=1, \ldots, q$). We also use Gneiting's correlation function for varying coefficient modeling of spatiotemporal data presented in Section~\ref{sec:real-data-analysis}, which is defined as
\begin{align}
    \label{eq:corr-fun-1}
    \rho_a(u , u') = \frac{1}{(\psi_a |t - t'|^2 + 1)^{\kappa_a}}
    e^{- \frac{\phi_a \|h - h' \|_2} {(\psi_a |t - t'|^2 + 1)^{\kappa_a/2}}}, \quad u, u' \in [0, 1]^3,
\end{align}
where $u=(h, t), u'=(h', t')$ are space-time tuples, $h, h' \in[0,1]^2$, $t, t' \in[0,1]$, $\phi_a > 0$, $\psi_a > 0$, $\kappa_a \in [0, 1]$, and $\theta_a = (\phi_a, \psi_a, \kappa_a)^\T$ \citep{gneiting2002nonseparable}. For the exponential correlation function, we put a Uniform($\underline c_{0a}, \overline c_{0a}$) prior on $\phi_a$. The parameters $\phi_a$, $\psi_a$, and $\kappa_a$ are assigned Uniform($\underline c_{1a}, \overline c_{1a}$),  Uniform($\underline c_{2a}, \overline c_{2a}$), and  Uniform($\underline c_{3a}, \overline c_{3a}$) priors respectively for the Gneting's correlation function. The parameters for the uniform priors satisfy $0 < \underline c_{ia} < \overline c_{ia} $ for $i=0,1,2$ and $0 < \underline c_{3a} < \overline c_{3a} \leq 1$.

\subsection{The DA-type Algorithm}
\label{da-type-algo}

The DA-type algorithm for posterior inference on $\beta(\cdot), \tau^2$ and prediction of $y(\cdot)$ has six parts. Let $\Ucal^*$ be a given subset of $[0, 1]^d$ where the draws of $\beta(\cdot)$  and $y(\cdot)$ are required, $\Dcal$ be the training data, and $\nu_n=\{\nu(u_1), \ldots, \nu(u_n)\}$, where $\nu(u_i)$ is defined in \eqref{eq:mdl11}. The first part of the DA-type algorithm is the Imputation (I) step that draws $\nu_n$ given $\Dcal$ and $(\alpha, \Gamma, \tau^2, \theta)$. The second part of the DA-type algorithm is the Prediction (P) step that has five sub parts. It uses the $\nu_n$ to draw $(\alpha, \Gamma, \tau^2, \theta)$ given $\Dcal$ and $\{\beta(u^*), y(u^*): u^* \in \Ucal^*\}$ given $(\alpha, \Gamma, \tau^2, \theta)$.} The I and P steps are repeated until convergence to the stationary distribution of the Markov chain for $(\tau^2, \{\beta(u^*), y(u^*): u^* \in \Ucal^*\})$; see Appendix \ref{da-full-deriv} for derivation of the six parts and their analytic forms.

This DA-type algorithm is slow in moderately large data sets. The computational complexity of the I step is $O(n^3p^3)$ if we update multivariate GPs. Low rank GP methods provide some computational relief, though the computational gain is not substantial if one needs to maintain the inferential accuracy. The sparse iterative methods for sampling from GPs lead to only marginal improvements because the number of iterations have to be relatively large for guaranteeing accurate approximation \citep{ChoSaa14}. Due to the slow I step, the DA-type algorithm using the full data is extremely inefficient in applications with a large $n$. The next section presents an extension of the DA-type algorithm using divide-and-conquer technique that overcomes these inefficiencies while retaining its simplicity and numerical stability.

\section{Distributed Varying Coefficient Modeling Using a GP Prior}
\label{sec:algorithm}

Our distributed model fitting of the Bayesian VCM consists of three steps described below.

\subsection{First Step: Constructing Training Data Subsets}
\label{sec:partitioning-step}

The first step of the distributed extension of the DA-type algorithm in Section \ref{da-type-algo} constructs $k$ subsets from the training data. The default scheme for constructing subsets is to randomly sub-sample without replacement from the training data, ensuring that each subset provides a reliable representation of the full data and all observations specific to a sample are on the same subset. The size of a subset $m$ is set to be moderately large so that $p,q \ll m$ and posterior computations are efficient on any subset. The ``optimal choice" of $k$ depends on the smoothness of the regression function, which we study in Section \ref{sec:theory}. Let $\Dcal_j$ be the training data on subset $j$ ($j=1,\ldots,k$), $u_{ji}$ be the $i$th  index in subset $j$, and $y(u_{ji})$, $X(u_{ji})$, $Z(u_{ji})$ be the corresponding observations with dimensions $s_{ji}$, $s_{ji}$-by-$p$, $s_{ji}$-by-$q$, for $i=1, \ldots, m$. Similarly, the subset $j$ versions of parameters $(\beta, \alpha, \Gamma, \tau^2, \theta)$ are denoted by $(\beta_j, \alpha_j, \Gamma_j, \tau^2_j, \theta_j)$. The correlation function $\rho_{ja}$ equals $\rho_a$ but replaces $\theta_a$ by $\theta_{ja}$.  The GP with with correlation function $\rho_{ja}$ is denoted as $\tilde \nu_{ja}(\cdot)$, and $\tilde \nu_j(\cdot) = \{\tilde \nu_{j1}(\cdot), \ldots, \tilde \nu_{jq}(\cdot)\}^{\T}$.

The VCM in \eqref{eq:mdl11} has a natural extension to subset $j$. For $i=1, \ldots, m$,
\begin{align} \label{eq:cor-sparse-sub}
& y(u_{ji}) = X(u_{ji}) \alpha_j + Z(u_{ji}) \Gamma_j \tilde \nu_j(u_{ji}) + \epsilon(u_{ji}), \nonumber\\
& \tilde \nu_{ja}(\cdot) \sim \text{GP}\{0, \rho_{ja}(\cdot, \cdot)\}, ~~  a=1,\ldots,q,\quad
\epsilon(u_{ji}) \sim N(0, \tau^2_j),
\end{align}
which reduces to the subset $j$ extension of \eqref{eq:mdl1} if $\{\beta_j(u)\}_{\text{va}} = (\alpha_j)_{\text{va}} + \Gamma_j \tilde \nu_j(u)$, $u \in [0,1]^d$, where $\beta_j(u) = [\{\beta_j(u)\}_{\text{va}}, (\beta_j)_{\text{nv}}]$ and $\alpha_j = \{(\alpha_j)_{\text{va}}, (\beta_j)_{\text{nv}}\}$ are represented in terms of their varying and non-varying coefficients blocks. The prior distributions for $(\alpha_j, \Gamma_j, \tau^2_j)$ and $\theta_j$ in \eqref{eq:cor-sparse-sub} are the same as defined in Section \ref{subsec:prior} for $(\alpha, \Gamma, \tau^2)$ and $\theta$, respectively. 
If we obtain MCMC draws of the parameters and predictions using the likelihood in \eqref{eq:cor-sparse-sub} on each subset directly, then we condition on an $(m/n)$-fraction of the full data, resulting in wider credible intervals for parameters than those obtained using the full data posterior distribution. The next step fixes this problem by using a modified likelihood based on \eqref{eq:cor-sparse-sub} that compensates for the missing  $(1 - m/n)$-fraction of the full data.

\subsection{Second Step: Posterior Sampling on the Subsets}
\label{sec:sampling-step}

We now consider the inference on each subset using the DA-type algorithm based on Section \ref{da-type-algo} and \eqref{eq:cor-sparse-sub}.
The GP realizations $\tilde \nu_{j}(u_{j1}), \ldots, \tilde \nu_{j}(u_{jm})$ in \eqref{eq:cor-sparse-sub} are looked upon as the ``missing'' data, and marginalizing over them recovers the subset $j$ version of \eqref{eq:mdl1} with a GP prior on $[\{\beta_j(u_{j1})\}_{\text{va}}, \ldots, \{\beta_j(u_{jm})\}_{\text{va}}]$.
For $a=1,\ldots,q$, we let $\tilde s_j=\sum_{i=1}^m s_{ji}$ and define
\begin{align}\label{eq:tilde.nu.Z}
& \tilde \nu_{ja} = \{\tilde \nu_{a}(u_{j1}), \ldots, \tilde \nu_{a}(u_{jm})\}^\T \in \RR^{m}, \quad \tilde \nu_{j} = \left(\tilde \nu_{j1}^\T,\ldots,\tilde \nu_{jq}^\T \} \right)^\T \in \RR^{mq}, \nonumber \\
& \Gamma_j = (\Gamma_{j1},\ldots,\Gamma_{jq}), ~~ \Gamma_{j1},\ldots,\Gamma_{jq} \in \RR^q, \nonumber \\
& \tilde Z_{ja} = \diag\left\{Z(u_{j1}) \Gamma_{ja}, \ldots,  Z(u_{jm}) \Gamma_{ja} \right\}, \nonumber \\
&y_j = \left\{y(u_{j1})^\T, \ldots, y(u_{jm})^\T \right\}^\T, ~~ X_j = \left\{X(u_{j1})^\T, \ldots, X(u_{jm})^\T\right\}^\T .
\end{align}
If $\tilde \nu_{j}$ is known, then the full conditional for drawing $(\alpha_j, \Gamma_j, \tau_j^2)$ given $\tilde \nu_{j}$, $\Dcal_j$ is available in closed-from; therefore, $\tilde \nu_{j}$ is an auxiliary variable that simplifies the forms of the full conditionals if it is known and its marginalization preserves the Bayesian VCM with a GP prior on $\{\beta_j(\cdot)\}_{\text{nv}}$.

The I step of the DA-type algorithm on subset $j$ uses this property of $\tilde \nu_{j}$ for simplifying the form of the modified likelihood. Assume that $(\alpha_j, \Gamma_j, \tau_j^2, \theta_j)$ are given. Let $p(\tilde \nu_{j} \mid \Dcal_j, \alpha_j, \Gamma_j, \tau_j^2, \theta_j)$ denote the conditional density of $\tilde \nu_{j}$ given $\Dcal_j$ and $(\alpha_j, \Gamma_j, \tau_j^2, \theta_j)$ based on \eqref{eq:cor-sparse-sub}. Then, the I step
\begin{enumerate}[leftmargin=5mm]
\item[(a)] draws $\tilde \nu_{j} $ given $\Dcal_{j}$ and $(\alpha_j, \Gamma_j, \tau^2_j, \theta_j)$ from $N_{mq}(\mu_{\tilde \nu_j}, \Sigma_{\tilde \nu_j})$, where $\mu_{\tilde \nu_j}$ and $\Sigma_{\tilde \nu_j}$ are defined in terms of their blocks as
\begin{align}  \label{eq:istep-mu-sig}
(\mu_{\tilde \nu_j})_{a} &=  R^\T_{ja} \tilde Z_{ja}^\T \left( \sum_{c=1}^q \tilde Z_{jc} R_{jc} \tilde Z^\T_{jc}  + \tau_j^2 I_{\tilde s_j}  \right)^{-1} (y_j - X_j \alpha_j), \quad a = 1, \ldots, q, \nonumber \\
(\Sigma_{\tilde \nu_j})_{aa} &=  R_{ja} - R^\T_{ja} \tilde Z_{ja}^\T \left( \sum_{c=1}^q \tilde Z_{jc} R_{jc} \tilde Z^\T_{jc}  + \tau_j^2 I_{\tilde s_j}  \right)^{-1} \tilde Z_{ja} R_{ja}, \quad a = 1, \ldots, q, \nonumber \\
(\Sigma_{\tilde \nu_j})_{ab} &=  - R^\T_{ja} \tilde Z_{ja}^\T \left( \sum_{c=1}^q \tilde Z_{jc} R_{jc} \tilde Z^\T_{jc}  + \tau_j^2 I_{\tilde s_j}  \right)^{-1} \tilde Z_{jb} R_{jb}, \quad a \neq b \in \{1, \ldots, q\}, \nonumber \\
(R_{ja})_{ii'} &= \rho_{ja}(u_{ji}, u_{ji'}), \quad i, i' = 1, \ldots, m, \quad  a = 1, \ldots, q.
\end{align}
\end{enumerate}
If we substitute the I step draw of $\tilde \nu_{j}$ in \eqref{eq:cor-sparse-sub} and compute the likelihood of $(\alpha_j, \Gamma_j, \tau_j^2, \theta_j)$ given $(\Dcal_j, \tilde \nu_{j})$, then this is equivalent to computing the likelihood of  $(\alpha_j, \Gamma_j, \tau_j^2, \theta_j)$ after marginalizing over $\tilde \nu_{j}$ in \eqref{eq:cor-sparse-sub} using Monte Carlo. Denote this Monte Carlo based likelihood as $L_j$, and we use it as the likelihood of $(\alpha_j, \Gamma_j, \tau_j^2, \theta_j)$ given $\Dcal_j$ in the P step.

The subset $j$ draws $(\alpha_j, \Gamma_j, \tau_j^2, \theta_j)$ given $\Dcal_j$ using a modified version of  $L_j$. Since $\Dcal_j$ contains an $(m/n)$-fraction of the full data, we raise $L_j$ to a power of $\delta_n$, where $\delta_n$ is a deterministic sequence dependent on $n$. Let $L_j^{\delta_n}$ be this modified likelihood, and the modification is equivalent to replicating $\Dcal_j$ for $\delta_n$-times. The power $\delta_n$ is chosen such that $L_j^{\delta_n}$ compensates for the missing $(1- m/n)$-fraction of the full data on subset $j$. The modified posterior density for drawing  $(\alpha_j, \Gamma_j, \tau_j^2, \theta_j)$ given $\Dcal_j$ is defined as
\begin{align}\label{eq:fullsub-j}
  \pi_{m} (\alpha_j, \Gamma_j, \tau_j^2, \theta_j \mid \Dcal_j) = \frac{L_j^{\delta_n} \; p(\alpha_j, \Gamma_j, \tau^2_j) \; p(\theta_j)} {\int L_j^{\delta_n} \; p(\alpha_j, \Gamma_j, \tau^2_j) \; p(\theta_j) \; d\alpha_j\, d\Gamma_j \, d\tau^2_j \, d\theta_j}, \quad j = 1,\ldots, k,
\end{align}
where the denominator is finite due to the choice of prior distributions. The method of raising subset likelihoods to a power is known as the \emph{stochastic approximation} \citep{Minetal14}. We choose $\delta_n=n/m$ following the same choice in parametric models \citep{Minetal14,Entetal17,Lietal17,Srietal18}, which is equivalent to replicating the subset data for $n/m$ times, such that the subset posterior variances of parameters are comparable to the full data posterior variance. The theoretical impact from $\delta_n$ will be further discussed in Section \ref{sec:theory}.

The P step draws $\alpha_j, \Gamma_j, \tau^2_j$, and $\theta_j$  given $(\Dcal_j, \tilde \nu_{j})$ in a sequence of three steps using \eqref{eq:fullsub-j}. Define $\gamma_j = (\Gamma_{j1}^\T, \ldots, \Gamma_{jq}^\T)^\T$, the column-wise vectorization of $\Gamma_j$, $b_j= (\alpha_j^\T, \gamma_j^\T)^\T$, and
\begin{align} \label{eq:xx}
W_j= \left( W_{j1}^\T, \ldots, W_{jm}^\T\right)^\T , \quad  W_{ji} = [X_{ji}  \;\; \{\tilde \nu_{1}(u_{ji}), \ldots, \tilde \nu_{q}(u_{ji})\} \otimes Z_{ji} ] \in \RR^{ s_{ji} \times (p + q^2)},
\end{align}
for $i = 1, \ldots, m$, where $W_j \in \RR^{\tilde s_j \times (p + q^2)}$ and $\otimes$ is the Kronecker product. Then, the P step draws $\alpha_j, \Gamma_j, \tau^2_j$, and $\theta_j$ as follows:
\begin{enumerate}[leftmargin=5mm]
\item[(b)] draw $\tau_j^2$ given $\tilde \nu_{j}$ and $\Dcal_j$ as
  \begin{align}
    \label{eq:12}
    \tau^2_j \sim \frac{\delta_n \left\|y_j - \hat y_j \right\|_2^2 }{\chi^2_{\delta_n \tilde s_j - p - q^2}}, \quad \hat y_j = W_j (W_j^\T  W_j )^{-1} W_j^{\T} y_j,
  \end{align}
where $\chi^2_{\delta_n \tilde s_j - p - q^2}$ is a chi-square random variable with $\delta_n \tilde s_j - p - q^2$ as its degrees of freedom.
\item[(c)] draw $b_j = (\alpha_j^\T, \gamma_j^\T)^\T$ given $\tau^2_j$, $\tilde \nu_{j}$, and $\Dcal_j$ from $N\{(W_j^\T  W_j )^{-1} W_j^{\T} y_j, \tau^2_j (W_j^\T W_j)^{-1}\}$; and
\item[(d)] draw $\theta_{j1}, \ldots, \theta_{jq}$ given $\tilde \nu_{j}$ and $\Dcal_j$ using ESS (Algorithm 1 in \citealt{Nisetal14}) with the modified log-likelihood for $\theta_1, \ldots, \theta_q$ defined as
\begin{align}
  \label{eq:ess-llk}
  \log L(\theta_{j1}, \ldots, \theta_{jq}) = - \frac{\delta_n mq}{2} \log 2 \pi - \frac{\delta_n}{2} \sum_{a=1}^q \log \text{det} (R_{ja})
  - \frac{\delta_n}{2} \sum_{a=1}^q \tilde \nu_{ja}^\T R_{ja} ^{-1} \tilde \nu_{ja},
\end{align}
where $\tilde \nu_{ja}$ is defined in \eqref{eq:tilde.nu.Z}, $R_{ja}$ is a $m$-by-$m$ matrix defined in \eqref{eq:istep-mu-sig} and depends on $\theta_{ja}$. The form of the likelihood of $\theta_{j1}, \ldots, \theta_{jq}$ depends on the correlation functions of the univariate GPs; see Appendix \ref{da-sub-deriv} for the exact details of the likelihood for the two correlation functions used in this paper.
\end{enumerate}

In most applications, the goal is to perform inference on $\beta(u^*)$ and predict $y(u^*)$ for $u^* \in \Ucal^*$, where $\Ucal^*=\{u_1^*,\ldots,u_l^*\}$ is a known subset of $[0,1]^d$, also known as the testing set. This is done by using the parameter draws from parts (b)--(d) in the P step as follows:
\begin{enumerate}[leftmargin=5mm]
\item[(e)] draw $\nu_{ja}^* = \{\nu_{a}(u_1^*), \ldots, \nu_{a}(u_l^*)\}^\T$ given $\tilde \nu_{j}$, $\theta_j$, and $\Dcal_j$ from $N(\mu_{ja}^*, \Sigma_{ja}^*)$, where
  \begin{align}
    \label{eq:s-nu-mu-sig}
    &\mu_{ja}^* = R_{ja*}^\T R_{ja}^{-1} \tilde \nu_{ja}, \quad \Sigma_{ja}^* = R_{ja**} - R^\T_{ja*} R_{ja}^{-1} R_{ja*}, \nonumber\\
    &(R_{ja**})_{i'i''} = \rho_{ja}(u^*_{i'}, u^*_{i''}), \quad
                         (R_{ja*})_{ii'} = \rho_{ja}(u_i, u_{i'}^*),
  \end{align}
  for $a=1, \ldots, q$, $i', i'' = 1, \ldots, l$, and $i = 1, \ldots, m$, and set $\{\beta_j(u^*)\}_{\text{nv}} = (\alpha_j)_{\text{nv}} $ and $\{\beta_j(u^*)\}_{\text{va}} = (\alpha_j)_{\text{va}} + \Gamma_j  \nu_j(u^*)$,  $u^* \in \Ucal^*$; and
\item[(f)] draw $y_j(u^*)$ given $\alpha_j$, $\Gamma_j$, $\tau_j^2$, $X(u^*)$, $\beta(u^*)$ independently from $N(\mu_{y_j}^*, \tau_j^2 I_{s^*})$ for every $u^* \in \Ucal^*$, where $\mu_{y_j}^* = X(u^*) \beta_j(u^*)$ and $s^*$ is the dimension of $y$ at $u^*$.
\end{enumerate}
The I and P steps, including the parts (a)--(f), are run in parallel on the $k$ subsets until convergence of the Markov chain for $(\tau^2_j, \{\beta_j(u^*), y_j(u^*): u^* \in \Ucal^*\})$  to its stationary distribution; see Appendix \ref{da-sub-deriv} for derivation of the six parts and their analytic forms.

The AMC sampler cycles through steps (a)--(f) on subset $j$ to obtain posterior draws of $\beta_j(u^*)$, $\tau_j^2$, and $y_j(u^*)$, $u^* \in \Ucal^*$ ($j=1, \ldots, k$). Let $T$ be the number of post-burnin draws collected on every subset. Denote the parameter and prediction samples obtained from subset $j$ at the $t$th iteration as $\{\beta_{j}^{(t)}(u^*), \tau^{2(t)}_{j}, y_{j}^{(t)}(u^*)\}$ ($t=1, \ldots, T$; $u \in \Ucal^*$), which are called the $j$th \emph{subset posterior} draws. We assume that the marginal $j$th subset posterior draws for $\beta$, $\tau^2$, and $y^*$ follow their invariant distributions denoted as $\Pi_{\beta}(\cdot \mid \Dcal_j)$,  $\Pi_{\tau^2}(\cdot \mid \Dcal_j)$, and $\Pi_{y^*}(\cdot \mid \Dcal_j)$, which are called $j$th \emph{subset posterior} distributions and their densities  are obtained using the joint density in \eqref{eq:fullsub-j}. We develop next an algorithm for combining the collection of $k$ subset posterior draws such that the combined draw follows the AMC posterior distribution that conditions on the full data.

\subsection{Third Step: Aggregation of Subset Posterior Draws}
\label{sec:comb-step}

We aggregate the subset posterior draws for $\beta(\cdot)$, $y(\cdot)$, and $\tau^2$ using centering and scaling operations. Let $\beta_j^{*(t)} = \{\beta_{j}^{(t)}(u^*_1), \ldots, \beta_{j}^{(t)}(u^*_l)\}$, $y_j^{*(t)} = \{y_{j}^{(t)}(u^*_1), \ldots, y_{j}^{(t)}(u^*_l)\}$, and $\log \tau_j^{2 (t)}$ be the $t$th draws for $\beta(\cdot)$, $y(\cdot)$, and $\log \tau^2$ on subset $j$ ($j=1, \ldots, k$). Let $\xi \in \{\beta(\cdot), y(\cdot), \log \tau^2\}$ and  $\xi^{(t)}_j$ be its $t$th draw on subset $j$. Define the  empirical mean vector and covariance matrix of $\xi$ draws on subset $j$ as
\begin{align}
  \label{eq:mean-cov-j}
  \mu_{j \xi} &= \frac{1}{T} \sum_{t=1}^T \xi_{j}^{(t)}, \quad
  \Sigma_{j\xi} = \frac{1}{T} \sum_{t=1}^T \left( \xi_{j}^{(t)} - \mu_{j\xi}\right) \left( \xi_{j}^{(t)} - \mu_{j\xi}\right)^{\T}, \quad j =1, \ldots, k.
\end{align}

We now summarize the algorithm for obtaining draws from the AMC posterior using the subset posterior draws. First, define the combined empirical mean and covariance matrix for $\xi$ draws using the subset posterior empirical means and covariance matrices in \eqref{eq:mean-cov-j}  as
\begin{align}
  \label{eq:mean-cov-comb}
  \mu_{\xi} = \frac{1}{k} \sum_{j=1}^k \mu_{j \xi} , \quad \Sigma_{\xi} = \frac{1}{k} \sum_{j=1}^k \Sigma_{j \xi}.
\end{align}
Second, center and scale the $j$th subset posterior draws of $\xi$ as
\begin{align}
  \label{eq:cent-scale}
  q_{j \xi}^{(t)} = \Sigma_{j \xi}^{-1/2} \left( \xi_{j}^{(t)} - \mu_{j \xi} \right), \quad t=1, \ldots, T; \; j=1, \ldots, k.
\end{align}
Third, rescale and recenter the $\xi$ draws from all the subsets in \eqref{eq:cent-scale} as
\begin{align}
  \label{eq:re-cent-scale}
  \xi_{t'} = \mu_{\xi} + \Sigma_{\xi}^{1/2} q_{j \xi}^{(t)}, \quad t' = t + (j-1)T; \; j=1, \ldots, k,
\end{align}
to obtain $t'$th draws from the AMC posterior distribution of $\xi$. The $\tau^2$ draws are obtained by taking the exponential of draws from the AMC posterior of $\log \tau^2$.

The AMC aggregation algorithm for subset posteriors bears close connections to a few recently devised combination methods such as the Double Parallel Monte Carlo \citep[DPMC,][]{XueLia19} and Wasserstein posterior \citep[WASP,][]{XuSri20}. All three algorithms agree on the combination of the subset posterior means but differ in their approach to combining subset posterior covariance matrices. The scaling and re-scaling steps are absent in DPMC because it relies on the asymptotic normality of subset posterior distributions. On the other hand, the combination algorithms of AMC and WASP have the same three steps, except the former and latter compute the combined covariance matrix as the arithmetic and geometric means of subset posterior covariance matrices. The computation of the geometric mean in WASP requires an iterative algorithm \citep{Alvetal16}, so the AMC algorithm is computationally simpler. Finally, AMC, WASP and DPMC are applicable for aggregating posterior distributions of multivariate quantities, whereas the PIE algorithm is developed for posterior distributions of scalar quantities only. %

An important contribution of this article is to theoretically establish that any aggregation step of subset posteriors constructed in Step 2 with the modified data likelihood that satisfies a simple assumption (see Assumption \ref{combine_assumption} in Section~\ref{sec:theory}) will lead to an optimal estimation of the regression function. We establish that this assumption is satisfied not only by AMC, but also by DPMC, WASP and PIE aggregation methods for divide-and-conquer Bayesian inference mentioned above.

\section{Theoretical Properties of AMC}
\label{sec:theory}

This section derives the posterior convergence rates for the varying coefficients and the mean regression function for the three stage AMC framework under certain regularity assumptions on the smoothness of the latent GPs. This setup allows the study of VCMs with coefficients modeled using GPs with full-rank and low-rank covariance functions from a common framework.

We first make the following assumption on $n, k, m$ and the sampling scheme of index $u$.
\begin{enumerate}[label=(A.\arabic*)]
\item \label{sampling} $c_1n\leq km\leq c_2n$ for some constants $0<c_1\leq 1\leq c_2$. The sampled indices in the full data $\{u_{1},\ldots,u_{n}\}$  and a single testing index $u^*$ are drawn independently from the Lebesgue measure on $[0,1]^d$. The subset indexes $\{u_{j1},\ldots,u_{jm}:~j=1,\ldots,k\}$ are drawn independently without replacement from $\{u_{1},\ldots,u_{n}\}$.
\end{enumerate}

For the VCM in \eqref{eq:mdl11}, we simplify the model setup by first assuming that $p=q$ and $X(\cdot)\equiv Z(\cdot)$; that is, every covariate function has a varying coefficient. The frequentist minimax rates for Gaussian process model with $s>1$ are unknown, so we only consider the case where $s=1$ because it enables direct comparison of our posterior convergence rates with existing frequentist results. A fully Bayesian asymptotic theory involves the full posterior distribution of $\alpha,\Gamma,\tau^2,\theta$ and the latent GP realizations $\nu_1(\cdot),\ldots,\nu_q(\cdot)$; however, $\alpha$ and $\Gamma$ are not identifiable, so their posteriors do not contract to any point mass as the sample size $n$ increases to infinity. Additionally, given that our indexing space $[0,1]^d$ is fixed and bounded, it is known that the length-scale parameters in $\theta$ are also not identifiable in fixed-domain asymptotics \citep{Zha04}. To alleviate the technical difficulties from such non-identifiable parameters, we make the following assumption:
\begin{enumerate}[label=(A.\arabic*)]
\setcounter{enumi}{1}
\item \label{prior_assumption} $s=1$, $p=q$ and $X(\cdot)\equiv Z(\cdot)$. $\alpha$, $\Gamma$, $\tau^2$, and $\theta$ are all fixed at their true values $\alpha_0=0$, $\Gamma_0$, $\tau^2_0$, and $\theta_0$. $\Gamma_0$ is a full-rank $q\times q$ matrix. The observed response function satisfies $y(u)=Z(u) \Gamma_0 \nu_0(u) + \epsilon(u)$, $\EE[\epsilon(u)]=0$, $\var[\epsilon(u)]=\tau^2_0$, for all $u\in [0,1]^d$, where $\nu_0(\cdot)=\{\nu_{01}(\cdot),\ldots,\nu_{0q}(\cdot)\}^\T$ are the true latent functions.
\end{enumerate}
Assuming $\alpha\equiv \alpha_0=0$ is equivalent to assuming that $\alpha$ is fixed at any true value $\alpha_0$, since we can always redefine $y'(u)=y(u)-X(u)\alpha_0$ and call $y'(u)$ the response function. While $\tau^2$ is assumed fixed in Assumption \ref{prior_assumption}, it is possible to generalize our technical proofs such that $\tau^2$ has a prior in a bounded interval $[\underline \tau^2, \overline \tau^2]$ for some constants $0<\underline\tau^2 <\overline \tau^2<\infty$ \citep{VarZan08a}. For our theory on convergence rates, we only require the error to have mean zero and variance $\tau^2_0$, but do not require the true error distribution to be normal. In other words, our convergence theory also works when our model of normal error is misspecified.

We now define some notations for the subset data. For $i=1,\ldots,m$ and $j=1,\ldots,k$, let $y_j = \{y(u_{j1}), \ldots, y(u_{jm})\}^\T$ and $\epsilon_j = \{\epsilon(u_{j1}), \ldots, \epsilon(u_{jm})\}^\T$. For $a=1,\ldots,q$, let $\tilde \nu_{ja}=\{\nu_a(u_{j1}),\ldots,\nu_a(u_{jm})\}^\T$. We have assumed that $\theta$ is known, so $\theta_{ja} = \theta_a = \theta_{0a}$ and $\rho_{a} = \rho_{ja}$ ($a=1, \ldots, q$). Since $s=1$ and $p=q$, $Z(\cdot)$ is a $q$-dimensional row vector of functions. For $a=1,\ldots,q$, let $\Gamma_{0a}$ be the $a$th column of $\Gamma_0$ and $\tilde Z_a(\cdot) = Z(\cdot)\Gamma_0$. Let $\tilde Z(\cdot) = Z(\cdot) \Gamma_0 = \{\tilde Z_1(\cdot),\ldots,\tilde Z_q(\cdot)\}$, which is still a $q$-dimensional row vector of functions, and $\tilde Z_{ja}=\{\tilde Z_{a}(u_{j1}),\ldots, \tilde Z_{a}(u_{jm})\}$ for $a=1,\ldots,q$. With the stochastic approximation described in Section \ref{sec:sampling-step}, our ``working model" of VCM on the $j$th subset data under Assumption \ref{prior_assumption} can be written as
\begin{align}\label{subset_model}
& y_j =  \sum_{a=1}^q \tilde Z_{ja} \tilde \nu_{ja} + \epsilon_j, \quad \epsilon_j\sim N\left(0,\frac{\tau^2_0}{\delta_n}I_m\right), \quad \tilde \nu_{ja} \sim N\left (0, \lambda_n^{-1} R_{ja} \right),
\end{align}
where $R_{ja}$ is a $m$-by-$m$ matrix with entries $(R_{ja})_{ii'}=\rho_{a}(u_{ji},u_{ji'})$, $i,i'\in \{1,\ldots,m\}$, $a=1,\ldots,q$, and $\rho_{a}$ has its parameter $\theta_a$ fixed at $\theta_{0a}$. $\delta_n=n/m$ is the same as used in the stochastic approximation definition in Section \ref{sec:sampling-step}. The stochastic approximation with $\delta_n=n/m$ is crucial for ensuring that the AMC posterior of varying coefficients $\beta(\cdot)$ converges to the truth at a polynomial rate of $n$ rather than $m$, such that the AMC posterior can be a valid approximation to the full data posterior, which converges at a polynomial rate of $n$. In \eqref{subset_model}, we have also added an additional tuning parameter $\lambda_n>0$ that only depends on $n$ and is only used for theory development. The value of $\lambda_n$ helps offering minimax optimal rate and is specified later in Theorem \ref{thm:main}. In practice, we simply set $\lambda_n=1$ that provides a nearly optimal rate, and the model \eqref{subset_model} becomes the same model as the VCM in \eqref{eq:mdl11}. This can also be seen in Theorem~\ref{thm:main} and in the ensuing discussion.

We focus on the posterior convergence behavior of the varying coefficients $\beta(\cdot)=\alpha_0+ \Gamma_0\nu(\cdot)$ towards the truth $\beta_0(\cdot)=\alpha_0+ \Gamma_0\nu_0(\cdot)$, as well as the mean function $w(\cdot) = Z(\cdot) \beta(\cdot)$ towards the truth $w_0(\cdot)=Z(\cdot) \beta_0(\cdot)$. We introduce some concepts for reproducing kernel Hilbert space (RKHS) that will be used for stating the assumptions on $\nu$'s. Let $L_2(\dd u)$ be the class of all square-integrable functions on $[0,1]^d$ with respect to the Lebesgue measure, with the inner product given by $\langle f,g \rangle_{L_2(\dd u)} = \int_{[0,1]^d} f(u)g(u)\dd u $ and the $L_2(\dd u)$-norm given by $\|f\|_2^2 = \langle f,f \rangle_{L_2(\dd u)}$, for any generic $f,g\in L_2(\dd u)$. For the correlation function $\rho_a(\cdot,\cdot)$ with parameters $\theta_{0a}$, we assume that $\sup_{u,u'\in[0,1]^d} \rho_a(u,u')<\infty$  ($a=1,\ldots,q$), which means that all correlation functions are trace class kernels. For each $a=1,\ldots,q$, by the Mercer's theorem, there exists an orthonormal sequence of eigenfunctions $\{\varphi_{ah}\}_{h=1}^{\infty}$ in $L_2(\dd u)$ with eigenvalues $\mu_{a1}\geq \mu_{a2} \geq \ldots \geq 0$, such that $ \int_{[0,1]^d} \rho_a(\cdot,u')\varphi_{ah}(u') \dd u'=\mu_{ah}\varphi_{ah}(\cdot)$ for all $h=1,2,\ldots$, and $\rho_a(u, u') = \sum_{h=1}^{\infty} \mu_{ah} \varphi_{ah}(u) \varphi_{ah}(u')$ for any $u, u' \in [0,1]^d$. The RKHS $\HH_a$ attached to the correlation function $\rho_a$ is the space of all functions $f \in L_2(\dd u)$ such that the $\HH_a$-norm $\|f\|_{\HH_a}^2 = \sum_{h=1}^{\infty} \langle f, \varphi_{ah} \rangle_{L_2(\dd u)}^2 / \mu_{ah}<\infty$, for $a = 1, \ldots, q$.

For two positive sequences $\{a_n\}_{n\geq 1}$ and $\{b_n\}_{n\geq 1}$, the relation $\limsup_{n\to\infty} a_n/b_n\leq c$ for some constant $c>0$ is denoted by $a_n\lesssim b_n$, or $b_n\gtrsim a_n$. If $a_n\lesssim b_n$ and $b_n\lesssim a_n$, then we say that $a_n \asymp b_n$.

We impose the following assumption on the eigenfunctions and eigenvalues of $\rho_a$ ($a=1,\ldots,q$) as well as $Z(\cdot)$:
\begin{enumerate}[label=(A.\arabic*)]
\setcounter{enumi}{2}
\item \label{eigen_assumption}
(i) There exists a constant $C_{\varphi}>0$ such that $|\varphi_{ah}(u)| \leq C_{\varphi}$ for all $u \in [0,1]^d$, $a=1,\ldots,q$, and $h =1,2,\ldots$.\\
(ii) The largest $h$th eigenvalue of $\rho_1,\ldots,\rho_q$, defined by $\mu_{h*}=\max_{a=1,\ldots,q}\mu_{ah}$, satisfies $\mu_{h*} \lesssim h^{-2\vv/d}$ for every $h=1,2,\ldots$ and some constant $\vv > d/2$.
\item \label{w0_assumption}
For $a=1,\ldots,q$, the true latent functions satisfy $\nu_{0a} \in \HH_a$.
\item \label{z_assumption}
$|Z_a(u)|\leq C_Z$ for some finite constant $C_Z>0$ for all $u \in [0,1]^d$ and $a=1,\ldots,q$. Let $\bar h=\lceil n^{3d/(2\vv-d)} \rceil$ with $\vv$ given in Assumption \ref{eigen_assumption}. For any $u \in [0,1]^d$, define the $q\bar h$-variate function
\begin{align*}
W(u)={}& \big\{Z(u)\Gamma_{01}\varphi_{11}(u),\ldots, Z(u)\Gamma_{01}\varphi_{1\bar h}(u),\ldots, \nonumber\\
& ~~Z(u)\Gamma_{0q}\varphi_{q1}(u),\ldots, Z(u)\Gamma_{0q}\varphi_{q\bar h}(u)\big\}^\T \in \RR^{q \bar h},
\end{align*}
and the matrix $\Omega = \EE_{u}\left\{W(u) W(u)^\T\right\} \in \RR^{q\bar h \times q\bar h}$, where $\EE_{u}$ is the expectation with respect to the Lebesgue measure on $[0,1]^d$. Then, the smallest and the largest eigenvalues of $\Omega$ are bounded away from zero and infinity by constants.
\end{enumerate}
Consider the covariance function
$$C(u,u')=\sum_{h=0}^n a_h \cos(h\pi|u-u'|)=\sum_{h=0}^n a_h\big\{\cos(h\pi u)\cos(h\pi u') + \sin(h\pi u)\sin(h\pi u')\big\},$$
for $u,u'\in [0,1]$ with $a_h\geq 0$ for all $h=1,2,\ldots$ and $\sum_{h=1}^{\infty}a_h<\infty$, then by the Mercer's theorem, $\{\varphi_0(u)=1,\varphi_{2h-1}(u)=\cos(h\pi u),\varphi_{2h}(u)=\sin(h\pi u):h=1,2,\ldots\}$ are the eigenfunctions and $\{\mu_0=a_0,\mu_{2h-1}=\mu_{2h}=a_h: h=1,2,\ldots\}$ are the eigenvalues. In this example, Assumption \ref{eigen_assumption} (i) is satisfied since the trigonometric functions are uniformly bounded by 1. The commonly used Mat\'ern covariance function with smoothness parameter $\kappa$ takes the form $C(u,u')=\tfrac{2^{1-\kappa}}{\Gamma(\kappa)}\left(\sqrt{2\kappa}\theta\|u-u'\|\right)^{\kappa} K_{\kappa}\left(\sqrt{2\kappa}\theta\|u-u'\|\right)$ for $u,u'\in [0,1]^d$, where $\Gamma(\cdot)$ is the gamma function and $K_{\kappa}$ is the modified Bessel function of the second kind. Then Assumption \ref{eigen_assumption} (i) is satisfied for Mat\'ern with $d=1$ and $\kappa=1/2$ as the eigenfunctions are again the trigonometric functions as shown in Section 3.4.1 of \citet{VT01}. It is also known in the literature that the decay rate of eigenvalues for the Mat\'ern covariance function on $[0,1]^d$ with smoothness parameter $\vv$ satisfies Assumption \ref{eigen_assumption} (ii) with $\vv=\kappa+d/2$ \citep{Ritetal95,SchWen02,SanSch16}.

Assumption \ref{w0_assumption} assumes the smoothness of the true underlying functions $\nu_{01}(\cdot),\ldots,\nu_{0q}(\cdot)$. Given the eigenvalue condition in Assumption \ref{eigen_assumption}, for any given set of nonzero constants $c_1,\ldots,c_q$, the RKHS attached to the covariance function $\sum_{a=1}^q c_a\rho_a(\cdot,\cdot)$ is norm equivalent to the $\vv$-smooth Sobolev space on $[0,1]^d$.

Assumption \ref{z_assumption} is a technical condition that makes $\nu(\cdot)$ estimable from subset data. Similar conditions have been used in varying-coefficient modeling literature. For example, in the VCMs based on basis expansions where the dimension increases with $n$, the bounded eigenvalue condition in \ref{z_assumption} is comparable to Condition (C1) in \citet{Weietal11} and Assumption (A5) in \citet{Baietal19}, both of which have imposed bounded eigenvalue conditions on the covariance matrices involving the products of regressors and basis functions.

Given the fixed full-rank $\Gamma_0$ as in Assumption \ref{prior_assumption}, combining subset draws of $\beta(\cdot)$ using the method described in Section \ref{sec:comb-step} is equivalent to combining subset draws of $\nu(\cdot)$. For $a=1,\ldots,q$, let $\overline \nu_a(\cdot)$ be a random function drawn from the AMC posterior of $\nu_a(\cdot)$. We need the following assumption for the combination scheme in AMC.
\begin{enumerate}[label=(A.\arabic*)]
\setcounter{enumi}{5}
\item \label{combine_assumption}
For any $u \in [0,1]^d$, for each $a=1,\ldots,q$, the AMC posterior mean and variance of $\overline \nu_a(u)$ satisfy
\begin{align}\label{comb_mean_var}
& \EE_{\overline \nu_a |y,u}\{\overline \nu_a(u)\} = \frac{1}{k} \sum_{j=1}^k \EE_{ \tilde \nu_{ja} |y_j,u_j }\{\nu_{ja}(u)\} + O_p\left(n^{-1/2}\right), \nonumber \\
& {\var}_{\overline \nu_a |y,u} \{\overline \nu_a(u)\} \leq \frac{\overline c}{k} \sum_{j=1}^k {\var}_{\tilde \nu_{ja}|y_j,u_j}\{\nu_{ja}(u)\},
\end{align}
for some constant $\overline c>0$, where $\EE_{\tilde \nu_{ja}|y_j,u_j}$ and ${\var}_{\tilde \nu_{ja}|y_j,u_j}$ are the $j$th subset posterior mean and variance of $\nu_{a}(\cdot)$ given the $j$th subset data from \eqref{subset_model}, $\EE_{\overline \nu_a|y,u}$ and ${\var}_{\overline \nu_a|y,u}$ denote the AMC posterior mean and variance of $\nu_a(\cdot)$ given the full data, and the term $O_p\left(n^{-1/2}\right)$ holds uniformly over all $u\in [0,1]^d$ and all $a=1,\ldots,q$ in the probability of the observed data.
\end{enumerate}

Assumption \ref{combine_assumption} imposes very weak conditions on the combination method for the aggregated Bayesian posterior. It only requires that the AMC posterior mean is roughly unbiased compared to the average of subset posterior means, and the AMC posterior variance to be upper bounded by the average of subset posterior variances. These relations can be verified for many existing combination methods in the divide-and-conquer Bayes literature for parametric models. In particular, the $O_p(n^{-1/2})$ term in \eqref{comb_mean_var} is exactly zero in parametric models for the PIE algorithm \citep{Lietal17}, the Wasserstein posterior \citep{XuSri20}, the DPMC posterior \citep{XueLia19}, and our proposed AMC method. Furthermore, if the model is parametric, then in all four methods, the subset posterior and the combined posterior variances satisfy ${\var}_{\overline \nu_a |y,u} \{\overline \nu_a(u)\} = n^{-1} \Ical_0^{-1} + o_p(n^{-1})$, where $\Ical_0$ is a fixed information matrix that does not depend on $n$; see the theory in \citet{Lietal17}, \citet{XuSri20} and \citet{XueLia19}. Therefore, for parametric models, the combined posterior using either of these methods can recover the exact asymptotic variance of the true posterior distribution after setting $\overline c = 1$ and changing the inequality to equality in \eqref{comb_mean_var}. Based on these observations about the combined posterior means and variances, it is expected that in the VCM setup, the rates of convergence of the AMC, Wasserstein, and DPMC posterior distributions to the true posterior distribution should be similar to each other.

The following theorem is our main result on the convergence rate in $L_2$ norm of the AMC posterior distribution for the varying coefficients and the mean regression function; see Appendix \ref{app:proof} for the proof. Although the convergence results are presented for the AMC posterior distribution, they are not unique to the AMC posterior. Rather, they hold for any other combined posterior distribution that is built under the three step framework described in Sections~\ref{sec:partitioning-step}, \ref{sec:sampling-step} and \ref{sec:comb-step}, with the combination step following Assumption~\ref{combine_assumption}, including the Wasserstein posterior and the DPMC posterior.

\begin{theorem} \label{thm:main}
Suppose that Assumptions \ref{sampling}--\ref{combine_assumption} hold for the VCM in \eqref{subset_model}. Let $\overline\beta(\cdot)= \alpha_0+\Gamma_0\overline \nu(\cdot)$ and $\overline w(\cdot) = Z(\cdot) \overline \beta(\cdot)$, where $\overline \nu(\cdot)=\{\overline \nu_1(\cdot),\ldots,\overline \nu_q(\cdot)\}^\T$ is a $q$-variate random function drawn from the AMC posterior of $\nu(\cdot)$. Let $\EE_{u^*}$, $\EE_{y,u}$, and $\EE_{\overline \beta \mid y,u}$ be the expectations with respect to the distribution of testing point $u^*$, the true data generating distribution (with randomness from $y_{ji}$ and $u_{ji}$, $j=1,\ldots,k$ and $i=1,\ldots,m$), and the AMC posterior distribution of the varying coefficients $\overline \beta(\cdot)$ given the full data.
\vspace{2mm}

\noindent (i) If $\lambda_n = 1$ and $m \gtrsim n^{(d/\vv)+\eta}$ for some constant $\eta \in \left(0, \tfrac{\vv-d}{\vv}\right]$, then the AMC posterior satisfies
\begin{align*}
& \EE_{u^*}\EE_{y,u} \EE_{\overline \beta \mid y,u} \left\| \overline \beta(u^*) - \beta_0(u^*) \right \|_2^2 \lesssim n^{-(2\vv-d)/(2\vv)}, \\
\text{ and } & \EE_{u^*}\EE_{y,u} \EE_{\overline \beta \mid y,u} \left\{ \overline w(u^*) - w_0(u^*) \right \}^2 \lesssim n^{-(2\vv-d)/(2\vv)},
\end{align*}
\noindent (ii) If $\lambda_n \asymp n^{d/(2\vv+d)}$ and $m \gtrsim n^{2d/(2\vv+d)+\eta}$ for some constant $\eta \in \left(0, \tfrac{2\vv-d}{2\vv+d}\right]$, then the AMC posterior satisfies
\begin{align*}
& \EE_{u^*}\EE_{y,u} \EE_{\overline \beta \mid y,u} \left\|\overline \beta(u^*) - \beta_0(u^*) \right \|_2^2 \lesssim n^{-2\vv/(2\vv+d)}, \\
\text{ and } & \EE_{u^*}\EE_{y,u} \EE_{\overline \beta \mid y,u} \left\{ \overline w(u^*) - w_0(u^*) \right \}^2 \lesssim n^{-2\vv/(2\vv+d)}.
\end{align*}
\end{theorem}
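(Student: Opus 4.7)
The overall strategy is to reduce the AMC posterior bound to an average of subset posterior bounds via Assumption~\ref{combine_assumption}, and then analyze each subset as a standard Gaussian process regression with rescaled noise variance. First, since $\alpha_0=0$ and $\Gamma_0$ is a fixed full-rank $q\times q$ matrix, and since $Z(\cdot)$ is uniformly bounded by Assumption~\ref{z_assumption}, it is enough to derive the rate for $\EE_{u^*}\EE_{y,u}\EE_{\overline \nu\mid y,u}\|\overline \nu(u^*)-\nu_0(u^*)\|_2^2$; the rates for $\overline\beta$ and $\overline w$ then follow from the bounded linear maps $\beta(\cdot)=\Gamma_0\nu(\cdot)$ and $w(\cdot)=Z(\cdot)\Gamma_0\nu(\cdot)$. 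Apply the bias--variance decomposition coordinatewise and invoke Assumption~\ref{combine_assumption} to obtain, for each $a$,
\begin{align*}
\EE_{\overline \nu_a \mid y,u}\{\overline\nu_a(u^*)-\nu_{0a}(u^*)\}^2 \;\lesssim\; \Bigl(\tfrac{1}{k}\sum_{j=1}^{k}\EE_{\tilde \nu_{ja}\mid y_j,u_j}\nu_{ja}(u^*)-\nu_{0a}(u^*)\Bigr)^{\!2} + \tfrac{\overline c}{k}\sum_{j=1}^{k}\var_{\tilde\nu_{ja}\mid y_j,u_j}\nu_{ja}(u^*)+O_p(n^{-1}),
\end{align*}
so it suffices to bound the frequentist risk of the subset posterior mean and the subset posterior variance on a generic subset.

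The core of the argument is the subset analysis via the Mercer expansion $\rho_a(u,u')=\sum_{h\geq 1}\mu_{ah}\varphi_{ah}(u)\varphi_{ah}(u')$. On subset $j$, the working model \eqref{subset_model} is a GP regression with $m$ observations, effective noise variance $\tau_0^2/\delta_n=\tau_0^2 m/n$, and prior scale $\lambda_n^{-1}$. Truncating the expansion at $\bar h=\lceil n^{3d/(2\vv-d)}\rceil$ turns the problem into an approximately finite-dimensional ridge regression in the $q\bar h$-dimensional basis vector $W(u)$ defined in Assumption~\ref{z_assumption}; the bounded spectrum of $\Omega=\EE_u W(u)W(u)^\T$ gives a well-conditioned design so that the empirical Gram matrix $\frac{1}{m}\sum_{i=1}^{m}W(u_{ji})W(u_{ji})^\T$ concentrates around $\Omega$ with a Bernstein-type deviation depending on $m$ and $\bar h$. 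This is exactly where the hypothesis $m\gtrsim n^{d/\vv+\eta}$ (case (i)) or $m\gtrsim n^{2d/(2\vv+d)+\eta}$ (case (ii)) is used: it guarantees that the empirical design matrix inversion error is of strictly smaller order than the target posterior contraction rate. Conditional on the design being well-behaved, the standard Gaussian conjugate formulas give explicit expressions for the posterior mean (a penalized least squares estimator with penalty proportional to $\lambda_n\tau_0^2/\delta_n=\lambda_n\tau_0^2 m/n$) and for the posterior variance (a sum of the form $\sum_h (\lambda_n^{-1}\mu_{ah})/(1+c n\lambda_n^{-1}\mu_{ah})$ up to design-matrix factors). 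Plugging the eigenvalue decay $\mu_{ah}\lesssim h^{-2\vv/d}$ from Assumption~\ref{eigen_assumption} into this sum and optimizing the bias--variance tradeoff yields $n^{-(2\vv-d)/(2\vv)}$ when $\lambda_n=1$ and $n^{-2\vv/(2\vv+d)}$ when $\lambda_n\asymp n^{d/(2\vv+d)}$. The bias is controlled using $\nu_{0a}\in\HH_a$ from Assumption~\ref{w0_assumption}, which makes the projection error and regularization bias computable via the RKHS norm expansion $\sum_h \langle \nu_{0a},\varphi_{ah}\rangle_{L_2(\dd u)}^2/\mu_{ah}<\infty$.

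The final step averages the subset bias and variance bounds over $j=1,\ldots,k$; since the bounds hold uniformly in $j$ under Assumption~\ref{sampling}, the average retains the same order, and the residual $O_p(n^{-1})$ term from Assumption~\ref{combine_assumption} is negligible relative to both target rates. The main obstacle is the technical control in the preceding paragraph: on each subset we observe only $m$ distinct design points but wish the subset posterior to behave like a full-sample GP regression with $n$ pseudo-observations (courtesy of $\delta_n=n/m$). One must therefore quantify the gap between the empirical design measure on $\{u_{j1},\ldots,u_{jm}\}$ and the population Lebesgue measure on $[0,1]^d$ simultaneously on an expanding $q\bar h$-dimensional space, and show that this gap does not spoil either the bias bound (which uses the lower bound on the smallest eigenvalue of $\Omega$) or the variance bound (which uses the upper bound). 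A secondary difficulty is selecting the truncation level $\bar h$ large enough that the tail error $\sum_{h>\bar h}\mu_{ah}$ is dominated by the target rate, yet small enough that the empirical-design concentration is still valid; the choice $\bar h\asymp n^{3d/(2\vv-d)}$ together with Assumption~\ref{eigen_assumption}(ii) balances these requirements, which in turn forces the precise lower bounds imposed on $m$ in the two parts of the theorem.
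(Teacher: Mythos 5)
Your overall architecture matches the paper's: reduce from $\overline\beta$ to $\overline\nu$ via the fixed full-rank $\Gamma_0$, invoke Assumption \ref{combine_assumption} to pass to the subset posteriors, analyze each subset posterior as a kernel ridge regression through the Mercer expansion truncated at $\bar h$, control the empirical design matrix by a matrix Bernstein inequality, and balance the spectral sums using the eigenvalue decay. The deterministic bias bound, the posterior-variance bound, and the roles of $m$, $\bar h$, and $\lambda_n$ are all described essentially as in the paper's Lemmas \ref{lem:bias_bound}, \ref{lem:var2_bound}, and \ref{lem:zhanglem10}.

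There is, however, one genuine gap, and it is quantitatively fatal for general $k$: you reduce the first term to ``the frequentist risk of the subset posterior mean on a generic subset'' and then assert that ``the average retains the same order.'' The frequentist risk of a single subset posterior mean contains the sampling fluctuation ${\var}_{y_j\mid u_j}\left[\EE_{\nu_{ja}\mid y_j,u_j}\{\nu_{ja}(u^*)\}\right]$, which is of order $\Lambda(\cdot,\bar h)/m$ (this is the $\tau_0^2 q\,\Lambda/m$ term in Lemma \ref{lem:var1_bound}); a plain average over $j$ keeps this contribution at order $\Lambda/m$, which matches the target rates only at the right endpoint of the allowed range of $\eta$, i.e.\ when $m\asymp n$ and $k=O(1)$. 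The paper instead splits $\frac1k\sum_{j}\left[\EE_{\nu_{ja}\mid y_j,u_j}\{\nu_{ja}(u^*)\}-\nu_{0a}(u^*)\right]$ into the average of the deterministic regularization biases $\EE_{\nu_{ja},y_j\mid u_j}\{\nu_{ja}(u^*)\}-\nu_{0a}(u^*)$, each of order $\lambda_n/n$ by Lemma \ref{lem:bias_bound}, plus the average of the centered, mean-zero fluctuations of the posterior means over $y_j$; by the across-subset independence in Assumption \ref{sampling} the cross terms in the square of that average vanish (see \eqref{eq:cross2}), so the fluctuation term acquires an extra factor $1/k$ and contributes $\Lambda/(km)\asymp\Lambda/n$, as in \eqref{combine_subset_bias_var2}. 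You need this bias/fluctuation decomposition together with the independence argument to obtain the stated rates for the full range of $\eta$, equivalently for $k$ growing polynomially in $n$; without it your argument proves the theorem only in the degenerate regime of boundedly many subsets.
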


Theorem \ref{thm:main} gives the upper bounds for the posterior convergence rates of both the $q$-dimensional varying coefficients and the mean regression function in the $L_2$ norm. To the best of our knowledge, such convergence result for varying coefficients $\beta(\cdot)$ is new in the literature of Bayesian varying-coefficient models with multivariate latent GPs. Theorem \ref{thm:main} provides theoretical guarantees for the distributed extension of VCM proposed in \citet{Geletal03}, which is developed in Section \ref{sec:algorithm} and scales to massive data settings.

When the tuning parameter $\lambda_n$ is chosen appropriately as in Theorem \ref{thm:main} (ii), the AMC posterior of the varying coefficients $\beta(\cdot)$ converges to the underlying truth in the $L_2$ norm at the rate $n^{-\vv/(2\vv+d)}$. This rate is known as the minimax optimal posterior convergence rate in the $L_2$ norm for the simple Gaussian process regression \citep{VarZan11}. In the extreme case of $m=n$ and $k=1$, Theorem \ref{thm:main} also implies that the rate $n^{-\vv/(2\vv+d)}$ is the convergence rate of the full data posterior distribution of $\beta(\cdot)$; therefore, we have shown that the AMC posterior from our distributed Bayesian method can quantify the posterior uncertainty in the same order as the full data posterior. If tuning from $\lambda_n$ is not available (that is, $\lambda_n = 1$), then Theorem \ref{thm:main} (i) shows that the AMC posterior converges in the $L_2$ norm at least at the rate $n^{-(2\vv-d)/(4\vv)}$, which is slightly slower than the optimal rate in part (ii). Furthermore, Theorem \ref{thm:main} also gives sufficient conditions for the subset size $m$ in the two scenarios. For part (i), the order of $m$ is $m \gtrsim n^{(d/\vv)+\eta}$, which is meaningful when $\vv>d$ since $m\leq n$. Since Assumption \ref{sampling} says that $mk$ and $n$ have the same order, this implies that the number of subsets $k$ can increase no faster than $n^{(\vv-d)/\vv}$. For part (ii), the order of $m$ is $m \gtrsim n^{2d/(2\vv+d)+\eta}$ and this works for all $\vv>d/2$ as in Assumption \ref{eigen_assumption}. As a result, the number of subsets $k$ can increase no faster than $n^{(2\vv-d)/(2\vv+d)}$.

Our convergence rates in Theorem \ref{thm:main} are also comparable to similar theoretical results on distributed Bayesian inference in non-parametric regression models using an univariate GP (without a VCM formulation). This includes the recent works of \citet{guhaniyogi2017divide} and \citet{SzaVan19}. While \citet{Baietal19} have also shown the posterior contraction rates for the varying coefficients, their Bayesian model is based on basis series expansion instead of multivariate latent GPs as in \citet{Geletal03} and our current paper. Furthermore, their main focus is on high dimensional variable selection, which is different from the big $n$ problem considered here.

\section{Experiments}
\label{sec:numerical}

This section evaluates the performance of methods based on the divide-and-conquer technique for inference and predictions in Bayesian VCMs using a simulation study and a real data analysis. The simulation settings, including the details of data generation, competing methods and the metrics for comparison are described in the first subsection. The second subsection presents simulation results for different methods for a comprehensive comparison. The third subsection presents an application of the Bayesian spatiotemporal VCM to a large dataset of sea surface temperature and salinity in the North Atlantic Ocean.

\subsection{Setup}
\label{sec:setup}

\noindent\underline{\textbf{Data generation:}}
\vspace{2mm}

To assess performance of distributed methods, we design two simulation studies, referred to as \emph{Simulation 1} and \emph{Simulation 2}, with $n=3000$ and $n=9000$ samples, respectively. The sample size in \emph{Simulation 1} is moderately large to ensure that posterior computation of VCMs using the full data, although exorbitantly slow, are tractable and their results serve as the benchmark. In both simulations, the cardinality of the set of indexes $\Ucal^*$, where the function estimation and prediction are evaluated, is set at $300$. Our simulation studies consider $d=2$, with sample indices  $u_1, \ldots, u_n$, and the indices in the set $\Ucal^*$, $u^*_1, \ldots, u^*_{300}$ are simulated independently from the uniform distribution on $[0, 1]^2$. Both simulations assume
$p=3$ predictors, with all $p$ predictors have varying coefficients, i.e., $q=p=3$. We simulate a bivariate response function at all indices (i.e., $s_i=2$) using the varying coefficient model \eqref{eq:mdl11} as
\begin{align}
  \label{eq:s1}
  y(u) = X(u) \beta_0(u) + \epsilon(u), \quad \beta_0(\cdot) = \alpha_0 + \Gamma_0 \nu(\cdot),  \quad u \in \{u_1, \ldots, u_n, u^*_1, \ldots, u^*_{300}\},
\end{align}
where $X(u)$s are $2 \times 3$ predictor matrices at each index, with each of their entries is independently simulated from $N(0, 1)$. To construct the varying coefficients, entries of the $3\times 3$ matrix $\Gamma_0$ are independently simulated from uniform($0, 3$) and $\alpha_0^\T$ is fixed at $(-2, 2, -2)$. The components $\nu_1(\cdot), \nu_2(\cdot), \nu_3(\cdot)$ of the LMC coefficient vector $\nu(\cdot) = (\nu_1(\cdot), \nu_2(\cdot), \nu_3(\cdot))^\T$, are drawn from independent  GPs with 0 mean and correlation functions $\rho_a(u, u') = e^{-\phi_a \| u - u' \|_2}$, where $\phi_a = a$ for $a = 1, 2, 3$. $\epsilon(u)$s are idiosyncratic errors following i.i.d. $N(0, \tau^2)$. Both simulations set the error variance $\tau^2$ at $0.1$. Each simulation is replicated ten times.
\vspace{3mm}

\noindent\underline{\textbf{Competing methods:}}
\vspace{2mm}

We compare the performance of AMC algorithm with two sets of competitors. The first set of competitors include distributed Bayesian methods which follow the same three step algorithm as AMC, with the main difference appearing in the third step involving the subset posterior combination. As part of our comparison endeavor with such distributed Bayesian methods, we include DPMC, PIE, WASP and CMC algorithms as competitors. For each of these competitors, we first create $k$ subsets of sizes $m=500$ and $m=1000$, using subsampling without replacement in both simulations and vary $k$ as 10, 20 and 30, 60 in \emph{Simulations} 1 and 2, respectively. Second, we use the DA-type algorithm developed in Section \ref{sec:sampling-step} in parallel to obtain posterior samples of $\tau^2$, $\beta(u^*)$, and $y(u^*)$ for $u^* \in \Ucal^*$ from all the subsets. The sampling algorithm uses a sparse GP based on the FITC approximation with $r=400$ inducing points  \citep{QuiRas05,Alvetal12}. The imputation of $\tilde \nu_{j}$ in part (a) of our DA-type algorithm is done using the Lanczos algorithm of \citet{ChoSaa14} for computational tractability. The sampling algorithm in each subset runs for 10,000 iterations, and the Markov chain is thinned by collecting every fifth posterior sample after discarding the first 5,000 posterior samples as burn-in. Finally, we combine subset posterior samples for $\tau^2$, $\beta(u^*)$, and $y(u^*)$ for $u^* \in \Ucal^*$ using CMC, DPMC, WASP, PIE and the AMC algorithms (described in Section \ref{sec:comb-step}). The combination steps of AMC, DPMC, WASP, and PIE satisfy the Assumption \ref{combine_assumption} in Section~\ref{sec:theory}, so we expect similar empirical performance for these four methods. In contrast, no such theoretical guarantee exists for the performance of the CMC algorithm.

It is also instructive to compare performance of AMC posterior with the second set of competitors which include the Bayesian VCMs on the full data. To this end, we compare with the true posterior distribution computed using full data, which sets the performance benchmark for the distributed methods. Since there is no open source implementation available for the Bayesian VCMs with bivariate response vector, we implement it by ourselves following the DA-type algorithm discussed in Section~\ref{da-type-algo}. Additionally, \citet{FinBan20} offer \texttt{spSVC} in the \texttt{spBayes} R package for fitting spatial VCMs, which are special cases of \eqref{eq:mdl1} with $d=2$, and fits to our simulation settings. Unfortunately, the current software support is limited to univariate responses in VCMs; hence we implement \texttt{spSVC} function marginally on each component of the bivariate response vector, and refer to this competitor as \texttt{spSVC}. Ignoring correlation between the two components in the bivariate response will presumably lead to a loss in inferential accuracy of \texttt{spSVC} compared to the other competitors. Both the DA-type algorithm of Section \ref{da-type-algo} and implementation of \texttt{spSVC} are prohibitively slow when $n=9000$, so we present their results only when $n=3000$.
\vspace{3mm}

\noindent\underline{\textbf{Comparison metrics:}}
\vspace{2mm}

The point estimation of the varying coefficients and predictions at $\Ucal^*$ from all methods are compared using mean square error (MSE), and mean square prediction error (MSPE), respectively. Further, the coverage and length of 95\% credible and predictive intervals (CIs, PIs) from the competing methods help assessing uncertainty in function estimation and in prediction, respectively. Let $\beta_0(u^*)=\{\beta_{01}(u^*), \ldots, \beta_{0p}(u^*)\}$, $y(u^*) = \{y_1(u^*), \ldots, y_s(u^*)\}$ be the true values of $\beta(\cdot)$, $y(\cdot)$ at $u^*$, where $p$ and $s$ are their dimensions and $u^* \in \Ucal^* \subset [0,1]^d$. Let $\hat \beta(u^*)$, $\hat y(u^*)$ be the posterior means of
$\beta(u^*)$, $y(u^*)$, respectively. Then, the MSE in estimating $\beta(\cdot)$ and MSPE in predicting $y(\cdot)$ are defined as
\begin{align}
  \label{eq:s2}
  \text{MSE}&= \frac{1}{|\Ucal^*|} \sum_{i=1}^{|\Ucal^*|} \sum_{j=1}^p \{\hat \beta_j(u^*_i) - \beta_{0j}(u^*_i)\}^2, \nonumber \\
  \text{MSPE} &= \frac{1}{|\Ucal^*|} \sum_{i=1}^{|\Ucal^*|} \sum_{j=1}^s \{\hat y_j(u^*_i) - y_j(u^*_i)\}^2.
\end{align}
We evaluate point-wise coverage and length of the 95\%  CIs and PIs obtained from the posterior distributions of $\beta(\cdot)$ and $y(\cdot)$, respectively, for every $u \in \Ucal^*$, for all competitors. As discussed before, the simulation settings assume $s=2$, $p=2$, and $|\Ucal^*|=300$. Let ESS$_{\text{DA}}$ be the effective sample size of any DA-type algorithm that runs for $T_{\text{DA}}$ hours, where DA can signify any of the competitors discussed above. Then, following \citet{Johetal19}, we define the computational efficiency of a DA-type algorithm, including spSVC, AMC, CMC, DPMC, WASP, or the full data posterior (referred to as the true posterior) as
\begin{align}
  \label{eq:s3}
  \text{Computational Efficiency}_{\text{DA}} = \log_2 \text{ESS}_{\text{DA}} / \text{T}_{\text{DA}},
\end{align}
where ESS$_{\text{DA}}$ is computed using the coda R package \citep{Plu03}. We do not compute \eqref{eq:s3} for the PIE combination algorithm since it is not designed to provide MCMC samples for the parameters.

\subsection{Simulated Data Analysis}
\label{sec:simul-data-analys}

Table \ref{tab:coef-cov} and \ref{tab:pred-cov} show the performance of all methods in terms of estimating the true varying coefficient $\beta_0(\cdot)$ and prediction of $y(\cdot)$, respectively. As expected, the empirical performance of combined posterior obtained using AMC, WASP, PIE, and DPMC are very similar in terms of all the comparison metrics. Specifically, for both \emph{Simulation} 1 and 2, they yield similar MSE, MSPE, close to nominal coverage and similar length of 95\% CIs and PIs, which validates Theorem~\ref{thm:main} empirically.  The true posterior being the gold standard, achieves little lower MSE and a bit narrower 95\% CIs and PIs than its distributed competitors; however, the computational efficiency of the true posterior is much smaller than that of AMC, WASP, PIE, and DPMC because it requires much longer to finish an iteration compared to its divide-and-conquer competitors. Increasing the size of each subset $m$ leads to better inference, where as varying the number of subsets $k$ with a fixed value of $m$ does not seem to have much impact on the inference. Among the distributed methods, the CI and PI lengths are slightly larger for PIE, perhaps due to the marginal combination of subset posterior distributions. Although WASP shows marginally narrower 95\% CIs and PIs compared to AMC and DPMC (all maintaining close to the nominal coverage), the subset posterior combination step of AMC and DPMC are much more computationally convenient. On the other hand, the CI and PI lengths of CMC are very small compared to that of the WASP, which results in poor coverage for every $m$ and $n$ and deteriorates as $k$ increases. Except CMC, all other methods show similar performance for inference on the error variance $\tau^2$ (Table \ref{tab:tau-ci}). We also find the performance of \texttt{spSVC} to be excellent in predicting $y(u^*)$s, but becomes extremely poor in inference on $\beta(u^*)$s. The poor performance of \texttt{spSVC} in inference on $\beta(u^*)$s is mainly because the marginal model ignores the dependence between $y_1(\cdot)$ and $y_2(\cdot)$. On the other hand, \texttt{spSVC} shows excellent performance in predicting $y(u^*)$s because the marginal \texttt{spSVC} model still uses three GPs for predicting $y_1(\cdot)$ and $y_2(\cdot)$.

AMC, DPMC, PIE, and WASP satisfy Assumption \ref{combine_assumption} on the combination of subset posterior distributions, whereas CMC does not; therefore, we conclude that methods that satisfy our theoretical assumptions show superior empirical performance. Furthermore, the AMC and DPMC combination algorithms are the simplest among the distributed competitors that offer combination of subset posteriors of all parameters jointly. Hence, they are simple and computationally convenient alternatives to the full data posterior distribution in Bayesian VCMs for massive data.

  \begin{table}[ht]
  \caption{Summary of the results for inference on $\beta(\cdot)$. The CI coverage and their lengths are averaged across 10 simulation replications, dimensions, and $u \in \Ucal^*$. A `-' for the true posterior and spSVC corresponding to $n=9000$ indicates that the results are missing due to intractable posterior computations. On the other hand,  `-' in reporting the computational efficiency for PIE is due to the lack of definition.}
\label{tab:coef-cov}
  \centering
{\tiny    \begin{tabular}{|r|c|c|c|c|c|c|c|c|}
  \hline
  & \multicolumn{8}{c|}{$n=3000$}  \\
  \hline
  & \multicolumn{4}{c|}{Coverage at 95\% Nominal Level} & \multicolumn{4}{c|}{95\% CI Length}  \\
  \hline
  & \multicolumn{2}{c|}{$m=500$} & \multicolumn{2}{|c|}{$m=1000$}  & \multicolumn{2}{|c|}{$m=500$} & \multicolumn{2}{|c|}{$m=1000$}  \\
  \hline
  & $k=10$ & $k=20$ & $k=10$ & $k=20$ & $k=10$ & $k=20$ & $k=10$ & $k=20$  \\
  \hline
      True Posterior &    \multicolumn{4}{c|}{0.96} & \multicolumn{4}{c|}{2.53}  \\
      Marginal \texttt{spSVC} &    \multicolumn{4}{c|}{0.27} & \multicolumn{4}{c|}{3.22}  \\
  \hline
            AMC & 0.97 & 0.97 & 0.96 & 0.97 & 3.18 & 3.19 & 2.88 & 2.89 \\
            PIE & 0.97 & 0.97 & 0.96 & 0.97 & 3.24 & 3.24 & 3.02 & 2.91 \\
            CMC & 0.56 & 0.42 & 0.57 & 0.39 & 1.10 & 0.79 & 1.09 & 0.67 \\
            WASP & 0.96 & 0.96 & 0.96 & 0.96 & 3.06 & 3.06 & 2.78 & 2.78 \\
            DPMC & 0.97 & 0.97 & 0.96 & 0.97 & 3.19 & 3.20 & 2.89 & 2.89 \\
  \hline
  & \multicolumn{4}{c|}{MSE} & \multicolumn{4}{c|}{Computational Efficiency}  \\
  \hline
      True Posterior &    \multicolumn{4}{c|}{0.40} & \multicolumn{4}{c|}{2.10}  \\
      Marginal \texttt{spSVC} &    \multicolumn{4}{c|}{7.24} & \multicolumn{4}{c|}{6.68}  \\
  \hline
            AMC & 0.56 & 0.56 & 0.48 & 0.48 & 9.02 & 8.41 & 8.79 & 8.37 \\
            PIE & 0.57 & 0.57 & 0.52 & 0.48 & - & - & - & - \\
            CMC & 0.56 & 0.56 & 0.51 & 0.48 & 4.70 & 4.56 & 4.27 & 2.60 \\
            WASP & 0.57 & 0.57 & 0.49 & 0.48 & 9.02 & 8.41 & 8.79 & 8.37\\
            DPMC & 0.57 & 0.57 & 0.49 & 0.48 & 9.02 & 8.41 & 8.79 & 8.36\\
  \hline
  & \multicolumn{8}{c|}{$n=9000$}  \\
  \hline
  & \multicolumn{4}{c|}{Coverage at 95\% Nominal Level} & \multicolumn{4}{c|}{95\% CI Length}  \\
  \hline
  & \multicolumn{2}{c|}{$m=500$} & \multicolumn{2}{|c|}{$m=1000$}  & \multicolumn{2}{|c|}{$m=500$} & \multicolumn{2}{|c|}{$m=1000$}  \\
  \hline
  & $k=30$ & $k=60$ & $k=30$ & $k=60$ & $k=30$ & $k=60$ & $k=30$ & $k=60$  \\
  \hline
      True Posterior &    \multicolumn{4}{c|}{-} & \multicolumn{4}{c|}{-}  \\
      Marginal \texttt{spSVC} &    \multicolumn{4}{c|}{-} & \multicolumn{4}{c|}{-}  \\
  \hline
            AMC & 0.98 & 0.98 & 0.98 & 0.98 & 3.04 & 3.05 & 2.75 & 2.75\\
            PIE & 0.98 & 0.98 & 0.98 & 0.98 & 3.07 & 3.08 & 2.77 & 2.77 \\
            CMC & 0.34 & 0.25 & 0.34 & 0.24 & 0.58 & 0.41 & 0.52 & 0.37 \\
            WASP & 0.97 & 0.97 & 0.97 & 0.97 & 2.91 & 2.92 & 2.64 & 2.63 \\
            DPMC & 0.98 & 0.98 & 0.98 & 0.98 & 3.05 & 3.06 & 2.75 & 2.75\\
  \hline
  & \multicolumn{4}{c|}{MSE} & \multicolumn{4}{c|}{Computational Efficiency}  \\
  \hline
      True Posterior &    \multicolumn{4}{c|}{-} & \multicolumn{4}{c|}{-}  \\
  \hline
            AMC & 0.45 & 0.45 & 0.38 & 0.38 & 10.40 & 9.67 & 9.94 & 9.70  \\
            PIE & 0.46 & 0.46 & 0.39 & 0.39 & - & - & - & - \\
            CMC & 0.45 & 0.45 & 0.39 & 0.39 & 5.76 & 4.03 & 5.47 & 4.01\\
            WASP & 0.45 & 0.45 & 0.38 & 0.38 & 10.40 & 9.67 & 9.94 & 9.70\\
            DPMC & 0.45 & 0.45 & 0.38 & 0.38 & 10.40 & 9.67 & 9.94 & 9.70\\
   \hline
\end{tabular}}%
\end{table}

\begin{table}[ht]
  \caption{Summary of the results for $y(\cdot)$ prediction. The PI coverage and their lengths are averaged across 10 simulation replications, dimensions $a=1,2$, and $u \in \Ucal^*$. A `-' for the true posterior and spSVC corresponding to $n=9000$ indicates that the results are missing due to intractable posterior computations. On the other hand,  `-' in reporting the computational efficiency for PIE is due to the lack of definition.}
\label{tab:pred-cov}
\centering
{\tiny
\begin{tabular}{|r|c|c|c|c|c|c|c|c|}
  \hline
  & \multicolumn{8}{c|}{$n=3000$}  \\
  \hline
  & \multicolumn{4}{c|}{Coverage at 95\% Nominal Level} & \multicolumn{4}{c|}{95\% PI Length}  \\
  \hline
  & \multicolumn{2}{c|}{$m=500$} & \multicolumn{2}{|c|}{$m=1000$}  & \multicolumn{2}{|c|}{$m=500$} & \multicolumn{2}{|c|}{$m=1000$}  \\
  \hline
  & $k=10$ & $k=20$ & $k=10$ & $k=20$ & $k=10$ & $k=20$ & $k=10$ & $k=20$  \\
  \hline
  True Posterior &    \multicolumn{4}{c|}{0.96} & \multicolumn{4}{c|}{4.11}  \\
  Marginal \texttt{spSVC} &    \multicolumn{4}{c|}{1.00} & \multicolumn{4}{c|}{1.66}  \\
  \hline
  AMC & 0.97 & 0.97 & 0.96 & 0.96 & 5.05 & 5.05 & 4.61 & 4.62 \\
  PIE & 0.97 & 0.97 & 0.97 & 0.96 & 5.12 & 5.11 & 4.81 & 4.65 \\
  CMC & 0.55 & 0.42 & 0.57 & 0.39 & 1.75 & 1.25 & 1.73 & 1.08 \\
  WASP & 0.96 & 0.96 & 0.95 & 0.95 & 4.87 & 4.86 & 4.46 & 4.46 \\
  DPMC & 0.97 & 0.97 & 0.96 & 0.96 & 5.06 & 5.06 & 4.61 & 4.63 \\
  \hline
  & \multicolumn{4}{c|}{MSPE} & \multicolumn{4}{c|}{Computational Efficiency}  \\
  \hline
  True Posterior &    \multicolumn{4}{c|}{1.30} & \multicolumn{4}{c|}{2.10}  \\
  Marginal \texttt{spSVC} &    \multicolumn{4}{c|}{0.03} & \multicolumn{4}{c|}{6.67}  \\
  \hline
  AMC & 1.79 & 1.76 & 1.53 & 1.54 & 9.02 & 8.41 & 8.79 & 8.37  \\
  PIE & 1.82 & 1.79 & 1.67 & 1.55 & - & - & - & - \\
  CMC & 1.78 & 1.75 & 1.65 & 1.53 & 5.72 & 4.10 & 5.55 & 4.11 \\
  WASP & 1.79 & 1.76 & 1.53 & 1.54 & 9.02 & 8.41 & 8.79 & 8.37\\
  DPMC & 1.79 & 1.76 & 1.53 & 1.54 & 9.02 & 8.41 & 8.79 & 8.37\\
  \hline
  & \multicolumn{8}{c|}{$n=9000$}  \\
  \hline
  & \multicolumn{4}{c|}{Coverage at 95\% Nominal Level} & \multicolumn{4}{c|}{95\% PI Length}  \\
  \hline
  & \multicolumn{2}{c|}{$m=500$} & \multicolumn{2}{|c|}{$m=1000$}  & \multicolumn{2}{|c|}{$m=500$} & \multicolumn{2}{|c|}{$m=1000$}  \\
  \hline
  & $k=30$ & $k=60$ & $k=30$ & $k=60$ & $k=30$ & $k=60$ & $k=30$ & $k=60$  \\
  \hline
  True Posterior &    \multicolumn{4}{c|}{-} & \multicolumn{4}{c|}{-}  \\
  Marginal \texttt{spSVC} &    \multicolumn{4}{c|}{-} & \multicolumn{4}{c|}{-}  \\
  \hline
  AMC & 0.98 & 0.98 & 0.97 & 0.97 & 4.80 & 4.82 & 4.39 & 4.38 \\
  PIE & 0.97 & 0.97 & 0.97 & 0.97 & 4.85 & 4.87 & 4.42 & 4.42 \\
  CMC & 0.33 & 0.24 & 0.32 & 0.23 & 0.92 & 0.65 & 0.83 & 0.58 \\
  WASP & 0.96 & 0.96 & 0.96 & 0.96 & 4.62 & 4.63 & 4.23 & 4.22 \\
  DPMC & 0.98 & 0.98 & 0.97 & 0.97 & 4.82 & 4.84 & 4.40 & 4.39 \\
  \hline
  & \multicolumn{4}{c|}{MSPE} & \multicolumn{4}{c|}{Computational Efficiency}  \\
  \hline
  True Posterior &    \multicolumn{4}{c|}{-} & \multicolumn{4}{c|}{-}  \\
  Marginal \texttt{spSVC} &    \multicolumn{4}{c|}{-} & \multicolumn{4}{c|}{-}  \\
  \hline
  AMC & 1.41 & 1.40 & 1.22 & 1.21 &10.40 & 9.67 & 9.94 & 9.70  \\
  PIE & 1.45 & 1.45 & 1.25 & 1.24 & - & - & - & - \\
  CMC & 1.40 & 1.40 & 1.23 & 1.23 & 5.76 & 4.03 & 5.47 & 4.02 \\
  WASP & 1.41 & 1.40 & 1.22 & 1.21 & 10.40 & 9.67 & 9.94 & 9.70 \\
  DPMC & 1.41 & 1.40 & 1.22 & 1.21 & 10.40 & 9.67 & 9.94 & 9.70\\
   \hline
\end{tabular}
}%
\end{table}

\begin{table}[ht]
\caption{The 95\% credible intervals for inference on $\tau^2$. The lower and upper ends of CIs are averaged across 10 simulation replications. }
  \label{tab:tau-ci}
\centering
{\tiny
\begin{tabular}{|r|c|c|c|c|}
  \hline
  & \multicolumn{4}{c|}{$n=3000$}  \\
  \hline
  & \multicolumn{2}{c|}{$m=500$} & \multicolumn{2}{|c|}{$m=1000$} \\
  \hline
  & $k=10$ & $k=20$ & $k=10$ & $k=20$ \\
  \hline
  True Posterior & \multicolumn{4}{c|}{(0.0891, 0.1067)}  \\
  Marginal \texttt{spSVC} &  \multicolumn{4}{c|}{(0.0975, 0.1141)}  \\
  \hline
  AMC & (0.0707, 0.103) & (0.0695, 0.1016) & (0.0778, 0.1323) & (0.0798, 0.1041) \\
  PIE & (0.0717, 0.1037) & (0.0703, 0.1021) & (0.0821, 0.1316) & (0.0801, 0.1042) \\
  CMC & (0.0799, 0.0909) & (0.0802, 0.0881) & (0.0878, 0.0982) & (0.0884, 0.094) \\
  WASP & (0.0708, 0.1029) & (0.0696, 0.1016) & (0.0799, 0.1268) & (0.0799, 0.1041) \\
  DPMC & (0.0707, 0.1031) & (0.0695, 0.1017) & (0.0778, 0.1347) & (0.0798, 0.1041) \\
  \hline
  & \multicolumn{4}{c|}{$n=9000$}  \\
  \hline
  & \multicolumn{2}{c|}{$m=500$} & \multicolumn{2}{|c|}{$m=1000$} \\
  \hline
  & $k=30$ & $k=60$ & $k=30$ & $k=60$ \\
  \hline
  True Posterior & \multicolumn{4}{c|}{-}  \\
  Marginal \texttt{spSVC} &  \multicolumn{4}{c|}{-}  \\
  \hline
  AMC & (0.0709, 0.103)  & (0.0704, 0.1022) & (0.0799, 0.1038) & (0.0795, 0.1031) \\
  PIE & (0.0718, 0.1037) & (0.0713, 0.1029) & (0.0803, 0.1041) & (0.0799, 0.1034) \\
  CMC & (0.0819, 0.0879) & (0.0821, 0.0864) & (0.0887, 0.0932) & (0.0888, 0.0919) \\
  WASP & (0.071, 0.1029) & (0.0704, 0.1021) & (0.08, 0.1038) & (0.0795, 0.1031) \\
  DPMC & (0.0709, 0.103) & (0.0704, 0.1022) & (0.0799, 0.1037) & (0.0795, 0.1031) \\
   \hline
\end{tabular}}%
\end{table}

\subsection{Real Data Analysis}
\label{sec:real-data-analysis}

We illustrate the performance of the combined posterior distributions obtained using  AMC, DPMC, PIE, WASP or CMC combination technique for the space-time varying coefficient modeling, where the indices are $u=(h,t)$ with $h$ and $t$ denoting the spatial locations and time points of the response and covariates. VCMs are widely used in a variety of spatial applications, mostly without the temporal dimension; see, for example, \citet{wheeler2007assessment,finley2014dynamic,banerjee2006coregionalized}. On the contrary, their applications in large data settings are limited, perhaps due to the demanding computations. This section specifically considers the problem of capturing the spatio-temporal association (with uncertainties) between the sea surface temperature (SST) and sea surface salinity (SSS) in the Atlantic Ocean between $0^{\circ}-70^{\circ}$ north latitudes and $0^{\circ}-80^{\circ}$ west longitudes using the spatiotemporal VCM. This implies that $s_i=1$, $p = 2$, $d = 3$, and the space-time tuples lie in a fixed and bounded domain for the spatiotemporal model based on  \eqref{eq:mdl1}. The data on SST and SSS are obtained from the Hadley center observations under the met office in UK (www.metoffice.gov.uk/hadobs, more description available in \cite{kennedy2011reassessing}). We specifically consider $72000$ space-time observations on SST and SSS over the $12$ months in $2018$ and randomly set aside $|\Ucal^*|=600$ space-time tuples for prediction, which form the set $\Ucal^*$ of size 600. Full scale Bayesian inference of spatio-temporal VCMs
with data at this scale is extremely challenging and has been sparsely dealt with in the literature.

The global association between SST and SSS is well established in the fields of Oceanography and Geophysics \citep{millero1998distribution,key2004global,lee2006global}. In fact, salinity influences the depth to which water masses sink and how far they extend through the ocean. The location and depth of these water masses controls how heat are transported between the tropics and high latitudes. Both SST and SSS are also key in understanding how oceans interact with the atmosphere. Monsoons are driven by exchanges at the air-ocean boundary, affecting almost half of the world's human population each year. Likewise, El Ni\~no has profound effects on humankind and is, to an unknown extent, governed by ocean salinity and temperature. Earlier work with nonlinear regression models to ascertain relationships between SST and SSS \citep{xiong2013relationship, becker1996sea} reveal significant positive association between these two climate indicators. Although
some of these prior studies reveal such associations to be spatially varying \citep{weldeab2006deglacial}, there is still a dearth of model based analysis of spatio-temporally varying associations between SST and SSS.

We compare performance of the combined posterior obtained using AMC, DPMC, WASP, PIE combination schemes (all following theoretically guaranteed optimal performance) along with the other distributed competitor CMC, popularly used in the machine learning literature for distributed inference with massive data. We have also attempted to fit spatially varying coefficient model on the full data using the \texttt{spSVC} function in the \texttt{spBayes} package in \texttt{R}; however, full data posterior computations using \texttt{spSVC} fails due to the large sample size. Gneiting's correlation function \eqref{eq:corr-fun-1} is employed in the spatiotemporal VCM due to its flexibility in modeling space-time correlations \citep{gneiting2002nonseparable}. The values of $k$ and $m$ are set to be 400 and 2500, respectively, and the results for the distributed methods follow from three step strategy described in Section~\ref{sec:algorithm}. Because the true varying coefficients are unknown, we only make assessment of point prediction and predictive uncertainties for all the methods using MSPE and coverage of 95\% PIs for the space-time tuples in $\Ucal^*$, respectively. Computational efficiency of all methods are also reported.

Figure~\ref{Fig_STVC} presents the posterior mean of the spatially varying coefficient corresponding to SSS in January, May and September for AMC, PIE, WASP and DPMC. From the equator to the pole, the annual excess precipitation over evaporation increases, and thus salinity decreases along with SST, with latitude. However, in lower latitude, due to the pronounced salt accumulation as a result of excess heating and oceanic currents, SSS surges, which results in lower $\beta_1(s)$ values. This trend becomes more prominent during the months of summer or fall (columns 2 and 3). In general, SSS decreases in comparison with SST during winter, except for the Brazilian coast, which shows lower coefficient values even in winter due to the strong North Brazil Current \citep{weldeab2006deglacial}. The increase in latitude shows a considerable drop of SSS compared to SST leading to higher $\beta_1(s)$ values. The estimates appear to be consistent over all the four combination approaches (AMC, WASP, PIE and DPMC) following the three step algorithm.

Turning our attention to the predictive inference, Table~\ref{STVC_pred} demonstrates comparable point estimation and predictive uncertainties from AMC, PIE, WASP and DPMC combination schemes. In contrast, the machine learning competitor CMC shows high MSPE and considerably wider credible intervals at all space-time tuples. The computationally efficiency metric for other methods also supercede CMC by a large margin. As a result, the space-time varying coefficient figures corresponding to CMC also appear to be different from the other competitors, and hence it has not been included under Figure~\ref{Fig_STVC}. As a whole, the data analysis reinforces
our findings on the three step divide and conquer approaches with the theoretically guaranteed combination schemes AMC, DPMC, WASP and PIE as simple, computationally efficient, flexible, and fully Bayesian inferential tools for inference in large spatiotemporal data with the VCM model. We emphasize that the scalability of all these approaches depend on the Bayesian VCM model fitted in each subset. Using more computationally efficient variants of GPs for $\nu(\cdot)$ in each subset, a much higher degree of scalability is achievable.

\begin{table}[ht]
  \caption{Summary of the results for prediction of  sea surface temperature at 600 space-time tuples using the spatio-temporal VCM fitted at each subset. The PI coverage and their lengths are based on point-wise 95\% predictive intervals and are averaged across the 600 space-time tuples. Computational efficiency for all methods are also reported. Omission of computational efficiency for PIE is due to the lack of definition.}\label{STVC_pred}
  \centering
  \begin{tabular}{|r|c|c|c|c|}
    \hline
    & Coverage   & MSPE & 95\% PI Length & Computational Efficiency \\
    \hline
    AMC & 0.99 & 2.92 & 6.60 & 9.99 \\
    PIE & 0.99 & 2.93 & 5.61 & - \\
    CMC & 0.90 & 74.95 & 24.71 & 1.06 \\
    WASP & 0.98 & 2.92 & 5.26 & 9.99 \\
    DPMC & 0.99 & 2.92 & 6.53 & 10.09 \\
    \hline
  \end{tabular}
\end{table}
\begin{figure}[ht!]
  \begin{center}
    \subfloat{\includegraphics[width=4.0 cm]{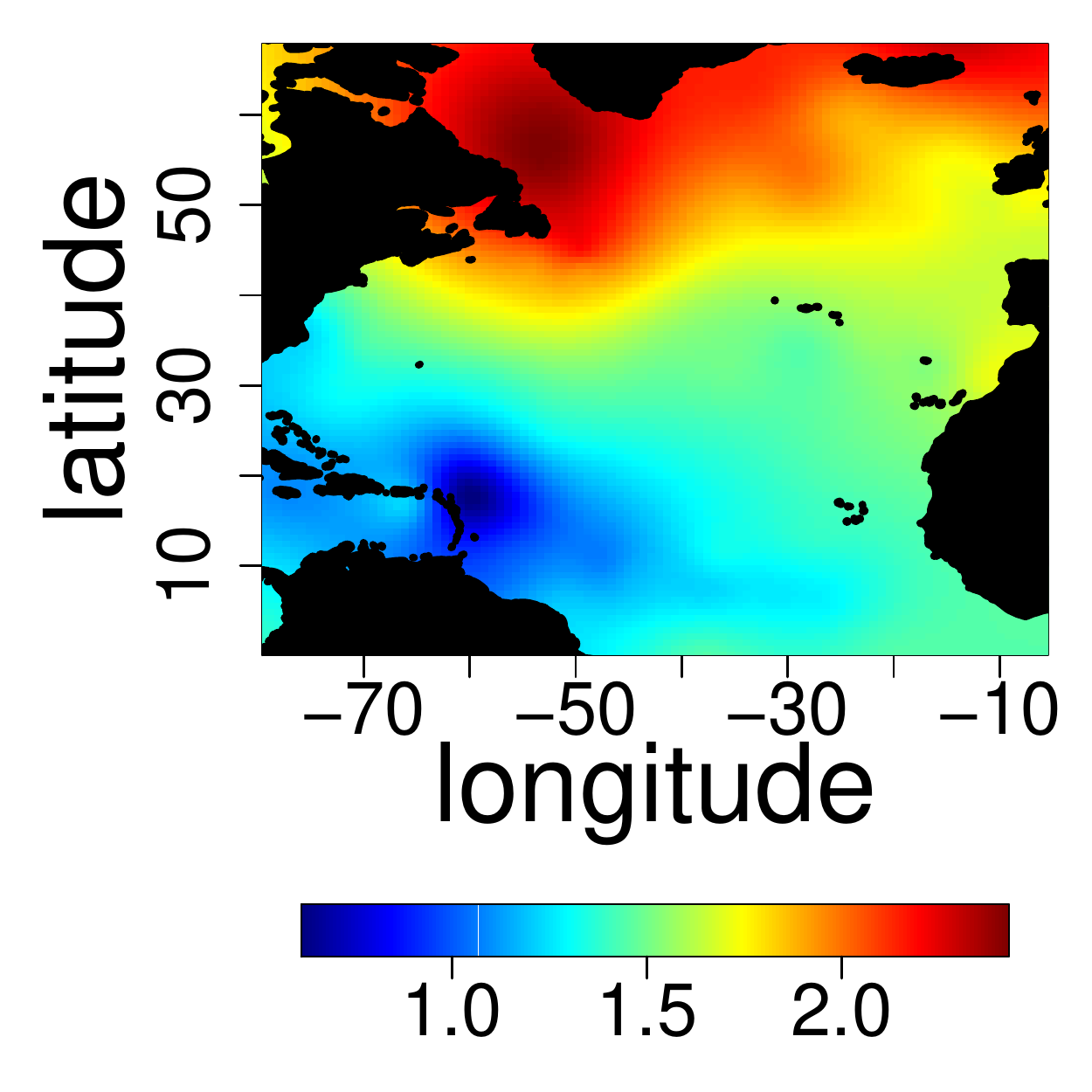}}
    \subfloat{\includegraphics[width=4.0 cm]{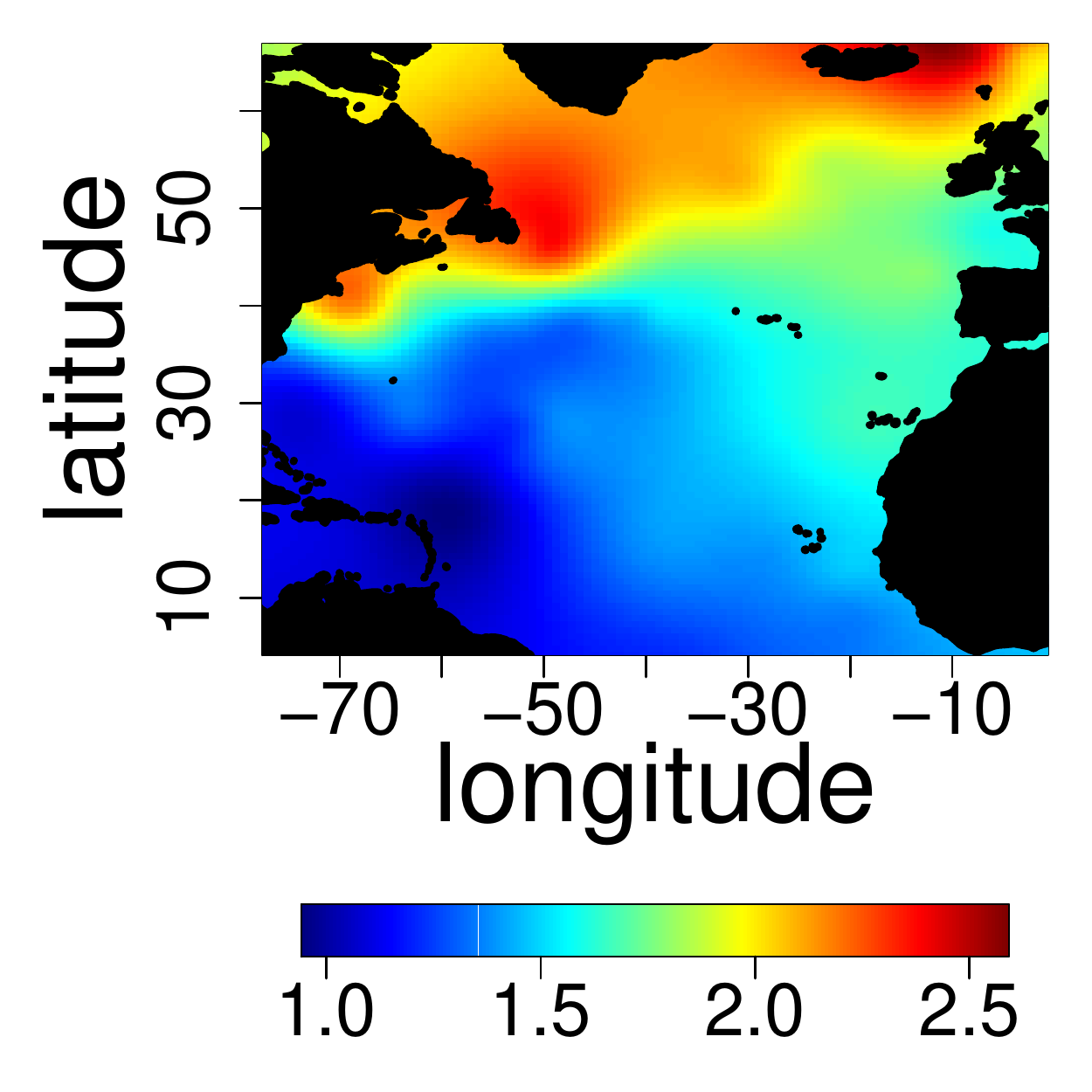}}
    \subfloat{\includegraphics[width=4.0 cm]{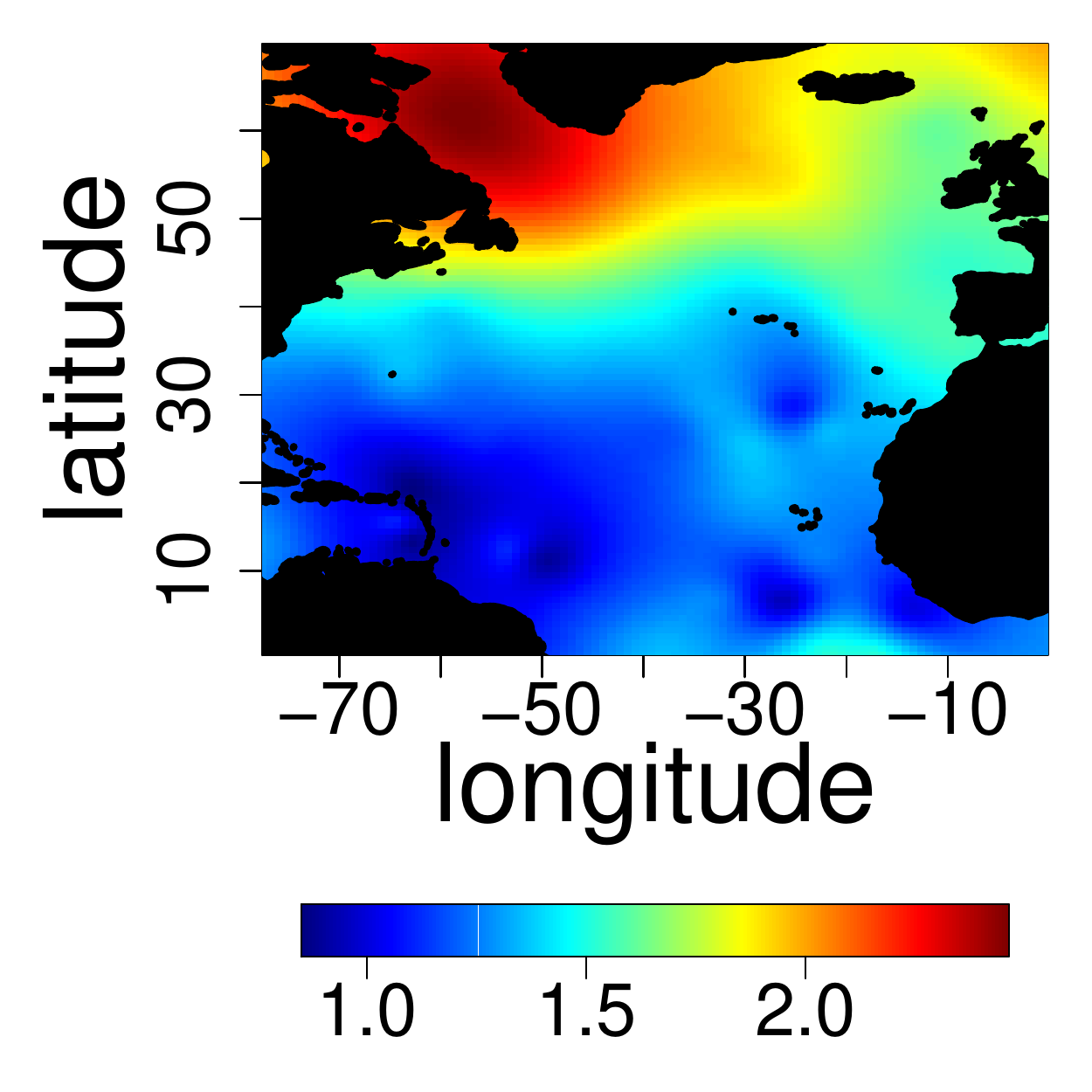}}\\
    \subfloat{\includegraphics[width=4.0 cm]{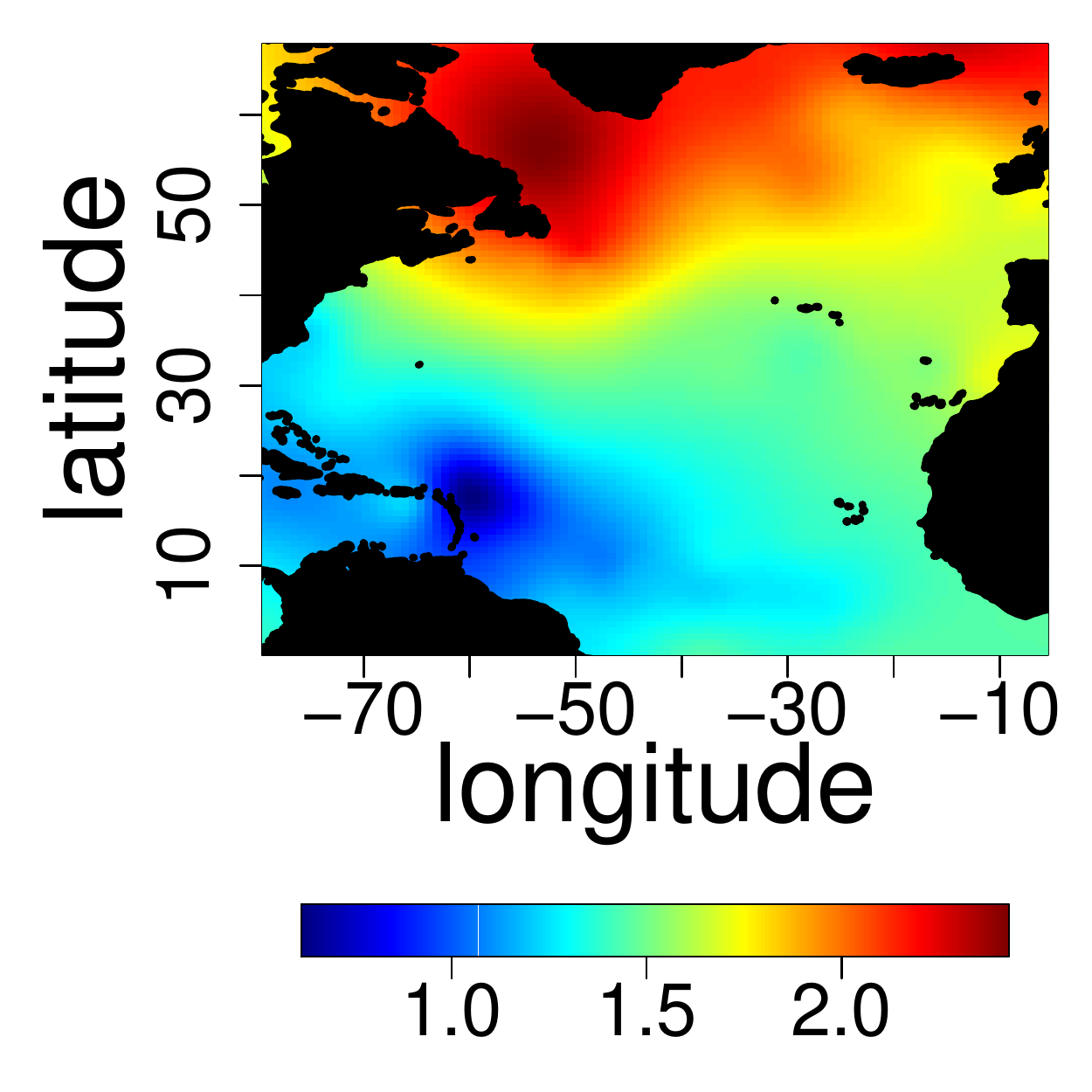}}
    \subfloat{\includegraphics[width=4.0 cm]{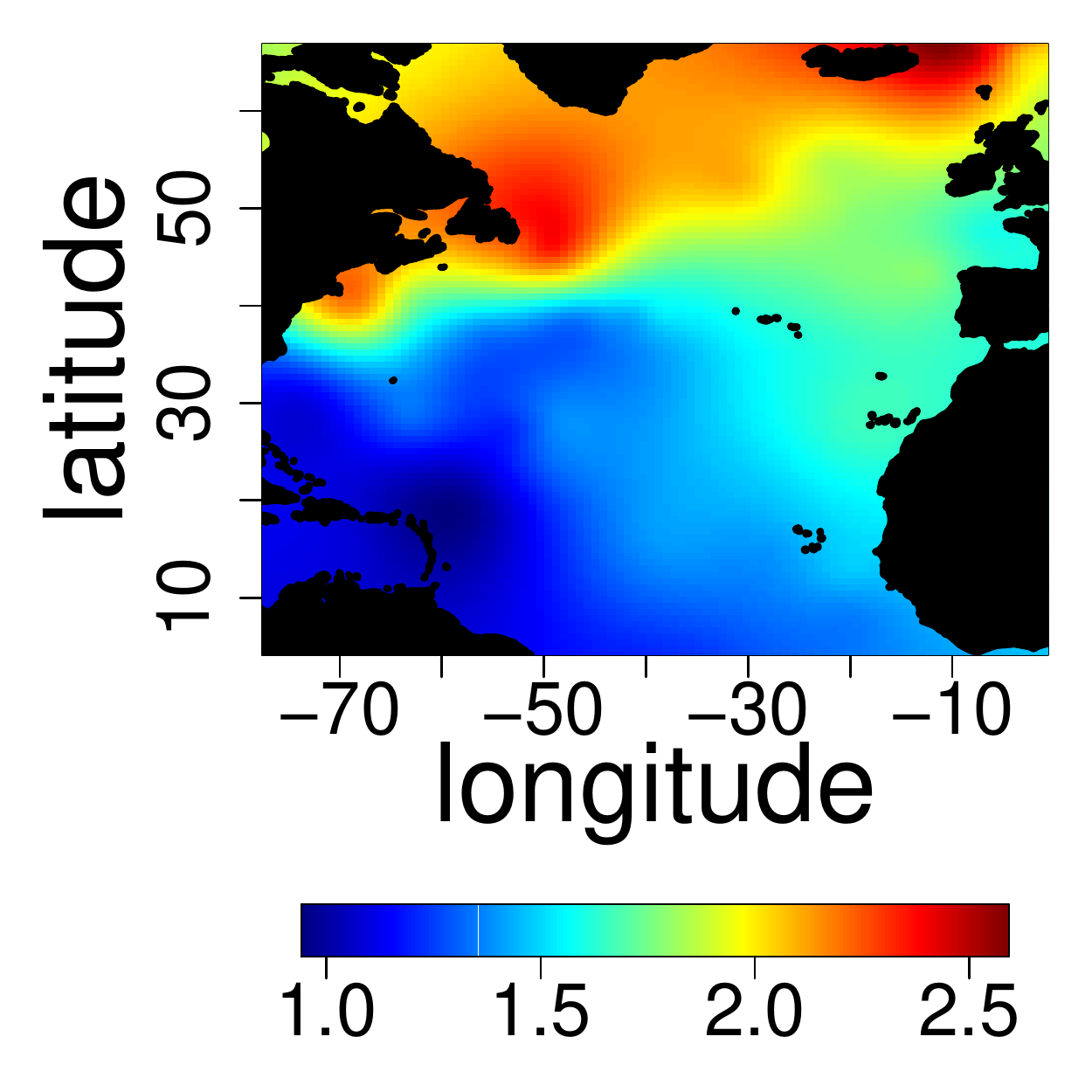}}
    \subfloat{\includegraphics[width=4.0 cm]{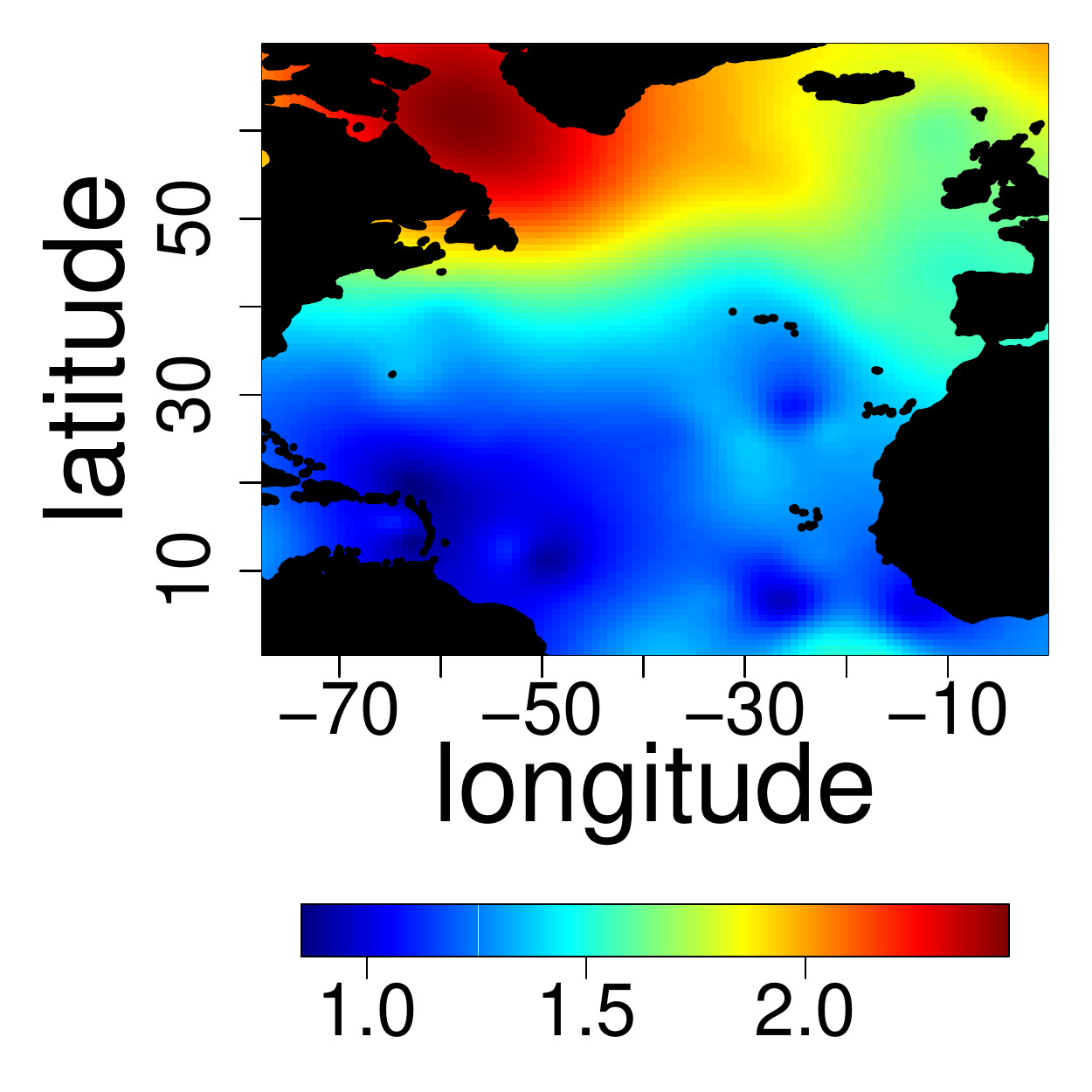}}\\
    \subfloat{\includegraphics[width=4.0 cm]{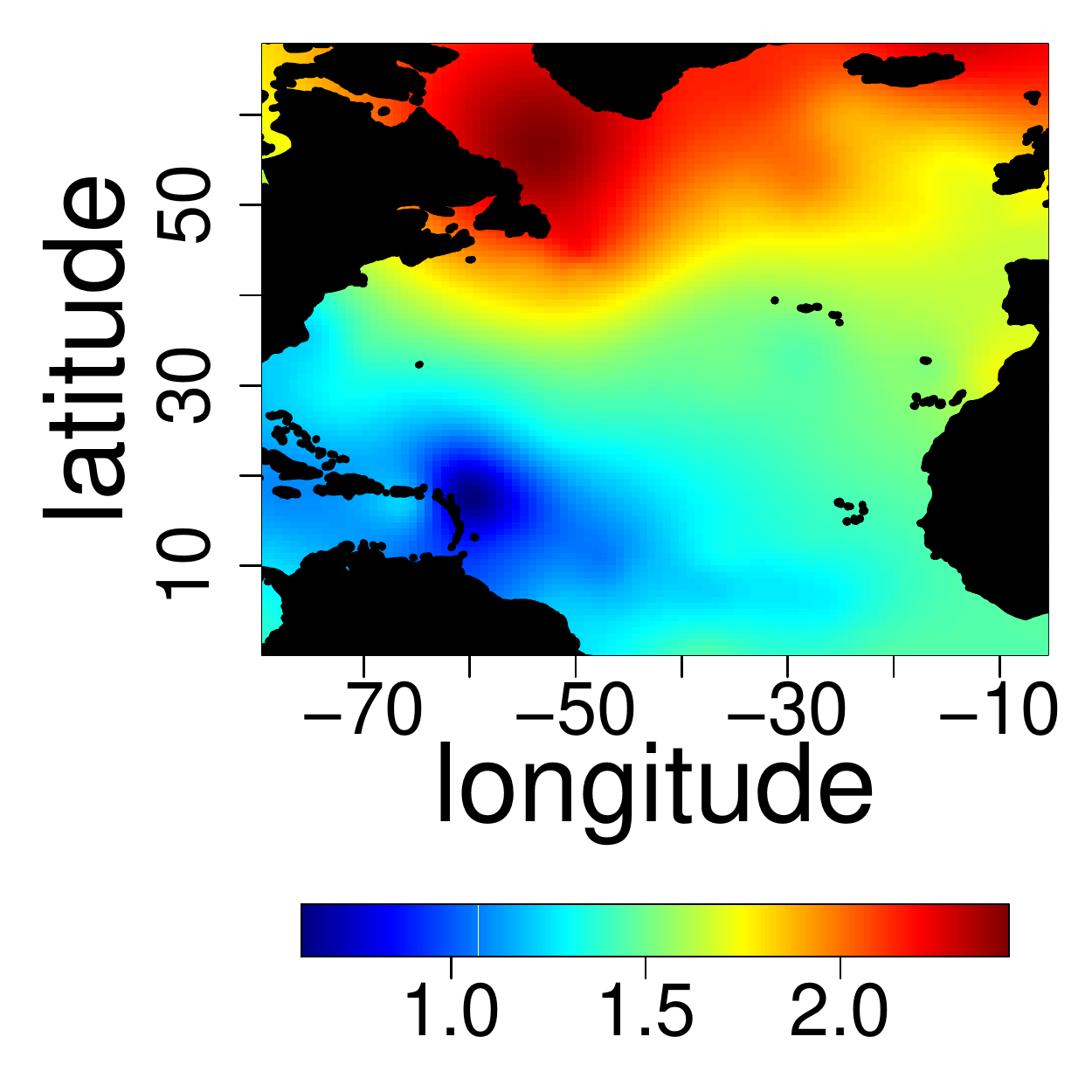}}
    \subfloat{\includegraphics[width=4.0 cm]{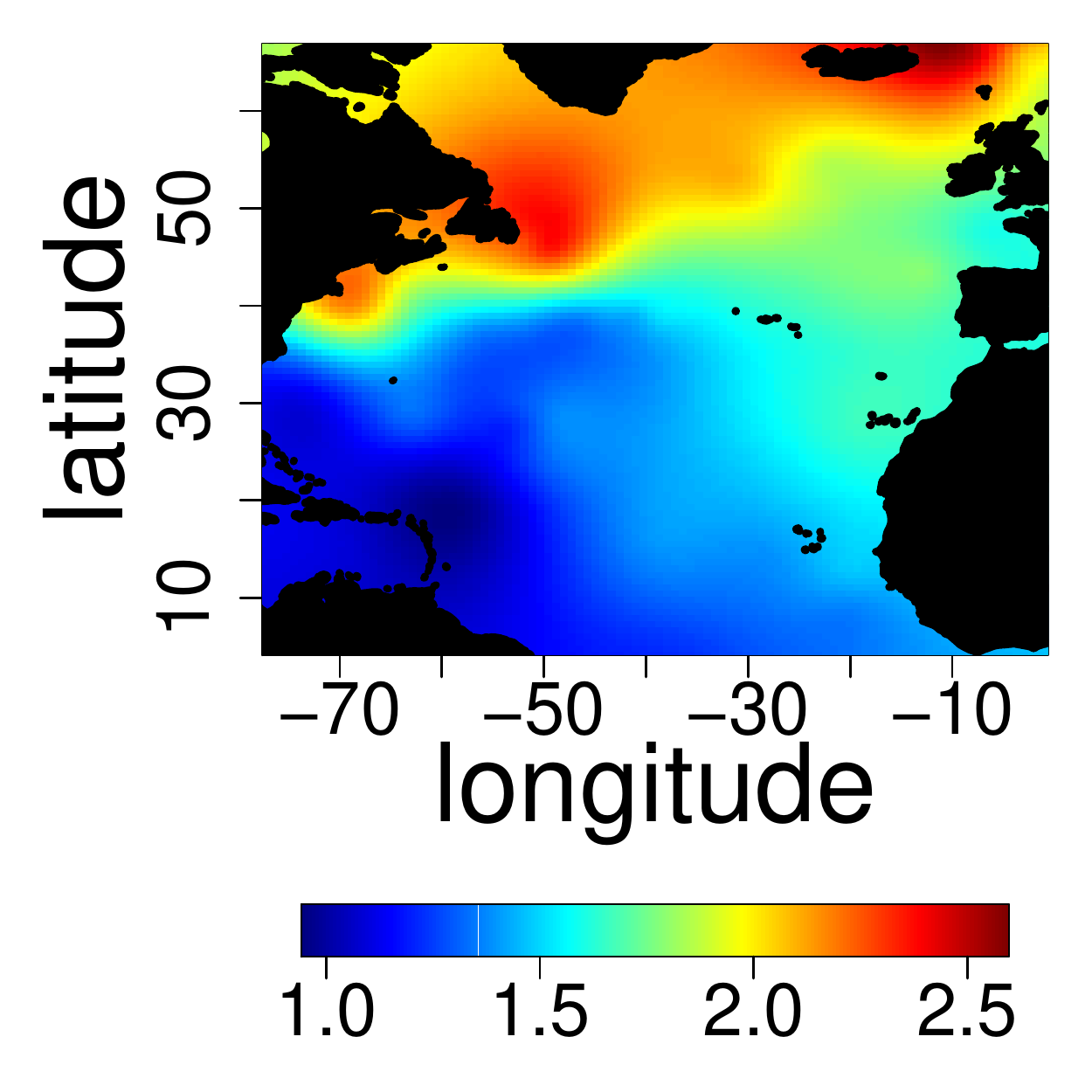}}
    \subfloat{\includegraphics[width=4.0 cm]{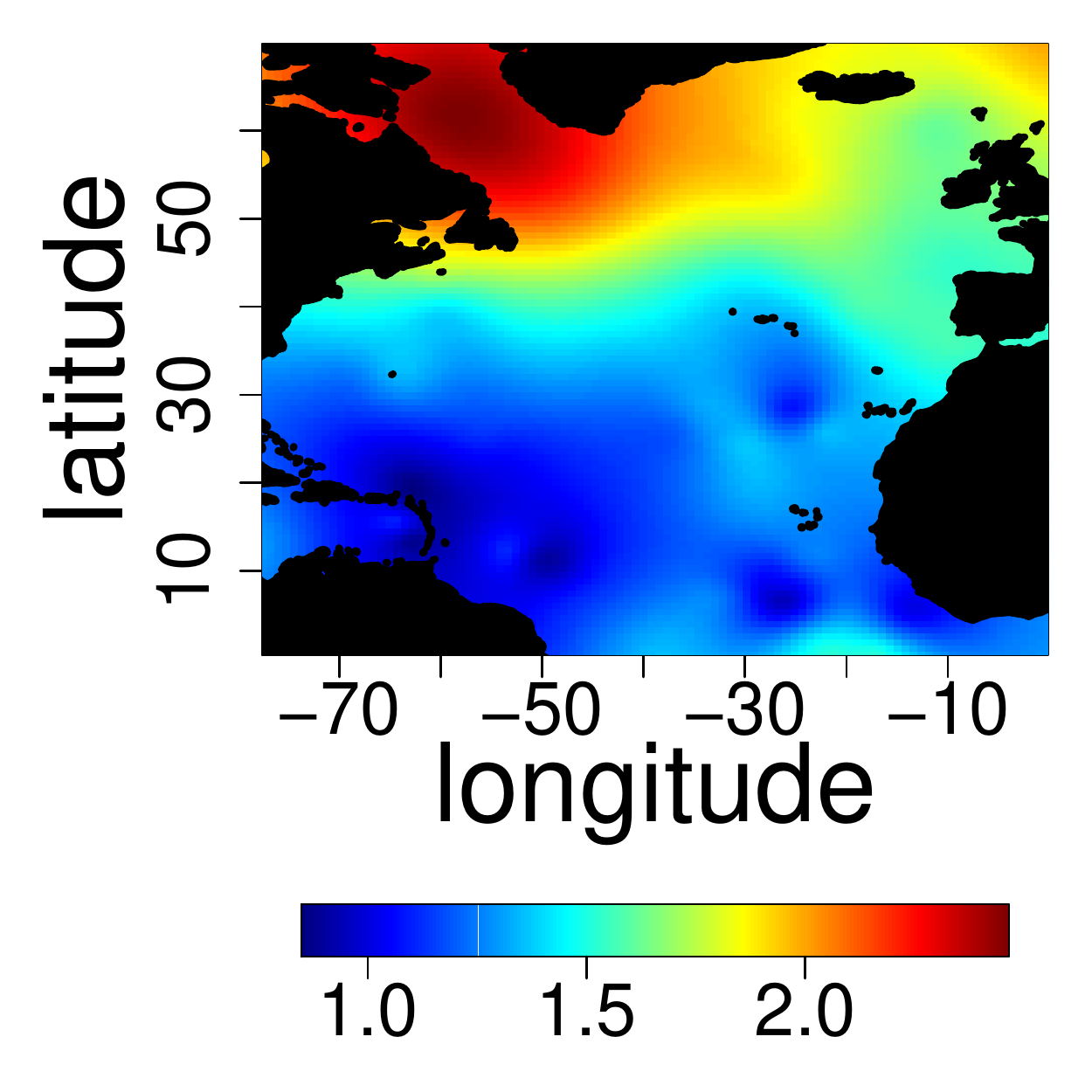}}\\
    \subfloat{\includegraphics[width=4.0 cm]{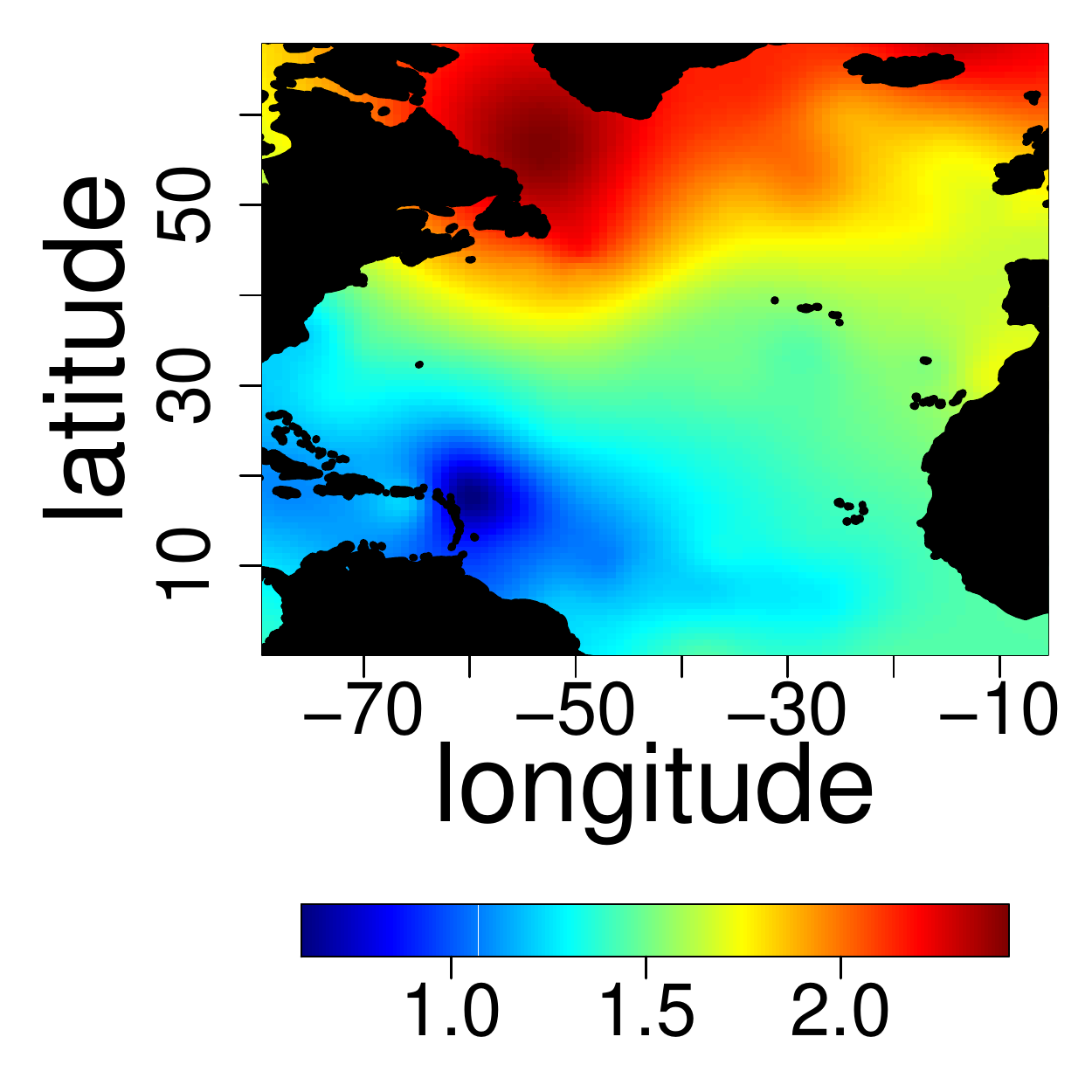}}
    \subfloat{\includegraphics[width=4.0 cm]{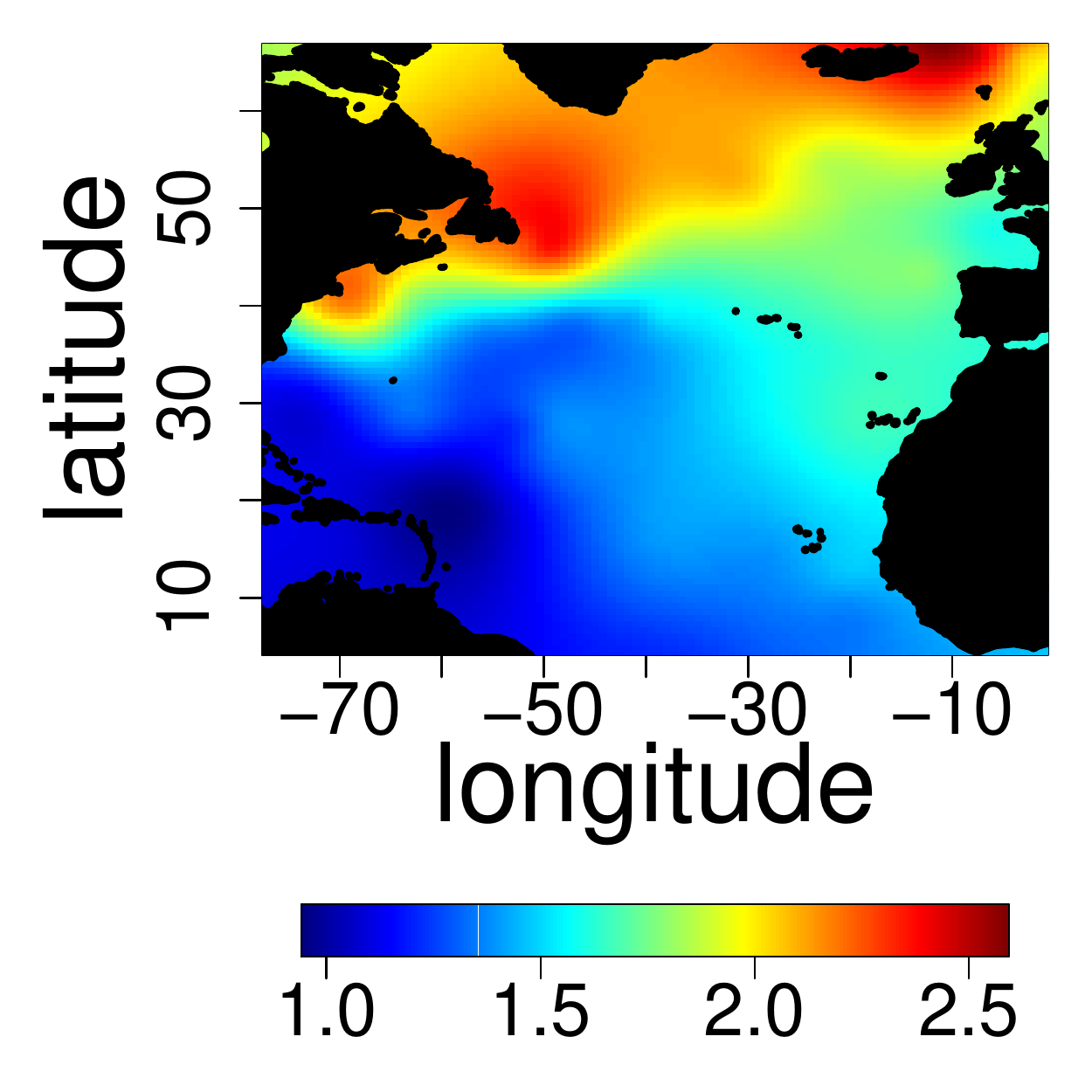}}
    \subfloat{\includegraphics[width=4.0 cm]{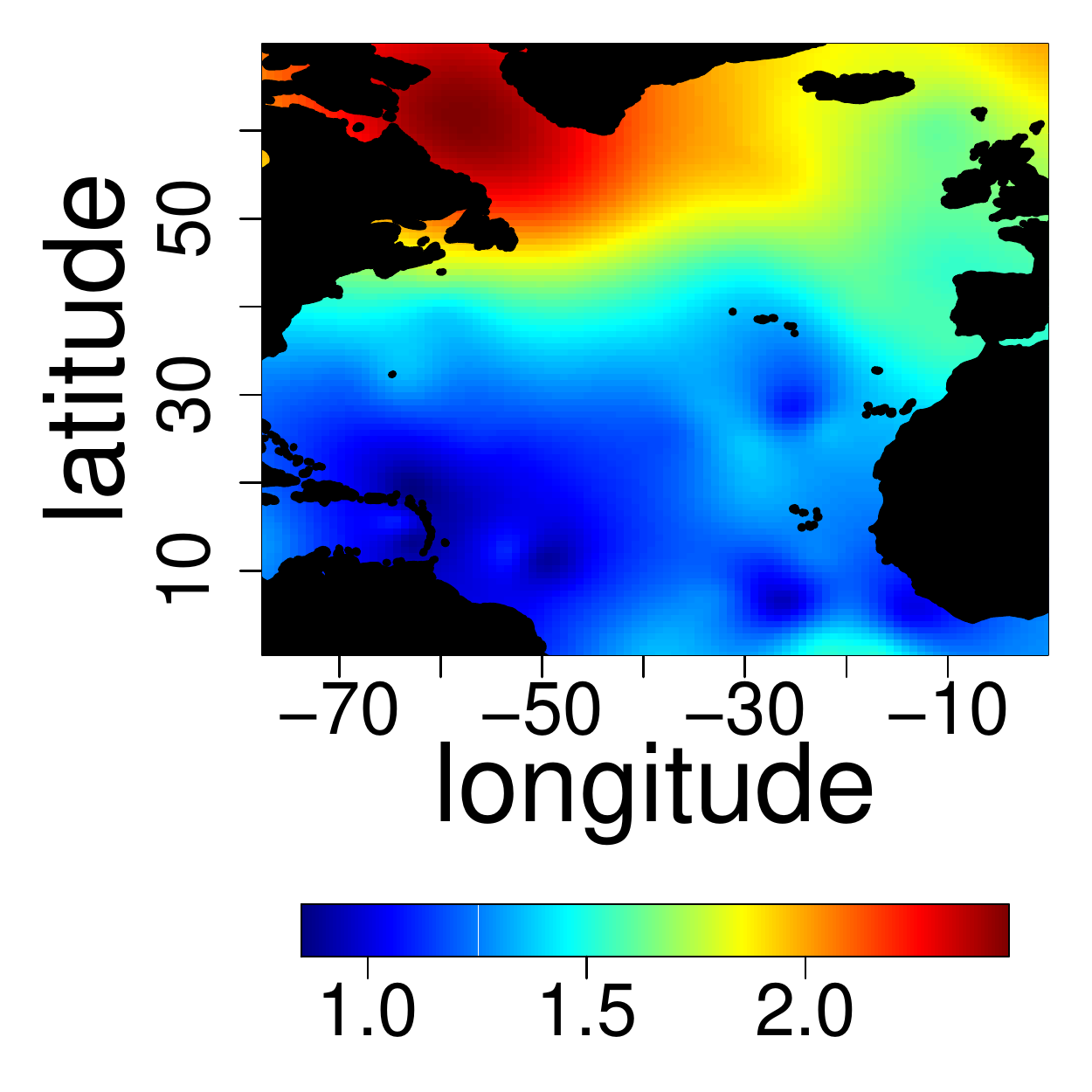}}\\
  \end{center}
  \caption{The four rows represent space-time varying coefficient corresponding to the STVC regression model with SST as the response and SSS as the predictor. Rows 1-4 correspond to the coefficient maps from AMC, WASP, PIE and DPMC respectively. In each row, columns 1, 2, 3 present coefficient maps in January, May and September respectively.}\label{Fig_STVC}
\end{figure}

\section{Discussion}

Bayesian varying coefficient models with multivariate Gaussian process prior on varying coefficients are extremely popular in functional regression models with a wide variety of applications, since they combine the flexibility of a nonparametric model and interpretability of a linear regression model. Unfortunately, full scale Bayesian inference with VCMs are relatively less explored with big data, due to the computation and storage becoming prohibitively burdensome. This article proposes a three step distributed framework that
divides the data into (possibly) overlapping subsets, fits posterior inference with VCM in each subset after appropriately modifying the subset likelihood, and finally aggregates
inference from subsets with a combination algorithm to derive a pseudo posterior distribution that approximates the full posterior. This article is presumably the first approach to design a principled distributed Bayesian algorithm on VCMs with large data. Additionally, we claim threefold contribution in the literature of distributed Bayesian inference on VCMs. First, we identify the modification required in subset posterior likelihoods and justify such modification with rigorous theoretical guarantee. Second, a new subset posterior combination algorithm, referred to as the Aggregated Monte Carlo algorithm, is proposed. Unlike a few other subset posterior combination algorithms proposed in the context of univariate Gaussian process regression models, AMC jointly combines subset posterior of all parameters and yet offers straightforward implementation. Finally, the major contribution of the article becomes theoretically establishing minimax optimal convergence rate for the varying coefficients and regression mean function under the three stage distributed Bayesian framework. The theoretical results explicitly offer the choices of $k$ and $m$ as functions of $n$, smoothness of the fitted Gaussian process and the smoothness of the true varying coefficients to guarantee optimal inference. Furthermore, the theoretical results are proved under mild assumption on the subset posterior combination algorithm, which is found to hold for DPMC, WASP and PIE, along with the proposed AMC algorithm. Simulation studies demonstrate similar point estimation and prediction, along with close to nominal coverage for all the distributed methods that satisfy theoretical assumptions. We fit space-time varying coefficient models to delineate local variability, as well as seasonal variability in the relationship between SST and SSS in the northern Atlantic ocean.

As a first attempt to principled distributed inference with VCMs, we employ FITC approximation of the Gaussian process for each coefficient in every subset inference and are able to seamlessly scale Bayesian VCMs for $\sim 10^5$ observations, even with moderate dimensional multivariate varying coefficients. As an immediate future work, we plan to fit a more computationally convenient approximation to the GPs on varying coefficients \citep{gramacy2015local} to ameliorate scalability. On the theoretical front, our proof techniques do not depend the normality of error terms in the VCM model. Our conjecture is that if we assume the normal error assumption, then by using the techniques in \citet{VarZan11}, it is to possible to improve the rate in Theorem \ref{thm:main} to an adaptive minimax optimal rate $n^{-\vv/(2\vv+d)}$ without the tuning parameter $\lambda_n$, and to relax the function space from RKHS in Assumption \ref{w0_assumption} to the larger space of functions that are less smooth. We leave this direction for future research.

\vspace{10mm}

\noindent {\large \bf Acknowledgements}
\vspace{3mm}

\noindent The research of Rajarshi Guhaniyogi and Sanvesh Srivastava is partially supported by grants from the Office of Naval Research (ONR-BAA N000141812741) and the National Science Foundation (DMS-1854667/1854662). The research of Cheng Li is supported by the Singapore Ministry of Education Academic Research Funds Tier 1 grants R-155-000-201-114 and R-155-000-223-114. The four authors have contributed equally to this work and their names appear in the alphabetical order.

\newpage

\appendix

\section{Draws from the True Posterior Distribution}
\label{da-full-deriv}

Consider the linear mixed effects reformulation of the Bayesian VCM in \eqref{eq:mdl11} and the prior distributions on $(\alpha, \Gamma, \tau^2, \theta_1, \ldots, \theta_q)$ again. Define total number of observations as $s= \sum_{i=1}^n s_i$, $\tilde Z_a$, $\nu_a$, $R_a$ ($a=1, \ldots, q$), and $y$, $X$, $\tilde Z$, $R$, $\nu$, $\epsilon$ as follows:
\begin{align}
  \label{eq-ap-full:1}
  \tilde Z_{a} &= \diag\{Z(u_{1}) \Gamma_a, \ldots,  Z(u_{n}) \Gamma_a\}, \quad \nu_a^\T = \{\nu_{a}(u_{1}), \ldots, \nu_{a}(u_{n})\}, \quad \nu^\T = (\nu_1^\T, \ldots, \nu^\T_q),\nonumber\\
   (R_{a})_{ii'} &= \rho_{a}(u_{i}, u_{i'}), \quad i, i' = 1, \ldots, n, \quad R = \diag(R_1, \ldots, R_q), \quad \epsilon^\T = (\epsilon^\T_1, \ldots, \epsilon^\T_n),\nonumber\\
  y^\T &= \{y(u_1), \ldots, y(u_n)\}, \quad X^\T = [X(u_1)^\T, \ldots, X(u_n)^\T], \quad \tilde Z = [\tilde Z_1, \ldots, \tilde Z_q],
\end{align}
where $\tilde Z_{a}$, $R_a$, $X$, and $Z$ are $s$-by-$n$, $n$-by-$n$, $s$-by-$p$, and $s$-by-$nq$ matrices, respectively, $R$ is a $nq$-by-$nq$ block diagonal matrix, and $\nu$, $\nu_a$, and $y$ are $nq$-by-1, $n$-by-1, and $s$-by-1 vectors, respectively. Using these definitions, the Bayesian VCM in \eqref{eq:mdl11} and the prior distributions are re-written as
\begin{align}
  \label{eq-ap-full:2}
  y &= X \alpha + \sum_{a=1}^q \tilde Z_a \nu_a + \epsilon = \tilde Z \nu + \epsilon, \quad \epsilon \sim N(0, \tau^2 I),  \quad \nu_a \sim N(0, R_a),\nonumber\\
  \nu_a &\sim N(0, R_a), \quad \nu \sim N(0, R), \quad p(\alpha, \Gamma, \tau^2) \propto \tau^{-2}, \quad p(\theta_a) = \text{Uniform}(\underline c_a, \overline c_a),
\end{align}
where the uniform distribution of $\theta_a$ is assumed to be component-wise if $\theta_a$ is a vector. In this case, $\underline c_a$ and $\overline c_a$ are also vectors of the same dimension as $\theta_a$. The prior distribution on the latent variables $\nu$ and parameters $\alpha, \Gamma, \tau^2, \theta_1, \ldots, \theta_q$ are assumed to have the form
\begin{align}
  \label{eq-ap-full:3}
  p(\nu, \alpha, \Gamma, \tau^2, \theta_1, \ldots, \theta_q) = \prod_{a=1}^q \{ p(\nu_a \mid \theta_a) p(\theta_a) \} p(\alpha, \Gamma, \tau^2) \equiv p(\nu, \theta_1, \ldots, \theta_q) p(\alpha, \Gamma, \tau^2).
\end{align}
Our sampling algorithm for drawing $\beta(\cdot)$, $y(\cdot)$, and $\tau^2$ from their respective full data posterior distributions is based on \eqref{eq-ap-full:2} and \eqref{eq-ap-full:3}.

First, we derive the full conditional of $\nu$. Assume that $y, \alpha, \Gamma, \tau^2, \theta_1, \ldots, \theta_q$ are given. Using  \eqref{eq-ap-full:2}, the joint distribution of $(y, \nu)$  is an $(s+nq)$-variate Gaussian distribution with mean $(X \alpha, 0)$, where 0 is an $nq$-by-1 vector, and covariance matrix $\overline C$, where the blocks corresponding to the marginal covariance matrices of $y$, $\nu$, respectively, and their cross covariance matrix are
\begin{align}
  \label{eq-ap-full:4}
  (\overline C)_{yy} = \tilde Z R \tilde Z^\T + \tau^2 I = \sum_{a=1}^q \tilde Z_a R_a \tilde Z_a^\T + \tau^2 I, \quad
  (\overline C)_{\nu \nu} = R, \quad (\overline C)_{y \nu} = \tilde Z R.
\end{align}
This implies that $\nu$ given $y$, $\alpha, \Gamma, \tau^2,\theta_1, \ldots, \theta_q$ follows $N(\mu_{\nu}, \Sigma_{\nu})$, where
\begin{align}
  \label{eq-ap-full:5}
  \mu_{\nu} = R^\T \tilde Z^\T (\overline C)_{yy}^{-1} (y - X \alpha), \quad
  \Sigma_{\nu} = R - R^\T \tilde Z^\T  (\overline C)_{yy}^{-1}\tilde Z R.
\end{align}

Second, we derive the full conditional of $(\alpha, \Gamma, \tau^2)$. Assuming $y, \nu$ are given, define
\begin{align}
  \label{eq-ap-full:6}
  W_i = [X(u_i),\, \nu^\T(u_i) \otimes Z(u_i)], \; i =1 , \ldots, n, \quad
  W^\T = [W^\T_1, \ldots, W^\T_n], \quad b^\T = (\alpha^\T, \gamma^\T),
\end{align}
where $\gamma$ is the column-wise vectorization of $\Gamma$. Rewrite the Bayesian VCM in \eqref{eq-ap-full:2} as
\begin{align}
  \label{eq-ap-full:61}
  y = W b + \epsilon,  \quad \epsilon \sim N(0, \tau^2 I), \quad p(b, \tau^2) \propto \tau^{-2}.
\end{align}
If $\hat b = (W^\T W)^{-1} W^\T y$ is the least squares estimate of $b$ and $\hat y = W \hat b$ is the mean estimate of $y$ based on $\hat b$, then
\begin{align}
  \label{eq-ap-full:7}
  p(\tau^2, b \mid y) &\propto \frac{1}{(\tau^2)^{s/2}} \exp \left\{ - \frac{(y - W b)^\T (y - W b)}{2\tau^2}  \right\} \frac{1} {\tau^2} \nonumber\\
                           &= \frac{1}{(\tau^2)^{s/2 + 1}} \exp \left\{ - \frac{(y - \hat y)^\T (y - \hat y) + (\hat y - W b)^\T (\hat y - W b)}{2\tau^2}  \right\} \nonumber\\
                           &= \frac{1}{(\tau^2)^{s/2 + 1}} \exp \left( - \frac{\| y - \hat y \|_2^2 } {2\tau^2}  \right)
                             \exp \left\{ - \frac{(\hat b - b)^\T (W^\T W) (\hat b -b)}{2\tau^2}  \right\}.
\end{align}
Marginalizing over $b$ in \eqref{eq-ap-full:7} implies that
\begin{align}
  \label{eq-ap-full:8}
  p(\tau^2 \mid y) &\propto \frac{1}{(\tau^2)^{(s - p - q^2)/2 + 1}} \exp \left( - \frac{\| y - \hat y \|_2^2 } {2\tau^2}  \right), \quad \tau^2 \mid y, \nu  \sim \frac{\| y - \hat y \|_2^2 } {\chi^2_{s - p - q^2}},
\end{align}
and \eqref{eq-ap-full:7} assuming $\tau^2$ is given implies that
\begin{align}
  \label{eq-ap-full:9}
  p(b \mid y, \tau^2) \propto
  \exp \left\{ - \frac{(\hat b - b)^\T (W^\T W) (\hat b -b)}{2\tau^2}  \right\} , \quad
  b \mid y, \nu, \tau^2 \sim N \left\{ \hat b,  \tau^2 (W^\T W)^{-1} \right\};
\end{align}
therefore,  \eqref{eq-ap-full:8} and  \eqref{eq-ap-full:9} imply that the distribution of $(b, \tau^2)$ given $y$ lies in the Normal-Inverse-Gamma family. The $\alpha$ draw corresponds to the first $p$ elements of $b$ and the remaining elements of $b$ are ``un-vectorized'' into the $q$-by-$q$ matrix $\Gamma$.

Finally, we derive the ESS algorithm for drawing $\theta_1, \ldots, \theta_q$ given $\nu$. Using \eqref{eq-ap-full:3}, we have that
\begin{align}
  \label{eq-ap-full:10}
  \log p(\theta_1, \ldots, \theta_q \mid \nu) \propto - \frac{1}{2} \sum_{a=1}^q \log |R_a(\theta_a)| - \frac{1}{2} \sum_{a=1}^q \nu_a^\T R_a^{-1}(\theta_a) \nu_a + \sum_{a=1}^q \log 1_{\underline  c_a \leq \theta_a \leq \overline c_a},
\end{align}
but ESS cannot be applied directly for sampling $\theta_1, \ldots, \theta_q$ given $\nu$ due the range restrictions on $\theta_a$s imposed by the uniform prior distributions. We address this problem by first transforming $\theta_1, \ldots, \theta_q$ to $\thetabar_1, \ldots, \thetabar_q$ as
\begin{align}
  \label{eq-ap-full:11}
  \overline \theta_a = \log \frac{\theta_a - \underline c_a} {\overline c_a - \theta_a} , \quad \theta_a = \underline c_a + \frac{\overline c_a - \underline c_a}{1 + e^{-\thetabar_a}},
\end{align}
where each $\thetabar_a \in (-\infty, \infty)$ and the mapping of $\theta_a$ to $\thetabar_a$ is done component-wise if $\theta_a$ is a vector. The form is $\log p(\theta_1, \ldots, \theta_q \mid \nu)$ in  \eqref{eq-ap-full:10} is modified using the Jacobian of the transform as
\begin{align}
  \label{eq-ap-full:13}
  \log p(\thetabar_1, \ldots, \thetabar_q \mid \nu ) \propto &- \frac{1}{2} \sum_{a=1}^q \log |R_a(\thetabar_a)| - \frac{1}{2}  \sum_{a=1}^q \nu_a^\T R_a^{-1}(\thetabar_a) \nu_a + \sum_{a=1}^q \log (\overline c_a - \underline c_a) \nonumber\\
  &+ \sum_{a=1}^q \log \thetabar_a -2  \sum_{a=1}^q \log (1 + e^{\thetabar_a }).
\end{align}
The $\thetabar_a$s in \eqref{eq-ap-full:13} are supported on $(-\infty, \infty)$, so we apply the ESS algorithm using the proposal $N(0, 4 I)$ for $(\thetabar_1, \ldots, \thetabar_q)$, where $I$ is an identiy whose dimension is determined by that of $(\thetabar_1, \ldots, \thetabar_q)$; see Algorithm 1 in \citet{Nisetal14}.

In summary, the sampling algorithm for drawing from the posterior distribution of $(\alpha, \Gamma, \tau^2, \theta_1, \ldots, \theta_q)$ starts from an initial value of parameters $(\alpha^{(0)}, \Gamma^{(0)}, \tau^{2(0)}, \theta^{(0)}_1, \ldots, \theta^{(0)}_q)$  and cycles through the following four steps for $t=0, 1, \ldots, \infty$:
\begin{enumerate}
\item[(a)] draw $\nu^{(t+1)}$ given $y, \alpha^{(t)}, \Gamma^{(t)}, \tau^{2(t)}, \theta^{(t)}_1, \ldots, \theta^{(t)}_q$ from $N(\mu_{\nu}, \Sigma_{\nu})$, where $\mu_{\nu}, \Sigma_{\nu}$  are defined in \eqref{eq-ap-full:5};
\item[(b)] draw $\tau^{2(t+1)}$ given $y, \nu^{(t+1)}$ using \eqref{eq-ap-full:8};
\item[(c)] draw $(\alpha^{(t+1)}, \gamma^{(t+1)})$ given $y, \nu^{(t+1)}, \tau^{2(t+1)}$ using \eqref{eq-ap-full:9} and the vectorization of $\gamma^{(t+1)}$ is reversed to obtain $\Gamma^{(t+1)}$; and
\item[(d)] draw $\theta_1^{(t+1)}, \ldots, \theta_q^{(t+1)}$ given $\nu^{(t+1)}$ using ESS, the likelihood in \eqref{eq-ap-full:13}, and the relation between $\thetabar_a$ and $\theta_a$ in \eqref{eq-ap-full:11}.
\end{enumerate}

In practice, the interest also lies in drawing $\beta(u^*)$  and $y(u^*)$ for $u^* \in \Ucal^*$, where $\Ucal^*$ is a known subset of $[0,1]^d$ with $l$ elements. This accomplished by the addition of two extra steps after steps (a)--(d). Let $\nu_a^* = \{\nu_a(u^*_1), \ldots, \nu_a(u^*_l)\}$ ($a=1, \ldots, q$) and $\nu(u^*) = \{\nu_1(u^*), \ldots, \nu_q(u^*)\}$, $u^* \in \Ucal^*$. Then, the GP prior on $\nu_a(\cdot)$ ($a=1, \ldots, q$) implies that the $\nu^*_a$ given $\nu_a$ and $\theta_a$ is drawn from $N(\mu_a^*, \Sigma_a^*)$, where
\begin{align}
  \label{eq-ap-full:15}
  &\mu_a^* = R_{a*}^\T R_a^{-1} \nu_a , \quad \Sigma_a^* = R_{a**} - R_{a*}^\T R_a^{-1} R_{a*}, a = 1, \ldots, q,\nonumber\\
  &(R_{a*})_{ii'} = \rho_a(u_i, u^*_{i'}), \quad (R_{a**})_{i'i''} = \rho_a(u^*_i, u^*_{i''}),  \quad i = 1, \ldots, n, \; i', i'' = 1, \ldots, l.
\end{align}
Given $(\alpha, \Gamma, \tau^2, \theta_1, \ldots, \theta_q)$ and $u^* \in \Ucal^*$, the draws of $\beta(u^*)$ is obtained as
\begin{align}
  \label{eq-ap-full:16}
  \beta(u^*) = [\{\beta(u^*) \}^\T_{\text{va}}, \{\beta(u^*) \}^\T_{\text{nv}}]^\T , \quad \{\beta(u^*) \}_{\text{va}} = \alpha_{\text{va}} + \Gamma \nu(u^*), \quad  \{\beta(u^*) \}_{\text{nv}} = \alpha_{\text{nv}},
\end{align}
where nv,v correspond to the non-varying and varying coefficients indices, and the draw of $y(u^*)$ is obtained as
\begin{align}
  \label{eq-ap-full:17}
  y(u^*) \sim N (\mu_{y^*}, \tau^2 I), \quad \mu_{y^*} = X(u^*) \beta(u^*);
\end{align}
therefore, at the $(t+1)$ iteration of the full data sampling algorithm, the following two steps are added if $\Ucal^*$ and $X(u^*)$ for every $u^* \in \Ucal^*$ are known:
\begin{enumerate}
\item [(e)] draw $\nu_a^{*(t+1)}$ given $\nu^{(t+1)}_a$, $\theta_a^{(t+1)}$ using \eqref{eq-ap-full:15} ($a=1, \ldots, q$) and obtain $\beta(u^*)^{(t+1)}$ given $\alpha^{(t+1)}$, $\Gamma^{(t+1)}$, $\nu_a^{*(t+1)}$ using \eqref{eq-ap-full:16} for every $u^* \in \Ucal^*$; and
\item [(f)] draw $y(u^*)^{(t+1)}$ given $X(u^*), \beta(u^*)^{(t+1)}, \tau^{2(t+1)}$ using \eqref{eq-ap-full:17} for every $u^* \in \Ucal^*$.
\end{enumerate}
We also take advantage of the low-rank structure in the covariance matrices if inducing points are used defining the covariance functions of GP \citep{QuiRas05,AlvLaw11}.

\section{Draws from the $j$th Subset Posterior Distribution}
\label{da-sub-deriv}

Consider the reformulation of \eqref{eq-ap-full:2} on subset $j$. Define total number of observations on subset $j$ as $\tilde s_j= \sum_{i=1}^m s_{ji}$, $\tilde Z_{ja}$, $\tilde \nu_{ja}$, $R_{ja}$, $y_j$, $X_j$, $\tilde Z_j$, $R_j$, $\tilde \nu_j$, $\epsilon_j$ are the subset $j$ counterparts of $\tilde Z_{a}$, $\nu_{a}$, $R_{a}$, $y$, $X$, $\tilde Z$, $R$, $\nu$, $\epsilon$, where the dimensions of subset $j$ variables are obtained by replacing $s$ by $\tilde s_j$ and $n$ by $m$ in their full data counterparts. Using these definitions, the Bayesian VCM in on subset $j$ and the prior distributions are re-written as
\begin{align}
  \label{eq-ap-sub:21}
  y_j &= X_j \alpha_j + \tilde Z_{j1} \tilde \nu_{j1} + \cdots + \tilde Z_{jq} \tilde \nu_{jq} + \epsilon_j = \tilde Z_j \tilde  \nu_j + \epsilon_j, \quad \epsilon_j \sim N(0, \tau^2_j I),  \quad \tilde  \nu_{ja} \sim N(0, R_{ja}),\nonumber\\
  \tilde \nu_{ja} &\sim N(0, R_{ja}), \quad \tilde \nu \sim N(0, R_j), \quad p(\alpha_j, \Gamma_j, \tau^2_j) \propto \tau_j^{-2}, \quad p(\theta_{aj}) = \text{Uniform}(\underline c_{a}, \overline c_a).
\end{align}
where the prior distributions on $\theta_{ja}$s remain unchanged. The prior distribution of the subset $j$ latent variables $\tilde \nu_j$ and parameters $\alpha_j, \Gamma_j, \tau^2_j, \theta_{j1}, \ldots, \theta_{jq}$ also have the same form as in \eqref{eq-ap-full:3}
\begin{align}
  \label{eq-ap-sub:31}
  p(\tilde \nu_j, \alpha_j, \Gamma_j, \tau_j^2, \theta_{j1}, \ldots, \theta_{jq}) &= \prod_{a=1}^q \{ p(\tilde \nu_{ja} \mid \theta_{ja}) p(\theta_{ja}) \} p(\alpha_j, \Gamma_j, \tau_j^2) \nonumber\\
  &\equiv p(\tilde \nu_j, \theta_{j1}, \ldots, \theta_{jq}) p(\alpha_j, \Gamma_j, \tau_j^2).
\end{align}

The sampling algorithm for drawing $\beta(\cdot)$, $y(\cdot)$, and $\tau^2$ from their respective subset $j$ posterior distributions is based on a modified form of \eqref{eq-ap-sub:21} and \eqref{eq-ap-sub:31}. The conditional density of $(\alpha_j, \Gamma_j, \tau^2_j, \theta_{j1}, \ldots, \theta_{jq})$ given $y_j$ after stochastic approximation is defined following \eqref{eq:fullsub-j} as
\begin{align}
  \label{eq-ap-sub:2}
  \pi(\alpha_j, \Gamma_j, \tau^2_j, \theta_{j1}, \ldots, \theta_{jq} \mid y_j) &\propto \{p(y_j \mid \tilde \nu_j , \alpha_j, \Gamma_j, \tau^2_j) p(\tilde \nu_j, \theta_{j1}, \ldots, \theta_{jq}) \}^{\delta_n} p(\alpha_j, \Gamma_j, \tau^2_j),
\end{align}
where $p(\tilde \nu_j, \theta_{j1}, \ldots, \theta_{jq})$ is the subset $j$ version of $p(\nu, \theta_{1}, \ldots, \theta_{q})$  in \eqref{eq-ap-full:3}. The model in \eqref{eq-ap-sub:21} implies that the joint distribution of $(y_j, \tilde \nu_j)$  given $\alpha_j, \Gamma_j, \tau^2_j, \theta_{j1}, \ldots, \theta_{jq}$ is an $(\tilde s_j+mq)$-variate Gaussian distribution with mean $(X_j \alpha_j, 0)$, where 0 is an $mq$-by-1 vector, and covariance matrix $\overline C_j$, where the blocks corresponding to the marginal covariance matrices of $y_j$, $\tilde \nu_j$, respectively, and their cross covariance matrix are
\begin{align}
  \label{eq-ap-sub:4}
  (\overline C_j)_{y_j y_j} = \tilde Z_j R_j \tilde Z_j^\T +  \tau^2_j I = \sum_{a=1}^q \tilde Z_{ja} R_{ja} \tilde Z_{ja}^\T + \tau^2_j I, \quad
  (\overline C_j)_{\tilde \nu_j \tilde \nu_j} = R_j, \quad (\overline C_j)_{y_j \tilde \nu_j} = \tilde Z_j R_j.
\end{align}
This implies that $\tilde \nu_j$ given $y_j$, $\alpha_j, \Gamma_j, \tau_j^2,\theta_{j1}, \ldots, \theta_{jq}$ follows $N(\mu_{\tilde \nu_j}, \Sigma_{\tilde \nu_j})$, where
\begin{align}
  \label{eq-ap-sub:5}
  \mu_{\tilde \nu_j} = R_j^\T \tilde Z_j^\T (\overline C_j)_{y_j y_j}^{-1} (y_j - X_j \alpha_j), \quad
  \Sigma_{\tilde \nu_j} = R_j - R_j^\T \tilde Z_j^\T  (\overline C_j)_{y_j y_j}^{-1}\tilde Z_j R_j.
\end{align}

The conditional density of $(\alpha_j, \Gamma_j, \tau_j^2)$ given $y_j$ is obtained using \eqref{eq-ap-sub:2}. The subset $j$ counterpart of $W$ and $b$ in \eqref{eq-ap-full:6} are $W_j$ and $b^\T_j = (\alpha^\T_j, \gamma^\T_j)$, where $\gamma_j$ is the column-wise vectorization of $\Gamma_j$. The conditional density in \eqref{eq-ap-sub:2} together with \eqref{eq-ap-full:61} implies that
\begin{align}
  \label{eq-ap-sub:7}
  \pi(\tau^2_j, b_j \mid y_j) &\propto \frac{1}{(\tau_j^2)^{\tilde s_j \delta_n /2}} \exp \left\{ - \frac{(y_j - W_j b_j)^\T (y_j - W_j b_j)}{2\tau_j^2 / \delta_n}  \right\} \frac{1} {\tau^2_j} \nonumber\\
                                   &= \frac{1}{(\tau_j^2)^{\tilde s_j \delta_n /2 + 1}} \exp \left\{ - \frac{(y_j - \hat y_j)^\T (y_j - \hat y_j) + (\hat y_j - W_j b_j)^\T (\hat y_j - W_j b_j)}{2\tau^2_j/\delta_n}  \right\} \nonumber\\
                                   &= \frac{1}{(\tau^2_j)^{\tilde s_j \delta_n /2 + 1}} \exp \left( - \frac{\delta_n \| y_j - \hat y_j \|_2^2 } {2\tau_j^2}  \right)
                                     \exp \left\{ - \frac{(\hat b_j - b_j)^\T \delta_n (W_j^\T W_j) (\hat b_j -b_j)}{2\tau_j^2}  \right\},
\end{align}
where $\hat b_j = (W^\T_j W_j)^{-1} W_j^\T y_j$ is the least squares estimate of $b_j$ and $\hat y_j = W_j \hat b_j$ is the mean estimate of $y_j$ based on $\hat b_j$. Marginalizing over $b_j$ in \eqref{eq-ap-sub:7} implies that
\begin{align}
  \label{eq-ap-sub:8}
  \pi(\tau_j^2 \mid y_j) &\propto \frac{1}{(\tau^2_j)^{(\tilde s_j \delta_n - p - q^2)/2 + 1}} \exp \left( - \frac{\delta_n \| y_j - \hat y_j \|_2^2 } {2\tau_j^2}  \right), \quad \tau^2_j \mid y_j  \sim \frac{\delta_n \| y_j - \hat y_j \|_2^2 } {\chi^2_{\tilde s_j \delta_n - p - q^2}},
\end{align}
and \eqref{eq-ap-sub:7} assuming $\tau_j^2$ is given implies that
\begin{align}
  \label{eq-ap-sub:9}
  \pi(b_j \mid y_j, \tau_j^2) &\propto
  \exp \left\{ - \frac{(\hat b_j - b_j)^\T \delta_n (W_j^\T W_j) (\hat b_j -b_j)}{2\tau_j^2}  \right\} , \nonumber\\
  b_j \mid y_j, \tau^2_j &\sim N \left\{ \hat b_j,  \frac{\tau^2_j} {\delta_n} (W_j^\T W_j)^{-1}  \right\};
\end{align}
therefore,  \eqref{eq-ap-sub:8} and  \eqref{eq-ap-sub:9} imply that the distribution of $(b_j, \tau_j^2)$ given $y_j$ lies in the Normal-Inverse-Gamma family. The $\alpha_j$ draw corresponds to the first $p$ elements of $b_j$ and the remaining elements of $b$ are ``un-vectorized'' into the $q$-by-$q$ matrix $\Gamma_j$.

Finally, the form of the full conditionals of $\theta_{j1}, \ldots, \theta_{jq}$ and $\beta(u^*), y(u^*)$ for $u^* \in \Ucal^*$ remain the same as in steps (d)--(f) of the sampling algorithm for drawing from the full data posterior distribution, except the log likelihood for $\theta_{j1}, \ldots, \theta_{jq}$ is multiplied by a factor $\delta_n$ in step (d). Let $\beta_j(u^*), y_j(u^*), \nu_j(u^*) = \{\nu_{j1}(u^*), \ldots, \nu_{jq}(u^*)\}$, and $R_{ja*}, R_{ja**}, \nu_{ja}^*$ ($a=1, \ldots, q$) be the subset $j$ counterparts of $\beta(u^*), y(u^*), \nu(u^*)$ and $R_{a*}, R_{a**}, \nu_{a}^*$. The full conditional of $\nu_j$ in \eqref{eq-ap-sub:2} implies that the conditional likelihood of $\theta_{j1}, \ldots, \theta_{jq}$ given $\tilde \nu_j$ remains unchanged after stochastic approximation. Similarly, the GP prior on $\nu_a(\cdot)$ remains unchanged except that $\theta_{ja}$ replaces $\theta_a$ ($a=1, \ldots, q$). This implies that the steps for drawing from the  full conditionals of $\theta_{j1}, \ldots, \theta_{jq}$ and $\beta(u^*), y(u^*)$ for $u^* \in \Ucal^*$  have the same form as in the steps (d)--(f).

In summary, the sampling algorithm for drawing from the $j$th subset posterior distribution of $(\alpha, \Gamma, \tau^2, \theta_1, \ldots, \theta_q)$ and $\beta(u^*), y(u^*)$ for $u^* \in \Ucal^*$ starts from an initial value of parameters $(\alpha^{(0)}_j, \Gamma^{(0)}_j, \tau^{2(0)}_j, \theta^{(0)}_{j1}, \ldots, \theta^{(0)}_{jq})$  and cycles through the following six steps for $t=0, 1, \ldots, \infty$:
\begin{enumerate}
\item[(a*)] draw $\tilde \nu_j^{(t+1)}$ given $y_j, \alpha_j^{(t)}, \Gamma_j^{(t)}, \tau_j^{2(t)}, \theta^{(t)}_{jq}, \ldots, \theta^{(t)}_{jq}$ from $N(\mu_{\tilde \nu_j}, \Sigma_{\tilde \nu_j})$, where $\mu_{\tilde \nu_j}, \Sigma_{\tilde \nu_j}$  are defined in \eqref{eq-ap-sub:5};
\item[(b*)] draw $\tau_j^{2(t+1)}$ given $y_j, \tilde \nu^{(t+1)}_j$ using \eqref{eq-ap-sub:8};
\item[(c*)] draw $(\alpha_j^{(t+1)}, \gamma_j^{(t+1)})$ given $y_j, \tilde \nu_j^{(t+1)}, \tau_j^{2(t+1)}$ using \eqref{eq-ap-sub:9} and the vectorization of $\gamma_j^{(t+1)}$ is reversed to obtain $\Gamma_j^{(t+1)}$;
\item[(d*)] draw $\theta_{j1}^{(t+1)}, \ldots, \theta_{jq}^{(t+1)}$ given $\tilde \nu_j^{(t+1)}$ using ESS as in step (d) of the full data posterior sampling algorithm after multiplying the log likelihood by $\delta_n$;
\item [(e*)] draw $\nu_{ja}^{*(t+1)}$ and obtain $\beta_j(u^*)^{(t+1)}$ given $\tilde \nu_{ja}^{(t+1)}$, $\alpha^{(t+1)}$, $\Gamma^{(t+1)}$, $\theta_{ja}^{(t+1)}$ as in step (e) of the full data posterior sampling algorithm; and
\item [(f*)] draw $y_j(u^*)^{(t+1)}$ given $X(u^*), \beta_j(u^*)^{(t+1)}, \tau_j^{2(t+1)}$ as in step (f) of the full data posterior sampling algorithm.
\end{enumerate}

\section{Proof of Theorem \ref{thm:main}}
\label{app:proof}

We use $\vertiii{\cdot}$ to denote the matrix operator norm, and $\tr(\cdot)$ to denote the trace of a matrix or a covariance function. For two semi-positive definite matrices $A_1$ and $A_2$, $A_1\preceq A_2$ means $A_2-A_1$ is semi-positive definite.
For abbreviation, we write the Bayes $L_2$-risk in estimating $\beta_0(\cdot)$ using the combined posterior from Theorem \ref{thm:main} as
\begin{align} \label{L2_risk}
\LL(\overline \Pi)& \equiv \EE_{u^*}\EE_{y,u} \EE_{\overline \beta\mid y,u} \left\|\overline \beta(u^*) - \beta_0(u^*) \right\|_2^2,
\end{align}
where $\overline \beta(\cdot)$ be a $q$-variate random function drawn from the AMC posterior of $\beta(\cdot)$. Let $\EE_{u^*}$, $\EE_{y_j,u_j}$, $\EE_{y,u}$, $\EE_{\beta_j \mid y_j,u_j}$ (similarly $\EE_{\nu_j \mid y_j,u_j}$  and $\EE_{\nu_{ja} \mid y_j,u_j}$ for $a=1,\ldots,q$), $\EE_{\overline \beta \mid y,u}$ (similarly $\EE_{\overline \nu \mid y,u}$ and $\EE_{\overline \nu_a \mid y,u}$ for $a=1,\ldots,q$) respectively be the expectations with respect to the distribution of $u^*$, the true data generating distribution of the subset data $(y_j,u_j)$, the true data generating distribution of the full data $(y,u)$, the subset posterior distribution of $\beta$ (similarly $\nu$ and $\nu_a$) given $y_j,u_j$ after stochastic approximation, and the AMC posterior distribution of $\overline \beta$ (similarly $\overline \nu$ and $\overline \nu_a$) given the full data. Notations for the variances are similarly defined.

We begin by decomposing this Bayes $L_2$-risk. Given the relation $\beta(\cdot)=\Gamma_0\nu(\cdot)$, it follows that
\begin{align} \label{L2_risk1}
\LL(\overline \Pi)& \leq \vertiii{\Gamma_0}^2 \EE_{u^*}\EE_{y,u} \EE_{\overline \beta\mid y,u} \left\|\overline \nu(u^*) - \nu_0(u^*) \right\|_2^2,
\end{align}
where $\vertiii{\Gamma_0}<\infty$ given Assumption \ref{prior_assumption}. Therefore, it suffices to study the Bayes $L_2$ risk of $\nu(\cdot)$.

Based on Assumption \ref{combine_assumption}, we have the following decomposition: For any $u^*\in [0,1]^d$,
\begin{align} \label{combine_subset_bias_var1}
& \EE_{\overline \nu \mid y,u} \left\|\overline \nu(u^*) - \nu_0(u^*)\right\|_2^2 \nonumber \\
={}&  \EE_{\overline \nu \mid y,u} \left \|\overline \nu(u^*) - \EE_{\overline \nu \mid y,u}\{\overline \nu(u^*)\} + \EE_{\overline \nu\mid y,u}\{\overline \nu(u^*)\} - \nu_0(u^*) \right \|_2^2 \nonumber \\
={}& \EE_{\overline \nu\mid y,u} \left\| \overline \nu(u^*) - \EE_{\overline \nu\mid y,u}\{\overline \nu(u^*)\} \right \|_2^2 + \left\| \EE_{\overline \nu\mid y,u}\{\overline \nu(u^*)\} - \nu_0(u^*) \right\|_2^2 \nonumber \\
={}& \sum_{a=1}^q {\var}_{\overline \nu_a \mid y,u} \left \{\overline \nu_a (u^*) \right \} + \left\| \EE_{\overline \nu\mid y,u}\{\overline \nu(u^*)\} - \nu_0(u^*) \right\|_2^2 \nonumber \\
\leq{}& \frac{\overline c}{k} \sum_{a=1}^q \sum_{j=1}^k {\var}_{\nu_{ja} \mid y_j,u_j}\{\nu_{ja}(u^*)\} \nonumber \\
&\quad+ \sum_{a=1}^q \Bigg\{ \frac{1}{k}\sum_{j=1}^k \left[\EE_{\nu_{ja}\mid y_j,u_j}\{\nu_{ja}(u^*)\} -\nu_{0a}(u^*) \right] + O_p\left(n^{-1/2}\right) \Bigg\}^2 \nonumber \\
\leq{}& \frac{\overline c}{k} \sum_{a=1}^q \sum_{j=1}^k {\var}_{\nu_{ja} \mid y_j,u_j}\{\nu_{ja}(u^*)\} \nonumber \\
&\quad+ 2\sum_{a=1}^q \Bigg\{ \frac{1}{k}\sum_{j=1}^k \left[\EE_{\nu_{ja}\mid y_j,u_j}\{\nu_{ja}(u^*)\} -\nu_{0a}(u^*) \right]\Bigg\}^2 + O_p\left(n^{-1}\right),
\end{align}
where the last inequality follows from $(a+b)^2\leq 2(a^2+b^2)$. Using this inequality again, we have that
\begin{align} \label{combine_subset_bias_var2_0}
& \EE_{u^*}\EE_{y,u} \EE_{\overline \nu \mid y,u} \left\|\overline \nu(u^*) - \nu_0(u^*)\right\|_2^2 \nonumber \\
\leq{}& \frac{\overline c}{k} \sum_{a=1}^q \sum_{j=1}^k \EE_{u^*}\EE_{y_j,u_j} {\var}_{\nu_{ja} \mid y_j,u_j}\{\nu_{ja}(u^*)\}  \nonumber \\
&\quad + 2 \sum_{a=1}^q \EE_{u^*}\EE_{y,u} \sum_{a=1}^q \Bigg\{ \frac{1}{k}\sum_{j=1}^k \left[\EE_{\nu_{ja}\mid y_j,u_j}\{\nu_{ja}(u^*)\} -\nu_{0a}(u^*) \right]\Bigg\}^2 + O_p\left(n^{-1}\right) \nonumber \\
={}&  \frac{\overline c}{k} \sum_{a=1}^q \sum_{j=1}^k \EE_{u^*}\EE_{y_j,u_j} {\var}_{\nu_{ja} \mid y_j,u_j}\{\nu_{ja}(u^*)\}  \nonumber \\
&+ 2 \sum_{a=1}^q \EE_{u^*}\EE_{y,u} \Bigg\{ \frac{1}{k}\sum_{j=1}^k \left[\EE_{\nu_{ja}\mid y_j,u_j}\{\nu_{ja}(u^*)\} - \EE_{\nu_{ja}, y_j\mid u_j}\{\nu_{ja}(u^*)\} \right] \nonumber \\
& + \frac{1}{k}\sum_{j=1}^k \left[\EE_{\nu_{ja}, y_j\mid u_j}\{\nu_{ja}(u^*)\} - \nu_{0a}(u^*) \right] \Bigg\}^2 + O_p\left(n^{-1}\right) \nonumber \\
\leq{}& \frac{\overline c}{k} \sum_{a=1}^q \sum_{j=1}^k \EE_{u^*}\EE_{y_j,u_j} {\var}_{\nu_{ja} \mid y_j,u_j}\{\nu_{ja}(u^*)\} \nonumber \\
&+ 4\sum_{a=1}^q \EE_{u^*}\EE_{y,u} \Bigg(\frac{1}{k}\sum_{j=1}^k \left[\EE_{\nu_{ja}\mid y_j,u_j}\{\nu_{ja}(u^*)\} - \EE_{\nu_{ja}, y_j\mid u_j}\{\nu_{ja}(u^*)\} \right] \Bigg)^2 \nonumber \\
& + 4\sum_{a=1}^q \EE_{u^*}\EE_{u}  \Bigg(\frac{1}{k}\sum_{j=1}^k \left[\EE_{\nu_{ja}, y_j\mid u_j}\{\nu_{ja}(u^*)\} - \nu_{0a}(u^*) \right] \Bigg)^2 + O_p\left(n^{-1}\right).
\end{align}

Furthermore, using the sampling independence between subsets of $u_j$ as in Assumption \ref{sampling}, the second term in \eqref{combine_subset_bias_var2_0} can be simplified as
\begin{align} \label{eq:cross2}
& 4\sum_{a=1}^q \EE_{u^*}\EE_{y,u} \Bigg(\frac{1}{k}\sum_{j=1}^k \left[\EE_{\nu_{ja}\mid y_j,u_j}\{\nu_{ja}(u^*)\} - \EE_{\nu_{ja}, y_j\mid u_j}\{\nu_{ja}(u^*)\} \right] \Bigg)^2 \nonumber \\
={}& \frac{4}{k^2}\sum_{a=1}^q  \sum_{j=1}^k \EE_{u^*} \EE_{y_j,u_j} \left[\EE_{\nu_{ja}\mid y_j,u_j}\{\nu_{ja}(u^*)\} - \EE_{\nu_{ja}, y_j\mid u_j}\{\nu_{ja}(u^*)\} \right]^2 \nonumber  \\
&~~ +  \frac{4}{k^2}\sum_{a=1}^q  \sum_{j\neq j'} \EE_{u^*} \EE_{u} \EE_{u_j|u} \EE_{u_{j'}|u}  \Bigg(\left[\EE_{y_j\mid u_j}\EE_{\nu_{ja}\mid y_j,u_j}\{\nu_{ja}(u^*)\} - \EE_{\nu_{ja}, y_j\mid u_j}\{\nu_{ja}(u^*)\} \right] \nonumber  \\
&\quad \times \left[\EE_{y_{j'}\mid u_{j'}}\EE_{\nu_{j'a}\mid y_{j'},u_{j'}}\{\nu_{j'a}(u^*)\} - \EE_{\nu_{j'a}, y_{j'}\mid u_{j'}}\{\nu_{j'a}(u^*)\} \right] \Bigg)\nonumber  \\
={}& \frac{4}{k^2}\sum_{a=1}^q  \sum_{j=1}^k \EE_{u^*} \EE_{y_j,u_j} \left[\EE_{\nu_{ja}\mid y_j,u_j}\{\nu_{ja}(u^*)\} - \EE_{\nu_{ja}, y_j\mid u_j}\{\nu_{ja}(u^*)\} \right]^2 \nonumber \\
={}& \frac{4}{k^2}\sum_{a=1}^q  \sum_{j=1}^k \EE_{u^*}\EE_{u_j} \left( {\var}_{y_j\mid u_j} \left[\EE_{\nu_{ja} \mid y_j,u_j}\{\nu_{ja}(u^*)\} \right]\right),
\end{align}
where the second term after the first equal sign is zero due to the independence between $u_j$ and $u_{j'}$ given the full data $u$ according to Assumption \ref{sampling}. The third term in \eqref{combine_subset_bias_var2_0} can be bounded from above by the Cauchy-Schwarz inequality:
\begin{align} \label{eq:cross3}
& 4\sum_{a=1}^q \EE_{u^*}\EE_{u}  \Bigg(\frac{1}{k}\sum_{j=1}^k \left[\EE_{\nu_{ja}, y_j\mid u_j}\{\nu_{ja}(u^*)\} - \nu_{0a}(u^*) \right] \Bigg)^2 \nonumber \\
&\leq \frac{4}{k} \sum_{a=1}^q \sum_{j=1}^k \EE_{u^*} \EE_{u_j} \left[\EE_{\nu_{ja}, y_j\mid u_j}\{\nu_{ja}(u^*)\} - \nu_{0a}(u^*) \right]^2.
\end{align}
We can plug in \eqref{eq:cross2}, and \eqref{eq:cross3} to \eqref{combine_subset_bias_var2_0} to further obtain that
\begin{align} \label{combine_subset_bias_var2}
& \EE_{u^*}\EE_{y,u} \EE_{\overline \nu \mid y,u} \left\|\overline \nu(u^*) - \nu_0(u^*)\right\|_2^2 \nonumber \\
\leq {}& \frac{4}{k} \sum_{a=1}^q \sum_{j=1}^k \EE_{u^*} \EE_{u_j} \left[\EE_{\nu_{ja}, y_j\mid u_j}\{\nu_{ja}(u^*)\} - \nu_{0a}(u^*) \right]^2 \nonumber \\
&+ \frac{4}{k^2}\sum_{a=1}^q  \sum_{j=1}^k \EE_{u^*}\EE_{u_j} \left( {\var}_{y_j\mid u_j} \left[\EE_{\nu_{ja} \mid y_j,u_j}\{\nu_{ja}(u^*)\} \right]\right) \nonumber \\
&+ \frac{\overline c}{k} \sum_{a=1}^q \sum_{j=1}^k \EE_{u^*}\EE_{y_j,u_j}{\var}_{\nu_{ja} \mid y_j,u_j}\{\nu_{ja}(u^*)\} + O_p\left(n^{-1}\right) \nonumber \\
\leq {}&\frac{4}{k} \sum_{j=1}^k \EE_{u^*} \EE_{u_j} \left\| \EE_{\nu_{j}, y_j\mid u_j}\{\nu_{j}(u^*)\} - \nu_{0}(u^*) \right\|_2 ^2 \nonumber \\
&+ \frac{4}{k^2} \sum_{j=1}^k \EE_{u^*}\EE_{u_j} \tr\left( {\var}_{y_j\mid u_j} \left[\EE_{\nu_{j} \mid y_j,u_j}\{\nu_{j}(u^*)\} \right]\right) \nonumber \\
&+ \frac{\overline c}{k} \sum_{j=1}^k \EE_{u^*}\EE_{y_j,u_j}\tr\left({\var}_{\nu_{j} \mid y_j,u_j}\{\nu_{j}(u^*)\}\right) + O_p\left(n^{-1}\right).
\end{align}
Now we provide the detailed expressions for three terms on the right-hand side of \eqref{combine_subset_bias_var2}. We define the following notations. Let $\Gamma_{0a}$ be the $a$th column of $\Gamma_0$, $a=1,\ldots,q$. In the derivation below, $\rho_{a}(\cdot,\cdot)$ is the correlation function with its parameter fixed at $\theta_{0a}$, $a=1,\ldots,q$. For $j=1, \ldots, k$ and $a=1, \ldots q$,
\begin{align} \label{eq:be1}
R_{ja} (u^*) &= \{\rho_{ja}(u_{j1}, u^*), \ldots, \rho_{a}(u_{jm}, u^*)\}^\T  \in \RR^{m} , \nonumber\\
R_j(u^*) &= \diag\{R^\T_{j1} (u^*), \ldots, R^\T_{jq} (u^*) \} \in \RR^{q \times qm} ,  \nonumber\\
R(u,u') &= \diag\{\rho_1(u,u'),\ldots,\rho_q(u,u')\} \in \RR^{q \times q}, \quad u,u'\in [0,1]^d, \nonumber \\
\tilde Z(\cdot) &= Z(\cdot) \Gamma_0 \in \RR^{1\times q},\quad \tilde Z_a(\cdot) = Z(\cdot)\Gamma_{0a} \in \RR, \nonumber \\
\tilde Z_{ja} &= \diag \left\{\tilde Z_a(u_{j1}), \ldots, \tilde Z_a(u_{jm}) \right\} \in \RR^{m \times m},  \quad \tilde Z_j = \left\{\tilde Z_{j1}, \ldots, \tilde Z_{jq}\right\} \in \RR^{m \times qm}, \nonumber\\
(R_{ja})_{ii'} &= \rho_a(u_{ji}, u_{ji'}), \; i,i' \in \{1, \ldots, m\}, \quad \tilde R_{jj} = \diag(R_{j1}, \ldots, R_{jq}) \in \RR^{qm \times qm}, \nonumber\\
\tilde \nu_{ja} &= \{\nu_{a}(u_{j1}), \ldots, \nu_{a}(u_{jm})\}^\T  \in \RR^{m}, \quad \tilde \nu_j^\T = (\tilde \nu_{1j}^\T, \ldots, \tilde \nu_{qj}^\T) \in \RR^{qm}, \nonumber \\
\nu_{0ja} &= \{\nu_{0a}(u_{j1}), \ldots, \nu_{0a}(u_{jm})\}^\T \in \RR^{m}.
\end{align}

Based on the subset model \eqref{subset_model} and the standard GP derivation, we have that for $a=1,\ldots,q$,
\begin{align}
& \EE_{\nu_{j}, y_j\mid u_j}\{\nu_{j}(u^*)\} - \nu_{0}(u^*) = R_{j}(u^*) \tilde Z_j^\T \left(\tilde Z_j \tilde R_{jj} \tilde Z_j^\T + \frac{\tau_0^2\lambda_n}{k} I_{m} \right)^{-1}\tilde Z_j \tilde \nu_j - \nu_0(u^*), \label{subset_bias1}  \\
& {\var}_{y_j\mid u_j} \left[\EE_{\nu_{j} \mid y_j,u_j}\{\nu_{j}(u^*)\}\right] = {\var}_{y_j\mid u_j} \left[R_{j}(u^*) \tilde Z_j^\T \left(\tilde Z_j \tilde R_{jj} \tilde Z_j^\T + \frac{\tau_0^2\lambda_n}{k} I_{m} \right)^{-1} y_j  \right] \nonumber \\
&\qquad = \tau_0^2 R_{j}(u^*) \tilde Z_j^\T \left( \tilde Z_j \tilde R_{jj} \tilde Z_j^\T + \frac{\tau_0^2\lambda_n}{k} I_{m} \right)^{-2} \tilde Z_j R_{j}(u^*)^\T, \label{subset_var1} \\
& {\var}_{\nu_{j} \mid y_j,u_j}\{\nu_{j}(u^*)\}  = \lambda_n^{-1} \bigg\{ R(u^*,u^*) - R_{j}(u^*) \tilde Z_j^\T \left(  \tilde Z_j \tilde R_{jj} \tilde Z_j^\T + \frac{\tau_0^2\lambda_n}{k} I_{m} \right)^{-1} \tilde Z_j R_{j}(u^*)^\T \bigg\}. \label{subset_var2}
\end{align}
In the following lemmas, we give upper bounds for \eqref{subset_bias1}, \eqref{subset_var1}, and \eqref{subset_var2}, respectively. We first introduce some additional notations. In Assumption \ref{z_assumption}, we assume that the largest eigenvalue of $\Omega$ is bounded above by a constant $\overline c_{\Omega}$ and its smallest eigenvalue is bounded below by a constant $\underline c_{\Omega}$, for $0<\underline c_{\Omega}< \overline c_{\Omega}<\infty$ and for all $a=1,\ldots,q$. By the definition of $\tilde Z(\cdot)=Z(\cdot)\Gamma_0$ and $\tilde Z_a(\cdot)=Z(\cdot) \Gamma_{0a}$ ($a=1,\ldots,q$) and Assumption \ref{z_assumption}, we have that $|\tilde Z_a(u)|\leq C_Z \|\Gamma_{0a}\|_2 \leq C_Z \vertiii{\Gamma_0}$ for all $a=1,\ldots,q$ and all $u\in [0,1]^d$. We define the positive constant $\tilde C_Z \equiv C_Z \vertiii{\Gamma_0}$ for the derivations below.
We define the $\RR^q$-valued functional space $L_2^q(\dd u)$ for $\mathrm{f}=(f_1,\ldots,f_q)^\T$ with $f_1,\ldots,f_q\in L_2(\dd u)$. The $q$-variate RKHS $\HH$ is defined to the space of functions with the finite $\HH$-norm $\|\mathrm{f}\|_{\HH}^2=\sum_{a=1}^q \|f_a\|_{\HH_a}^2$ if $\mathrm{f}=(f_1,\ldots,f_q)^\T\in L_2^q(\dd u)$ and $f_a\in \HH_a$ for $a=1,\ldots,q$. This $q$-variate RKHS satisfies the reproducing property that if $\mathrm{f}\in \HH$, then for any $u\in [0,1]^d$ and any $\mathrm{c}\in \RR^q$, $\langle \mathrm{f}, R(u,\cdot)\mathrm{c} \rangle_{\HH} = \mathrm{f}(u)^\T \mathrm{c}$; see Section 3.2 of \citet{Alvetal12}. For abbreviation, we let $\Lambda(c,h) = \sum_{i=1}^{h} \mu_{i*}/(\mu_{i*} + c) $ for any $c>0$, any positive integer $h$, and $\mu_{i*}$ as defined in Assumption \ref{eigen_assumption}. For $a=1,\ldots,q$ and any positive integer $h$, let $\tr(\rho_a)=\sum_{i=1}^{\infty}\mu_{ai}$, $\tr(\rho_{a,h})=\sum_{i=h+1}^{\infty}\mu_{ai}$, $\Tr(\rho)=\sum_{a=1}^q \tr(\rho_a)$, and ${\Tr}(\rho,h)=\sum_{a=1}^q \tr(\rho_{a,h})$.

\begin{lemma} \label{lem:bias_bound}
Suppose that Assumptions \ref{sampling}--\ref{combine_assumption} hold. Then for every $j=1,\ldots,k$,
\begin{align*}
& \EE_{u^*} \EE_{u_j} \left\| \EE_{\nu_{j}, y_j\mid u_j}\{\nu_{j}(u^*)\} - \nu_{0}(u^*) \right\|_2^2 \\
\leq{}& 8\frac{\tau_0^2\lambda_n}{\underline c_{\Omega}km} \|\nu_0\|_{\HH}^2 + 8 \frac{ km}{\tau_0^2 \lambda_n} C_{\varphi}^4 \tilde C_Z^4 \overline c_{\Omega} \underline c_{\Omega}^{-2} \|\nu_0\|_{\HH}^2 \Tr(\rho) \Tr(\rho,\bar h) +  \mu_{(\bar h+1)*}\|\nu_0\|_{\HH}^2  \nonumber \\
&\quad + 2q\bar h \|\nu_0\|_{\HH}^2 \Tr(\rho) \exp\left\{-\frac{m}{8(B^2+B) } \right\}, \nonumber
\end{align*}
where $B=C_{\varphi}^2 \tilde C_Z^2 \underline c_{\Omega}^{-1} q \Lambda\left(\tau_0^2\lambda_n/(\underline c_{\Omega}km),\bar h\right)+ 1$.
\end{lemma}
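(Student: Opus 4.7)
The plan is to analyze the subset posterior bias in \eqref{subset_bias1} through a truncated Mercer representation of the true latent vector and a kernel-ridge regression decomposition. Writing $\nu_{0a}(u)=\sum_{h=1}^{\infty} c_{ah}\varphi_{ah}(u)$ with $c_{ah}=\langle \nu_{0a},\varphi_{ah}\rangle_{L_2(\dd u)}$, Assumption \ref{w0_assumption} together with Assumption \ref{eigen_assumption}(ii) gives $\sum_{h=1}^{\infty} c_{ah}^2/\mu_{ah}=\|\nu_{0a}\|_{\HH_a}^2<\infty$. Split $\nu_{0a}=\nu_{0a}^{\bar h}+\nu_{0a}^{\bar h,\perp}$, where $\nu_{0a}^{\bar h}$ keeps only the first $\bar h$ eigenmodes; the $L_2$ tail $\EE_{u^*}\{\nu_{0a}^{\bar h,\perp}(u^*)^2\}=\sum_{h>\bar h}c_{ah}^2\leq \mu_{(\bar h+1)*}\|\nu_{0a}\|_{\HH_a}^2$ will contribute the third term $\mu_{(\bar h+1)*}\|\nu_0\|_{\HH}^2$ after summing over $a$. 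Correspondingly decompose $\tilde \nu_j=\tilde \nu_j^{\bar h}+\tilde \nu_j^{\bar h,\perp}$ entrywise; the bias \eqref{subset_bias1} becomes the sum of three pieces: (i) the ridge-regression bias applied to the truncated signal $\tilde\nu_j^{\bar h}$ minus the truncated truth at $u^*$, (ii) the linear smoother applied to the tail $\tilde\nu_j^{\bar h,\perp}$, and (iii) the truncation error at $u^*$ itself.

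For piece (i), identify the truncated signal with the $q\bar h$-dimensional parametric model: $\nu_{0a}^{\bar h}(u)=\Phi_a(u)^\T c_a$ with $\Phi_a(u)=(\varphi_{a1}(u),\ldots,\varphi_{a\bar h}(u))^\T$ and $c_a=(c_{a1},\ldots,c_{a\bar h})^\T$, so that $\sum_a \tilde Z_a(u) \nu_{0a}^{\bar h}(u)=W(u)^\T c$ where $c=(c_1^\T,\ldots,c_q^\T)^\T\in\RR^{q\bar h}$. Writing the population Gram matrix $\Omega=\EE_u\{W(u)W(u)^\T\}$ from Assumption \ref{z_assumption} and its empirical analogue $\Omega_j=m^{-1}\sum_i W(u_{ji})W(u_{ji})^\T$, the posterior mean applied to the truncated signal equals a finite-dimensional ridge estimator whose deterministic (bias-only) error admits the standard decomposition
\[
\bigl\|W(u^*)^\T\{(m\Omega_j+\tfrac{\tau_0^2\lambda_n}{k}I)^{-1}m\Omega_j-I\}c\bigr\|_2^2 = \tfrac{(\tau_0^2\lambda_n/k)^2}{m^2}\bigl\|W(u^*)^\T(\Omega_j+\tfrac{\tau_0^2\lambda_n}{km}I)^{-2}\tfrac{1}{m}(\cdot)\bigr\|^{2}.
\]
On the event $\mathcal{E}$ that $\Omega_j\succeq \tfrac{1}{2}\Omega$ (so that $\Omega_j+\tfrac{\tau_0^2\lambda_n}{km}I\succeq \tfrac{1}{2}\underline c_{\Omega} I$), combine this with the reproducing-kernel identity $\sum_a c_a^\T\mathrm{diag}(\mu_{a1}^{-1},\ldots,\mu_{a\bar h}^{-1})c_a\leq \|\nu_0\|_{\HH}^2$ and the uniform bound $|W_\ell(u^*)|\leq C_{\varphi}\tilde C_Z$ to obtain the first term $8\tau_0^2\lambda_n/(\underline c_{\Omega}km)\|\nu_0\|_{\HH}^2$. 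The matrix Chernoff inequality applied to the i.i.d. sum $m\Omega_j$, using the almost-sure operator bound $\vertiii{W(u)W(u)^\T}\leq q\bar h C_\varphi^2 \tilde C_Z^2$, produces the event $\mathcal{E}^c$ with probability bounded by the exponential factor appearing in the fourth term; on $\mathcal{E}^c$, a worst-case bound using $\|\nu_0\|_{\HH}^2\Tr(\rho)$ pays for that failure probability and furnishes the fourth term (the constant $B$ is exactly the product of uniform feature bound and effective-rank factor entering the matrix Chernoff rate).

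Piece (ii) is the technical crux. The smoother evaluated at the tail vector $\tilde\nu_j^{\bar h,\perp}$ can be written as $R_j(u^*)\tilde Z_j^\T (\tilde Z_j\tilde R_{jj}\tilde Z_j^\T+\tfrac{\tau_0^2\lambda_n}{k}I)^{-1}\tilde Z_j \tilde\nu_j^{\bar h,\perp}$. Its $L_2(\dd u^*)$ norm is controlled by pulling the outer $\ell_2$ norm through the operator bound $\vertiii{(\tilde Z_j\tilde R_{jj}\tilde Z_j^\T+\tfrac{\tau_0^2\lambda_n}{k}I)^{-1}}\leq k/(\tau_0^2\lambda_n)$ and applying Cauchy--Schwarz twice: once on the sum defining $\tilde Z_j^\T\tilde Z_j$ to get a factor of $C_\varphi^2\tilde C_Z^4\overline c_{\Omega}\underline c_{\Omega}^{-2}\|\nu_0\|_{\HH}^2$ via $\Omega$-eigenvalue control, and once on $\EE_{u_j}[\|\tilde\nu_j^{\bar h,\perp}\|_2^2]=m\sum_{a}\sum_{h>\bar h}c_{ah}^2\leq m\Tr(\rho,\bar h)\|\nu_0\|_{\HH}^2/\mu_{(\bar h+1)*}$ type bound, and finally using $\Tr(\rho)$ to absorb a cross-coupling factor. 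Bookkeeping produces the second term $8km/(\tau_0^2\lambda_n) C_\varphi^4 \tilde C_Z^4 \overline c_\Omega\underline c_\Omega^{-2}\|\nu_0\|_{\HH}^2\Tr(\rho)\Tr(\rho,\bar h)$.

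The main obstacle, and the delicate balancing act, is piece (ii): the tail of the Mercer series enters the estimator through the noise channel of the inversion and gains a factor $km/(\tau_0^2\lambda_n)$, which is exactly the reciprocal of the ridge-bias factor in piece (i); getting the correct $\Tr(\rho)\Tr(\rho,\bar h)$ structure requires tracking how $\tilde Z$ couples the different coordinates $a=1,\ldots,q$ together with the $\Omega$-eigenvalue bounds, rather than naively using component-wise Cauchy--Schwarz which would lose a factor of $q$ and not yield the desired tail-kernel trace. Combining pieces (i)--(iii) and averaging over $u^*$ and $u_j$ (splitting the expectation over $\mathcal{E}$ and $\mathcal{E}^c$) gives the stated bound.
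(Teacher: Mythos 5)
Your overall architecture --- truncate at $\bar h$ eigenmodes, reduce the head to a finite-dimensional problem governed by $\Omega$, control the empirical Gram matrix by a matrix concentration inequality, and account separately for a ridge-bias term, a tail-leakage term, a truncation term, and an exceptional-event term --- matches the shape of the paper's argument and of the four terms in the bound. But there are two genuine gaps.

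First, your concentration event $\Omega_j \succeq \tfrac12\Omega$ cannot hold. With $\bar h=\lceil n^{3d/(2\vv-d)}\rceil$ the dimension $q\bar h$ typically exceeds the subset size $m$ (e.g.\ $\vv=2d$ gives $\bar h= n$ while $m$ may be as small as $n^{1/2+\eta}$), so the empirical Gram matrix $\Omega_j=m^{-1}\sum_i W_{ji}W_{ji}^\T$ has rank at most $m<q\bar h$ and cannot dominate $\tfrac12\Omega$, whose smallest eigenvalue is bounded below by $\underline c_\Omega/2$. Relatedly, your almost-sure summand bound $\vertiii{W(u)W(u)^\T}\le q\bar h\, C_\varphi^2\tilde C_Z^2$ would put $q\bar h$, not the effective dimension, into the exponential rate. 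The paper instead controls the \emph{weighted} deviation $\vertiii{Q^{-1}\Omega^{-1}(m^{-1}\Phi_j^\T\tilde Z_j^\T\tilde Z_j\Phi_j-\Omega)Q^{-1}}$ with $Q=(I+\tfrac{\tau_0^2\lambda_n}{km}\Omega^{-1}M^{-1})^{1/2}$; the preconditioning by $Q^{-1}$ is what replaces $q\bar h$ by $\Lambda(\tau_0^2\lambda_n/(\underline c_\Omega km),\bar h)$ in the constant $B$ and makes the concentration valid even when $\bar h\gg m$. Without this weighting the exceptional-event probability cannot be made small and the fourth term of the lemma is not obtained.

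Second, your decompositions of pieces (i) and (ii) skip the head--tail coupling through the kernel. The smoother $R_j(u^*)\tilde Z_j^\T(\tilde Z_j\tilde R_{jj}\tilde Z_j^\T+\tfrac{\tau_0^2\lambda_n}{k}I)^{-1}\tilde Z_j$ is built from the \emph{full} kernel $\tilde R_{jj}=\Phi_jM\Phi_j^\T+\tilde R_{jj}^{\uparrow}$, so applying it to the truncated signal does not reproduce the $q\bar h$-dimensional ridge estimator based on $\Omega_j$, and applying it to the tail does not decouple from the head; you never quantify this cross-talk. The paper sidesteps it by differentiating the variational (kernel ridge) objective to obtain the exact identity $(\hat\Sigma_j+\tfrac{\tau_0^2\lambda_n}{km}I)\EE_{\nu_j,y_j\mid u_j}(\Delta_j)=-\tfrac{\tau_0^2\lambda_n}{km}\nu_0$, which yields the global bound $\|\EE_{\nu_j,y_j\mid u_j}(\Delta_j)\|_{\HH}\le\|\nu_0\|_{\HH}$; projecting that identity onto the first $\bar h$ eigenfunctions produces a finite-dimensional system whose only coupling term is $m^{-1}\Phi_j^\T\tilde Z_j^\T\tilde Z_j v_j$ with $v_j$ the \emph{tail of the bias} at the design points, and that term is then bounded using the global RKHS inequality, giving the $\Tr(\rho)\Tr(\rho,\bar h)$ structure of the second term. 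Your proposal decomposes the truth $\nu_0$ rather than the bias, has no analogue of the global RKHS bound, and the crude route you sketch for piece (ii) (operator norm $k/(\tau_0^2\lambda_n)$ times Cauchy--Schwarz) does not produce the stated second term; the ``bookkeeping'' you defer is where the actual argument lives.
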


\begin{proof}[Proof of Lemma \ref{lem:bias_bound}]
Consider the subset posterior distribution of $\nu_j(\cdot)$ on the subset $j$. The $j$th subset posterior distribution of $\nu_j(u^*)$ has the mean
\begin{align} \label{eq:t1e1}
\hat\nu_j(u^*) &= R_j(u^*) \tilde Z_j^\T \left(\tilde Z_j \tilde R_{jj} \tilde Z_j^\T + \frac{\tau_0^2\lambda_n}{k} I_m \right)^{-1}  y_j.
\end{align}
We view $\hat\nu_j(u^*)=\{\hat\nu_{j1}(u^*),\ldots,\hat\nu_{jq}(u^*)\}^\T$ as a $q$-variate function of $u^*$. Define the $[0,1]^d \mapsto \RR^q$ operators $\Delta_j$ ($j = 1, \ldots, k$) and $\Delta$ as
\begin{align} \label{eq:Deltaj}
\Delta_j(\cdot) &= R_j(\cdot) \tilde Z_j^\T \left(\tilde Z_j \tilde R_{jj} \tilde Z_j^\T + \frac{\tau_0^2\lambda_n}{k} I_m \right)^{-1}  y_j - \nu_0(\cdot) \equiv \hat\nu_j (\cdot) - \nu_0(\cdot), \nonumber\\
\Delta(\cdot)
&= \frac{1}{k} \sum_{j=1}^k \left\{ \hat\nu_j (\cdot) - \nu_0(\cdot) \right\}
= \frac{1}{k} \sum_{j=1}^k \Delta_j(\cdot).
\end{align}
For $a=1,\ldots,q$, $\Delta_{ja}$ and $\Delta_{a}$ denote the $a$th components of $\Delta_j$ and $\Delta$, respectively. We can recognize that for $j = 1,\ldots,k$, $\hat\nu_j(\cdot)$ in \eqref{eq:Deltaj} is the solution of $q$-variate function to the optimization problem (see Section 3.2 in \citealt{Alvetal12})
\begin{align} \label{eq:wjopt}
{\text{argmin}}_{\nu \in \HH } \left[ \sum_{i=1}^m \left\{ y(u_{ji}) - \tilde Z(u_{ji}) \nu(u_{ji})\right\}^2 + \frac{\tau_0^2\lambda_n}{k} \| \nu \|_{\HH}^2 \right].
\end{align}
Now fix $a\in \{1,\ldots,q\}$. Taking the Frech\'et derivative of \eqref{eq:wjopt} with respect to $\nu_a$ and plugging in $\hat\nu_j$ gives
\begin{align} \label{eq:ba21}
& \sum_{i=1}^m  \left\{ \tilde Z (u_{ji}) \hat\nu_j(u_{ji}) - y(u_{ji}) \right\} \tilde Z_a(u_{ji}) \rho_a(u_{ji},\cdot) + \frac{\tau_0^2 \lambda_n}{k} \hat \nu_{ja}(\cdot) =0 .
\end{align}
Stacking this up across $a=1,\ldots,q$ gives
\begin{align} \label{eq:ba22}
& \sum_{i=1}^m  \left\{ \tilde Z (u_{ji}) \hat\nu_j(u_{ji}) - y(u_{ji}) \right\} R(u_{ji},\cdot) \tilde Z(u_{ji})^\T + \frac{\tau_0^2 \lambda_n}{k} \hat \nu_{j}(\cdot) =0 .
\end{align}
Since the true model assumes that
$$y(u_{ji}) = \tilde Z(u_{ji}) \nu_0(u_{ji}) + \epsilon(u_{ji}) = \langle \nu_0, R(u_{ji},\cdot) \tilde Z(u_{ji})^\T  \rangle_{\HH} + \epsilon(u_{ji}),$$
so
\begin{align} 
\tilde Z(u_{ji}) \hat \nu(u_{ji}) - y(u_{ji}) &= \tilde Z(u_{ji}) \{\hat \nu_j(u_{ji}) - \nu_0(u_{ji})) - \epsilon(u_{ji}) \nonumber \\
&= \langle \Delta_j(\cdot), R(u_{ji},\cdot) \tilde Z(u_{ji})^\T  \rangle_{\HH} - \epsilon(u_{ji}), \nonumber
\end{align}
Hence, we take expectations $\EE_{\nu_j, y_j \mid u_j}$ on both sides of \eqref{eq:ba21} and obtain that
\begin{align} \label{eq:ba6}
0={}& \sum_{i=1}^m  \EE_{\nu_j, y_j \mid u_j} \left\{ \tilde Z (u_{ji}) \hat\nu_j(u_{ji}) - y(u_{ji}) \right\} R(u_{ji},\cdot) \tilde Z(u_{ji})^\T + \frac{\tau_0^2 \lambda_n}{k} \EE_{\nu_j, y_j \mid u_j} \left\{\hat \nu_{j}(\cdot)\right\} \nonumber \\
={}& \sum_{i=1}^m \langle \EE_{\nu_j, y_j \mid u_j} \left\{\Delta_j(\cdot)\right\}, R(u_{ji},\cdot) \tilde Z(u_{ji})^\T \rangle_{\HH} R(u_{ji},\cdot) \tilde Z(u_{ji})^\T \nonumber \\
& - \EE_{\nu_j, y_j \mid u_j}\{\epsilon(u_{ji})\} R(u_{ji},\cdot) \tilde Z(u_{ji})^\T  + \frac{\tau_0^2 \lambda_n}{k} \EE_{ \nu_j, y_j \mid u_j} \left\{\hat \nu_{j}(\cdot)\right\} \nonumber \\
={}& \sum_{i=1}^m \left\langle \EE_{\nu_j, y_j \mid u_j} \left\{\Delta_j(\cdot)\right\}, R(u_{ji},\cdot) \tilde Z(u_{ji})^\T \right\rangle_{\HH} R(u_{ji},\cdot) \tilde Z(u_{ji})^\T  + \frac{\tau_0^2 \lambda_n}{k} \EE_{\nu_j, y_j \mid u_j} \left\{\hat \nu_{j}(\cdot)\right\}.
\end{align}
Using \eqref{eq:Deltaj}, $\EE_{\nu_j, y_j \mid u_j} \left\{\hat\nu_j(\cdot)\right\}   = \EE_{\nu_j, y_j \mid u_j} \left\{\Delta_j(\cdot)\right\} + \nu_0$, and dividing by $m$ in \eqref{eq:ba6}, we obtain that
\begin{align} \label{eq:16}
& \frac{1}{m}\sum_{i=1}^m  \left\langle \EE_{\nu_j, y_j \mid u_j} \left\{\Delta_j(\cdot)\right\}, R(u_{ji},\cdot) \tilde Z(u_{ji})^\T \right\rangle_{\HH} R(u_{ji},\cdot) \tilde Z(u_{ji})^\T  + \frac{\tau_0^2 \lambda_n}{km} \EE_{\nu_j, y_j \mid u_j} \left\{\Delta_j(\cdot)\right\} \nonumber \\
={}& - \frac{\tau_0^2 \lambda_n}{km} \nu_0(\cdot).
\end{align}
If we define the $j$th subset covariance operator $\hat \Sigma_j = m^{-1} \sum_{j=1}^m  R(u_{ji},\cdot) \tilde Z(u_{ji})^\T \otimes R(u_{ji},\cdot) \tilde Z(u_{ji})^\T$, then \eqref{eq:16} reduces to
\begin{align} \label{eq:17}
& \left( \hat \Sigma_j  +  \frac{\tau_0^2 \lambda_n}{km} I_{m} \right) \EE_{\nu_j, y_j \mid u_j} \left\{\Delta_j(\cdot)\right\} = - \frac{\tau_0^2\lambda_n}{km} \nu_0(\cdot) \nonumber \\
&\implies \|\EE_{\nu_j, y_j \mid u_j}(\Delta_j) \|_{\HH} \leq \|  \nu_0 \|_{\HH}, \quad j = 1, \ldots, k,
\end{align}
where the last inequality follows because $\hat \Sigma_j$ is a positive semi-definite matrix.

The rest of the proof finds an upper bound for
$$\left\| \EE_{\nu_j,y_j \mid u_j}(\Delta_j) \right\|_2^2 = \sum_{a=1}^q \left\| \EE_{\nu_{ja},y_j \mid u_j}(\Delta_{ja}) \right\|_2^2.$$
The main idea is to reduce this problem to a finite dimensional one indexed by a chosen $\bar h=\lceil \frac{2\vv}{2\vv-d}\rceil \in \NN$ as specified in Assumption \ref{z_assumption}. For each $j =1, \ldots, k$ and $a=1,\ldots,q$, let $\delta_{ja} = \left(\delta_{ja1}, \ldots, \delta_{ja\bar h}, \delta_{ja(\bar h+1)}, \ldots, \delta_{j a\infty}\right)^\T$, such that
\begin{align} \label{eq:ba1}
&\EE_{\nu_{ja},y_j \mid u_j} \left\{\Delta_{ja}(\cdot)\right\} = \sum_{h=1}^{\infty} \delta_{jah} \phi_{ah}(\cdot), \quad \delta_{jah} = \left\langle \EE_{\nu_{ja},y_j \mid u_j} \left\{\Delta_{ja}(\cdot)\right\}, \phi_h(\cdot) \right\rangle_{L_2(\dd u)},\nonumber \\
& \quad \left\| \EE_{\nu_{ja},y_j \mid u_j} (\Delta_j) \right\|_2^2 = \sum_{h=1}^{\infty} \delta^2_{jah} .
\end{align}
Define the vectors
$$\delta_{ja}^{\downarrow}=\left(\delta_{ja1}, \ldots, \delta_{ja\bar h}\right)^\T,\quad \delta_{ja}^{\uparrow}=\left(\delta_{ja(\bar h+1)}, \ldots, \delta_{ja\infty}\right)^\T,$$
so $\| \EE_{\nu_{ja},y_j \mid u_j} (\Delta_j) \|_2^2 = \| \delta_{ja}^{\downarrow} \|_2^2 + \| \delta_{ja}^{\uparrow} \|_2^2$ and we upper bound $\| \EE_{\nu_{ja},y_j \mid u_j} (\Delta_{ja}) \|_2^2$ by separately upper bounding $\| \delta_{ja}^{\downarrow} \|_2^2$ and $\| \delta_{ja}^{\uparrow} \|_2^2 $. Using the expansion $\rho_a(u,u') = \sum_{h=1}^{\infty} \mu_h \varphi_{ah}(u) \varphi_{ah}(u')$ for any $u,u' \in [0,1]^d$, we have the following upper bound for $\| \delta_{ja}^{\downarrow} \|_2^2$ and $\| \delta_{ja}^{\uparrow} \|_2^2$:
\begin{align}
\left\| \delta_{ja}^{\downarrow} \right\|_2^2 &= \frac{\mu_{a1}}{\mu_{a1}} \sum_{h=1}^{\bar h} \delta_{jah}^2 \overset{(i)}{\leq} \mu_{a1} \sum_{h=1}^{\bar h} \frac{\delta_{jah}^2}{\mu_{ah}} \nonumber \\
&\leq \mu_{a1} \sum_{h=1}^{\infty} \frac{\delta_{jah}^2}{\mu_{ah}} \overset{(ii)}{=} \mu_{a1} \left\| \EE_{\nu_{ja},y_j \mid u_j} (\Delta_{ja}) \right\|_{\HH_a}^2  \overset{(iii)}{\leq} \tr(\rho_a) \|\nu_{0a}\|_{\HH_a}^2, \label{eq:ba2} \\
\left\| \delta_{ja}^{\uparrow} \right\|_2^2  &= \frac{\mu_{a(\bar h+1)}}{\mu_{a(\bar h+1)}} \sum_{h=\bar h+1}^{\infty} \delta_{jah}^2
\leq \mu_{a(\bar h+1)} \sum_{h=\bar h+1}^{\infty} \frac{\delta_{jah}^2} {\mu_{ah}} \nonumber \\
& \overset{(iv)}{\leq} \mu_{a(\bar h+1)} \left\| \EE_{\nu_{ja},y_j \mid u_j} (\Delta_{ja}) \right\|_{\HH_a}^2 \label{eq:ba3}
\end{align}
where $(i)$ follows from the decreasing eigenvalues $\mu_{a1}\geq \mu_{a2}\geq \ldots$, $(ii)$ and $(iv)$ follow because $\| \EE_{\nu_{ja},y_j \mid u_j} (\Delta_{ja}) \|_{\HH_a}^2 = \sum_{h=1}^{\infty} \delta_{jah}^2/\mu_{ah}$, $(iii)$ follow from $\mu_{a1}\leq \tr(\rho_a)$.

Let
\begin{align*}
&\delta_j =(\delta_{j1}^\T,\ldots,\delta_{jq}^\T)^\T , \quad
\delta_j^{\downarrow} = (\delta_{j1}^{\downarrow \T},\ldots,\delta_{jq}^{\downarrow \T})^\T ,\quad
\delta_j^{\uparrow} = (\delta_{j1}^{\uparrow \T},\ldots,\delta_{jq}^{\uparrow \T})^\T .
\end{align*}
We now derive a more refined upper bound than \eqref{eq:ba2} for $\| \delta_{j}^{\downarrow} \|_2^2 = \sum_{a=1}^q \| \delta_{ja}^{\downarrow} \|_2^2$, such that the upper bound converges to zero as $m\to\infty$. For a given positive integer $\bar h$, each $a=1,\ldots,q$, and each $j=1,\ldots,k$, let $M_a = \diag(\mu_{a1}, \ldots, \mu_{a\bar h})\in \RR^{\bar h \times \bar h}$, $M=\diag\{M_1,\ldots,M_q\} \in \RR^{q\bar h \times q\bar h}$. Let $\Phi_j \in \RR^{qm \times q\bar h}$ and $\Phi_{ja}\in \RR^{m\times \bar h}$ be matrices such that
\begin{align} \label{eq:ba4}
&(\Phi_{ja})_{ih} =  \varphi_{ah}(u_{ji}), ~~i = 1, \ldots, m, ~~ h = 1, \ldots, \bar h, ~~ j = 1, \ldots, k, \nonumber\\
&\Phi_j =\diag\{\Phi_{j1},\ldots,\Phi_{jq}\} \in \RR^{qm\times q\bar h}.
\end{align}
Let $\nu_{0a}(\cdot) = \sum_{h=1}^{\infty} \zeta_{0ah} \varphi_{ah}(\cdot)$ with $\zeta_{0ah}=\langle \nu_{0a},\varphi_{ah}\rangle_{L_2(\dd u)}$ for $h=1,2,\ldots$ and $a=1,\ldots,q$. Also define the tail error vector $v_{ja} = (v_{ja1},\ldots, v_{jam})^\T \in \RR^m$ and $v_j=\left(v_{j1}^\T,\ldots,v_{jq}^\T\right)^\T \in \RR^{qm}$ ($j=1, \ldots, k$) such that
\begin{align*}
v_{jai} = \sum_{h = \bar h+1}^{\infty} \delta_{jah} \varphi_{ah}(u_{ji}), \quad i = 1, \ldots, m.
\end{align*}
For any $g \in \{1, \ldots, \bar h\}$ and any $a\in \{1,\ldots,q\}$, we take the $\HH_a$-inner product between $\varphi_{ag}$ and the $a$th component of \eqref{eq:16} and obtain that for $j = 1, \ldots, k$,
\begin{align} \label{eq:18}
& \frac{1}{m}\sum_{i=1}^m  \left\langle \EE_{\nu_j, y_j \mid u_j} \left\{\Delta_{j}(\cdot)\right\}, R(u_{ji},\cdot) \tilde Z(u_{ji})^\T \right\rangle_{\HH} \left\langle \rho_a(u_{ji},\cdot) \tilde Z_a(u_{ji}), \varphi_{ag}(\cdot) \right\rangle_{\HH_a}  \nonumber \\
&+ \frac{\tau_0^2 \lambda_n}{km} \left\langle \EE_{\nu_j, y_j \mid u_j} \left\{\Delta_{ja}(\cdot)\right\}, \varphi_{ag}(\cdot) \right\rangle_{\HH_a} = - \frac{\tau_0^2\lambda_n}{km} \langle \nu_{0a}, \varphi_{ag}  \rangle_{\HH_a},
\end{align}
which implies that
\begin{align} \label{eq:19}
& \frac{1}{m}\sum_{i=1}^m \left[\sum_{b=1}^q \tilde Z_b(u_{ji})\EE_{\nu_j, y_j \mid u_j} \left\{\Delta_{jb}(u_{ji})\right\}\right] \tilde Z_a(u_{ji}) \varphi_{ag}(u_{ji}) + \frac{\tau_0^2\lambda_n}{km}  \frac{\delta_{jag}}{\mu_{ag}} = - \frac{\tau_0^2\lambda_n}{km}  \frac{\zeta_{0ag}}{\mu_{ag}}.
\end{align}
The first term on the left-hand side of \eqref{eq:19} can be rewritten as
\begin{align} \label{eq:20}
& \frac{1}{m}\sum_{i=1}^m \left[\sum_{b=1}^q \tilde Z_b(u_{ji})\EE_{\nu_j, y_j \mid u_j} \left\{\Delta_{jb}(u_{ji})\right\}\right] \tilde Z_a(u_{ji}) \varphi_{ag}(u_{ji}) \nonumber \\
={}& \frac{1}{m}\sum_{i=1}^m \left[\sum_{b=1}^q \tilde Z_b(u_{ji}) \sum_{h=1}^{\bar h} \delta_{jbh}\varphi_{bh}(u_{ji}) \right] \tilde Z_a(u_{ji}) \varphi_{ag}(u_{ji}) \nonumber \\
&+\frac{1}{m}\sum_{i=1}^m \left[\sum_{b=1}^q \tilde Z_b(u_{ji}) \sum_{h=\bar h+1}^{\infty} \delta_{jbh}\varphi_{bh}(u_{ji}) \right] \tilde Z_a(u_{ji}) \varphi_{ag}(u_{ji}) \nonumber \\
={}& \frac{1}{m}\sum_{i=1}^m  \sum_{b=1}^q \left(\tilde Z_{ja} \Phi_{ja}\right)_{ig} \left(\tilde Z_{jb} \Phi_{jb} \delta_{jb}^{\downarrow}\right)_{i} + \frac{1}{m}\sum_{i=1}^m  \left(\tilde Z_{ja} \Phi_{ja}\right)_{ig} \left(\tilde Z_{j} v_{j}\right)_{i} \nonumber \\
={}& \frac{1}{m} \left(\Phi_{ja}^\T \tilde Z_{ja}^\T \sum_{b=1}^q\tilde Z_{jb} \Phi_{jb} \delta_{jb}^{\downarrow} \right)_g + \frac{1}{m} \left(\Phi_{ja}^\T \tilde Z_{ja}^\T \tilde Z_{j} v_{j}\right)_g \nonumber \\
={}& \frac{1}{m} \left(\Phi_{ja}^\T \tilde Z_{ja}^\T \tilde Z_{j} \Phi_{j} \delta_{j}^{\downarrow} \right)_g + \frac{1}{m} \left(\Phi_{ja}^\T \tilde Z_{ja}^\T \tilde Z_{j} v_{j}\right)_g,
\end{align}
which implies that in \eqref{eq:19}, for $g=1,\ldots,\bar h$ and $a=1,\ldots,q$,
\begin{align*}
& \frac{1}{m} \left(\Phi_{ja}^\T \tilde Z_{ja}^\T \tilde Z_{j} \Phi_{j} \delta_{j}^{\downarrow} \right)_g + \frac{1}{m} \left(\Phi_{ja}^\T \tilde Z_{ja}^\T \tilde Z_{j} v_{j}\right)_g + \frac{\tau_0^2\lambda_n}{km}  \frac{\delta_{jag}}{\mu_{ag}} =-\frac{\tau_0^2\lambda_n}{km}  \frac{\zeta_{0ag}}{\mu_{ag}} .
\end{align*}
Let $\zeta_{0a}^{\downarrow}=(\zeta_{0a1},\ldots,\zeta_{0a\bar h})^\T$ and $\zeta_0^{\downarrow} = (\zeta_{01}^{\downarrow \T},\ldots,\zeta_{0q}^{\downarrow\T})^\T \in \RR^{q\bar h}$.  Stacking the last display over $g=1,\ldots,\bar h$ gives
\begin{align} 
& \frac{1}{m} \Phi_{ja}^\T \tilde Z_{ja}^\T \tilde Z_{j} \Phi_{j} \delta_{j}^{\downarrow} + \frac{1}{m} \Phi_{ja}^\T \tilde Z_{ja}^\T \tilde Z_{j} v_{j} + \frac{\tau_0^2\lambda_n}{km} M_a^{-1} \delta_{ja}^{\downarrow} = -\frac{\tau_0^2\lambda_n}{km}  M_a^{-1}\zeta_{0a}^{\downarrow} . \nonumber
\end{align}
Then stacking this over $a=1,\ldots,q$ gives
\begin{align} 
& \frac{1}{m} \Phi_{j}^\T \tilde Z_{j}^\T \tilde Z_{j} \Phi_{j} \delta_{j}^{\downarrow} + \frac{1}{m} \Phi_{j}^\T \tilde Z_{j}^\T \tilde Z_{j} v_{j} + \frac{\tau_0^2\lambda_n}{km} M^{-1} \delta_{j}^{\downarrow} = -\frac{\tau_0^2\lambda_n}{km}  M^{-1}\zeta_{0}^{\downarrow}, \nonumber
\end{align}
which implies that
\begin{align} \label{eq:21}
&\left( \frac{1}{m}\Phi_{j}^\T \tilde Z_{j}^\T \tilde Z_{j} \Phi_{j} + \frac{\tau_0^2\lambda_n}{km} M^{-1}  \right)  \delta_j^{\downarrow} = -\frac{\tau_0^2\lambda_n}{km}  M^{-1}\zeta_{0}^{\downarrow}  - \frac{1}{m} \Phi_{j}^\T \tilde Z_{j}^\T \tilde Z_{j} v_{j}.
\end{align}
The proof is completed by showing that the right hand side expression in \eqref{eq:21} gives an upper bound for $\| \delta_j^{\downarrow} \|_2^2$. Using the $\Omega$ matrix defined in Assumption \ref{z_assumption}, we define the matrix $Q = \left(I_{q\bar h} + \frac{\tau_0^2\lambda_n}{km} \Omega^{-1} M^{-1} \right)^{1/2} \in \RR^{q\bar h\times q\bar h}$. Then
\begin{align*}
&\frac{1}{m}\Phi_{j}^\T \tilde Z_{j}^\T \tilde Z_{j} \Phi_{j}  + \frac{\tau_0^2\lambda_n}{km} M^{-1} = \Omega  +   \frac{\tau_0^2\lambda_n}{km} M^{-1} +   \frac{1}{m}\Phi_{j}^\T \tilde Z_{j}^\T \tilde Z_{j} \Phi_{j}  - \Omega \\
&= \Omega Q \left\{ I_{q\bar h} + Q^{-1}\Omega^{-1} \left( \frac{1}{m}\Phi_{j}^\T \tilde Z_{j}^\T \tilde Z_{j} \Phi_{j} - \Omega \right) Q^{-1}\right\} Q.
\end{align*}
and using this in \eqref{eq:21} gives
\begin{align} \label{eq:23}
&\left\{ I_{q\bar h} + Q^{-1}\Omega^{-1} \left(\frac{1}{m}\Phi_{j}^\T \tilde Z_{j}^\T \tilde Z_{j} \Phi_{j} - \Omega\right) Q^{-1}\right\} Q  \delta_j^{\downarrow} \nonumber\\
={}& - \frac{\tau_0^2\lambda_n}{km} Q^{-1}\Omega^{-1} M^{-1} \zeta_0^{\downarrow}  -  \frac{1}{m} Q^{-1}\Omega^{-1} \Phi_{j}^\T \tilde Z_{j}^\T \tilde Z_{j} v_{j}.
\end{align}
Now we define the event
\begin{align}\label{eset1}
& \Ecal_{j1} = \left\{ \vertiii{ Q^{-1}\Omega^{-1} \left( \frac{1}{m}\Phi_{j}^\T \tilde Z_{j}^\T \tilde Z_{j} \Phi_{j} - \Omega \right) Q^{-1} } \leq 1/2\right\},
\end{align}
with the randomness in $\{u_{j1},\ldots,u_{jm}\}$. We have that
\begin{align}\label{eq:IQ1}
& I_{q\bar h} + Q^{-1}\Omega^{-1} \left( \frac{1}{m}\Phi_{j}^\T \tilde Z_{j}^\T \tilde Z_{j} \Phi_{j} - \Omega \right) Q^{-1} \succeq \frac{1}{2} I_{q\bar h}
\end{align}
whenever $\Ecal_{j1}$ occurs. It is also clear that $Q\succeq I_{q\bar h}$. Therefore, when $\Ecal_{j1}$ occurs, \eqref{eq:23} implies that
\begin{align} \label{eq:28}
& \left\| \delta_j^{\downarrow} \right\|_2^2 \leq 4 \left\| \frac{\tau_0^2\lambda_n}{km} Q^{-1}\Omega^{-1} M^{-1} \zeta_0^{\downarrow} +  \frac{1}{m} Q^{-1}\Omega^{-1} \Phi_{j}^\T \tilde Z_{j}^\T \tilde Z_{j} v_{j} \right\|_2^2 \nonumber \\
& \leq 8 \left\| \frac{\tau_0^2\lambda_n}{km} Q^{-1}\Omega^{-1} M^{-1} \zeta_0^{\downarrow} \right\|_2^2 + 8 \left\| \frac{1}{m} Q^{-1}\Omega^{-1} \Phi_{j}^\T \tilde Z_{j}^\T \tilde Z_{j} v_{j} \right\|_2^2,
\end{align}
where the last inequality follows because $(a + b)^2 \leq 2 a^2 + 2 b^2$ for any $a, b \in \RR$.

For the first term on the right hand side of \eqref{eq:28}, we have that
\begin{align} \label{eq:28a}
& \quad \left\| \frac{\tau_0^2\lambda_n}{km} Q^{-1}\Omega^{-1} M^{-1} \zeta_0^{\downarrow} \right\|_2^2 \leq \left(\frac{\tau_0^2\lambda_n}{km}\right)^2 \zeta_0^{\downarrow \T} \left\{M\Omega \left(I_{q\bar h}+\frac{\tau_0^2\lambda_n}{km} \Omega^{-1} M^{-1} \right) \Omega M \right\}^{-1} \zeta_0^{\downarrow} \nonumber \\
&=\left(\frac{\tau_0^2\lambda_n}{km}\right)^2 \zeta_0^{\downarrow \T} \left(M\Omega^2 M + \frac{\tau_0^2\lambda_n}{km} \Omega M\right)^{-1} \zeta_0^{\downarrow} \nonumber \\
&\leq \left(\frac{\tau_0^2\lambda_n}{km}\right)^2 \zeta_0^{\downarrow \T} \left(\frac{\tau_0^2\lambda_n}{km} \Omega M\right)^{-1} \zeta_0^{\downarrow}  = \frac{\tau_0^2\lambda_n}{\underline c_{\Omega}km} \sum_{a=1}^q \sum_{h=1}^{\bar h} \frac{\zeta_{0h}^2}{\mu_{ah}} \leq \frac{\tau_0^2\lambda_n}{\underline c_{\Omega}km} \|\nu_0\|_{\HH}^2.
\end{align}
For the second term on right hand side of \eqref{eq:28}, it is equal to
\begin{align} \label{eq:28b1}
& \frac{1}{m} Q^{-1}\Omega^{-1} \Phi_{j}^\T \tilde Z_{j}^\T \tilde Z_{j} v_{j}
= \left(M + \frac{\tau_0^2 \lambda_n}{km}\Omega^{-1} \right)^{-1/2} \cdot \frac{1}{m} M^{1/2} \Omega^{-1} \Phi_{j}^\T \tilde Z_{j}^\T \tilde Z_{j} v_j  .
\end{align}
The first term in \eqref{eq:28b1} has bounded matrix operator norm by Assumption \ref{z_assumption}:
\begin{align} \label{eq:28b2}
\vertiii{ \left(M + \frac{\tau_0^2 \lambda_n}{km}\Omega^{-1} \right)^{-1/2} } & = \max_{1\leq a\leq q,1\leq h\leq \bar h} \frac{1}{\sqrt{\mu_{ah} + \tfrac{\tau_0^2 \lambda_n}{\overline c_{\Omega} km}}} \leq \sqrt{\frac{\overline c_{\Omega} km}{\tau_0^2 \lambda_n}}.
\end{align}
For the second term in \eqref{eq:28b1}, we repeatedly apply $\left|\varphi_{ai}\right|\leq C_{\varphi}$ and $\left|\tilde Z_{a}(u_{ji})\right|\leq \tilde C_Z$ in Assumptions \ref{eigen_assumption} and \ref{z_assumption} to obtain that
\begin{align} \label{eq:28b3}
&\quad  \left\|\frac{1}{m} M^{1/2} \Omega^{-1} \Phi_{j}^\T \tilde Z_{j}^\T \tilde Z_{j} v_j \right\|_2^2 \leq \frac{\underline c_{\Omega}^{-2}}{m^2} \left(\tilde Z_{j} v_j\right)^\T \tilde Z_{j} \Phi_{j} M \Phi_{j}^\T \tilde Z_{j}^\T \left(\tilde Z_{j} v_j\right) \nonumber \\
&= \frac{\underline c_{\Omega}^{-2}}{m^2} \sum_{a=1}^q \left(\tilde Z_{j} v_j\right)^\T \left(\tilde Z_{ja} \Phi_{ja} M_a \Phi_{ja}^\T \tilde Z_{ja} \right) \left(\tilde Z_{j} v_j\right) \nonumber \\
& \leq  \frac{\underline c_{\Omega}^{-2}}{m^2}\sum_{a=1}^q \sum_{g=1}^{\bar h} \mu_{ag} \left\{\sum_{i=1}^m \tilde Z_{a}(u_{ji}) \varphi_{ag}(u_{ji}) \left(\tilde Z_{j} v_j\right)_{i} \right\}^2  \nonumber \\
&\leq \frac{C_{\varphi}^2 \tilde C_Z^2 \underline c_{\Omega}^{-2}}{m^2} \sum_{a=1}^q \sum_{g=1}^{\bar h} \mu_{ag} \left\{\sum_{i=1}^m \left|\left(\tilde Z_{j} v_j\right)_{i}\right| \right\}^2 \nonumber \\
&= \frac{C_{\varphi}^2 \tilde C_Z^2 \underline c_{\Omega}^{-2}}{m^2} \sum_{a=1}^q \sum_{g=1}^{\bar h} \mu_{ag} \left\{\sum_{i=1}^m \sum_{b=1}^q \left| \tilde Z_{b}(u_{ji}) v_{jbi}\right| \right\}^2 \nonumber \\
&\leq \frac{C_{\varphi}^2 \tilde C_Z^4 \underline c_{\Omega}^{-2}}{m^2} \sum_{a=1}^q \sum_{g=1}^{\bar h} \mu_{ag} \left\{\sum_{i=1}^m \sum_{b=1}^q \sum_{h=\bar h+1}^{\infty}\left|\delta_{jbh} \varphi_{bh}(u_{ji})\right| \right\}^2 \nonumber \\
&\leq C_{\varphi}^4 \tilde C_Z^4 \underline c_{\Omega}^{-2} \left(\sum_{a=1}^q \sum_{g=1}^{\bar h} \mu_{ag} \right) \left(\sum_{b=1}^q \sum_{h=\bar h+1}^{\infty} |\delta_{jbh}| \right )^2 \nonumber \\
&\stackrel{(i)}{\leq} C_{\varphi}^4 \tilde C_Z^4 \underline c_{\Omega}^{-2} \left(\sum_{a=1}^q \tr(\rho_a)\right) \cdot \left(\sum_{b=1}^q\sum_{h=\bar h+1}^{\infty}  \mu_{bh} \right) \left(\sum_{b=1}^q \sum_{h=\bar h+1}^{\infty} \frac{\delta_{jbh}^2}{\mu_{bh}} \right ) \nonumber \\
&= C_{\varphi}^4 \tilde C_Z^4 \underline c_{\Omega}^{-2} \left(\sum_{a=1}^q \tr(\rho_a)\right) \left(\sum_{a=1}^q \tr(\rho_{a,\bar h})\right) \left(\sum_{b=1}^q \left\|\EE_{\nu_{jb},y_j\mid u_j} (\Delta_{jb}) \right\|_{\HH_b}^2 \right)  \nonumber \\
&= C_{\varphi}^4 \tilde C_Z^4 \underline c_{\Omega}^{-2} \Tr(\rho) \Tr(\rho,\bar h) \left\|\EE_{\nu_{j},y_j\mid u_j} (\Delta_{j}) \right\|_{\HH}^2 \nonumber \\
&\stackrel{(ii)}{\leq} C_{\varphi}^4 \tilde C_Z^4 \underline c_{\Omega}^{-2} \|\nu_0\|_{\HH}^2 \Tr(\rho) \Tr(\rho,\bar h) ,
\end{align}
where $(i)$ is from the Cauchy-Schwarz inequality and the definition of $\tr(\rho_a)$, and $(ii)$ is from the relation \eqref{eq:17}. Combining \eqref{eq:28b1}, \eqref{eq:28b2}, and \eqref{eq:28b3} leads to
\begin{align} \label{eq:28b4}
\left\|\frac{1}{m} Q^{-1}\Omega^{-1} \Phi_{j}^\T \tilde Z_{j}^\T \tilde Z_{j} v_{j}  \right\|_2^2 &\leq \frac{\overline c_{\Omega} km}{\tau_0^2 \lambda_n} C_{\varphi}^4 \tilde C_Z^4 \underline c_{\Omega}^{-2} \|\nu_0\|_{\HH}^2 \Tr(\rho) \Tr(\rho,\bar h) .
\end{align}

Finally we combine these results. Note that \eqref{eq:ba2} implies that
\begin{align} \label{eq:ba2.1}
\|\delta_j^{\downarrow}\|_2^2 & = \sum_{a=1}^q \|\delta_{ja}^{\downarrow}\|_2^2 \leq \sum_{a=1}^q \tr(\rho_a) \left\| \nu_{0a} \right\|_{\HH_a}^2 \leq \Tr(\rho) \left\| \nu_{0} \right\|_{\HH}^2.
\end{align}
\eqref{eq:ba3} and \eqref{eq:17} together imply that
\begin{align} \label{eq:ba3.1}
\|\delta_j^{\uparrow}\|_2^2 & = \sum_{a=1}^q \|\delta_{ja}^{\uparrow}\|_2^2 \leq \sum_{a=1}^q \mu_{a(\bar h+1)} \left\| {\EE}_{\nu_{ja},y_j|u_j}(\Delta_{ja}) \right\|_{\HH_a}^2 \leq \mu_{(\bar h+1)*} \left\| \nu_{0} \right\|_{\HH}^2.
\end{align}
Based on \eqref{subset_bias1} and the definition of $\Delta_j$, we combine \eqref{eq:ba2.1}, \eqref{eq:ba3}, \eqref{eq:28}, \eqref{eq:28a}, \eqref{eq:28b4}, and Lemma \ref{lem:zhanglem10} to obtain that
\begin{align}
& \quad \EE_{u^*} \EE_{u_j} \left\| \EE_{\nu_j, y_j\mid u_j}\{\nu_j(u^*)\} - \nu_0(u^*) \right\|_2 ^2 = \EE_{u^*} \EE_{u_j} \left\|\EE_{\nu_j, y_j\mid u_j}(\Delta_j) \right\|_2^2 \nonumber \\
&\leq \EE_{u_j} \left(\| \delta_j^{\downarrow} \|_2^2 + \| \delta_j^{\uparrow} \|_2^2 \right) \nonumber \\
&= \EE_{u_j} \left\{\| \delta_j^{\downarrow} \|_2^2 1(\Ecal_{j1}) +  \| \delta_j^{\downarrow} \|_2^2 1(\Ecal_{j1}^c) + \| \delta_j^{\uparrow} \|_2^2 \right\} \nonumber \\
&\leq \EE_{u_j} \left\{\| \delta_j^{\downarrow} \|_2^2 1(\Ecal_{j1})\right\}  + \Tr(\rho) \|\nu_0\|_{\HH}^2 \PP_{u_j}(\Ecal_{j1}^c) + \EE_{u_j} \left( \| \delta_j^{\uparrow} \|_2^2 \right) \nonumber \\
&\leq 8\frac{\tau_0^2\lambda_n}{\underline c_{\Omega}km} \|\nu_0\|_{\HH}^2 + 8 \frac{ km}{\tau_0^2 \lambda_n} C_{\varphi}^4 \tilde C_Z^4 \overline c_{\Omega} \underline c_{\Omega}^{-2} \|\nu_0\|_{\HH}^2 \Tr(\rho) \Tr(\rho,\bar h)  \nonumber \\
&\quad + 2q\bar h \|\nu_0\|_{\HH}^2 \Tr(\rho) \exp\left\{-\frac{m}{8(B^2+B) } \right\} +  \mu_{(\bar h+1)*}\|\nu_0\|_{\HH}^2 , \nonumber
\end{align}
where $B=C_{\varphi}^2 \tilde C_Z^2 \underline c_{\Omega}^{-1} q \Lambda\left(\tau_0^2\lambda_n/(\underline c_{\Omega}km),\bar h\right)+ 1$. This proves the conclusion.
\end{proof}

\vspace{5mm}

\begin{lemma} \label{lem:var1_bound}
Suppose that Assumptions \ref{sampling}--\ref{combine_assumption} hold. Then for every $j=1,\ldots,k$,
\begin{align*}
& \EE_{u^*}\EE_{u_j} \tr\left( {\var}_{y_j\mid u_j} \left[\EE_{\nu_j\mid y_j,u_j}\left\{\nu_j(u^*)\right\} \right] \right)   \\
\leq {}& 12\frac{\tau_0^2\lambda_n}{\underline c_{\Omega}km} \|\nu_0\|_{\HH}^2 + \frac{24km}{\tau_0^2 \lambda_n} C_{\varphi}^4 \tilde C_Z^4 \overline c_{\Omega} \underline c_{\Omega}^{-2} \Tr(\rho) \Tr(\rho,\bar h) \left(\frac{km}{\lambda_n} + 2\|\nu_0\|_{\HH}^2\right) \nonumber \\
&~~ + 12 \frac{C_{\varphi}^2 \tilde C_Z^2 \underline c_{\Omega}^{-2} \tau_0^2 q}{m} \Lambda(\tau_0^2\lambda_n/(\overline c_{\Omega} km),\bar h) + 2\mu_{(\bar h+1)*} \left( \frac{km}{\lambda_n} + 2 \|\nu_0\|_{\HH}^2 \right) \nonumber\\
&~~ + 4q\bar h \Tr(\rho) \left( \frac{km}{\lambda_n} + 2 \|\nu_0\|_{\HH}^2 \right) \exp\left\{-\frac{m}{8(B^2+B)} \right\},
\end{align*}
where $B=C_{\varphi}^2 \tilde C_Z^2 \underline c_{\Omega}^{-1} q \Lambda\left(\tau_0^2\lambda_n/(\underline c_{\Omega}km),\bar h\right)+ 1$.
\end{lemma}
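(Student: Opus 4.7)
The plan is to adapt the RKHS argument in the proof of Lemma \ref{lem:bias_bound} so that it applies to $\EE_{y_j\mid u_j}\|\Delta_j(u^*)\|_2^2$ rather than $\|\EE_{\nu_j,y_j\mid u_j}\{\Delta_j(u^*)\}\|_2^2$. The starting observation is the elementary inequality
\begin{align*}
\tr\left\{\var_{y_j\mid u_j}\hat\nu_j(u^*)\right\} \leq \EE_{y_j\mid u_j}\|\hat\nu_j(u^*) - \nu_0(u^*)\|_2^2 = \EE_{y_j\mid u_j}\|\Delta_j(u^*)\|_2^2,
\end{align*}
which holds because the trace of the variance is dominated by the mean squared error. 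After integrating over $u^*$, Parseval's identity together with the eigenfunction expansion \eqref{eq:ba1} reduce the task to controlling $\EE_{y_j,u_j}\|\delta_j^{\downarrow}\|_2^2 + \EE_{y_j,u_j}\|\delta_j^{\uparrow}\|_2^2$, exactly as was done for the bias in Lemma \ref{lem:bias_bound}.

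I would next retrace the derivation of \eqref{eq:21}, but without applying $\EE_{\nu_j,y_j\mid u_j}$. Because $y(u_{ji}) = \tilde Z(u_{ji})\nu_0(u_{ji}) + \epsilon(u_{ji})$, the first-order optimality condition for \eqref{eq:wjopt} produces the analogous finite-dimensional identity with an additional stochastic term $\tfrac{1}{m}\Phi_j^\T\tilde Z_j^\T\epsilon_j$ on the right. On the event $\Ecal_{j1}$ from \eqref{eset1}, applying $\|a+b+c\|_2^2 \leq 3(\|a\|_2^2 + \|b\|_2^2 + \|c\|_2^2)$ splits $\|\delta_j^{\downarrow}\|_2^2$ into three pieces: the ridge-bias term of \eqref{eq:28a}, the tail-leakage term of \eqref{eq:28b1}--\eqref{eq:28b4}, and a genuinely new noise term. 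As a preliminary step I would establish the $\HH$-norm bound $\EE_{y_j\mid u_j}\|\hat\nu_j\|_{\HH}^2 \leq km/\lambda_n + \|\nu_0\|_{\HH}^2$ by comparing the value of the regularized objective at $\hat\nu_j$ and at $\nu_0$; this produces the $(km/\lambda_n + 2\|\nu_0\|_{\HH}^2)$ factor that replaces $\|\nu_0\|_{\HH}^2$ in the tail-leakage calculation.

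The main obstacle I anticipate is the new noise term. After applying $\EE_{\epsilon_j}$ and using $\var(\epsilon_j) = \tau_0^2 I_m$, it reduces to $\tfrac{\tau_0^2}{m^2}\tr(Q^{-1}\Omega^{-1}\Phi_j^\T\tilde Z_j^\T\tilde Z_j\Phi_j\Omega^{-1}Q^{-1})$. On $\Ecal_{j1}$ the empirical Gram satisfies $m^{-1}\Phi_j^\T\tilde Z_j^\T\tilde Z_j\Phi_j \preceq \tfrac{3}{2}\Omega$ by \eqref{eq:IQ1}, which reduces the task to bounding $\tfrac{\tau_0^2}{m}\tr(Q^{-2}\Omega^{-1})$. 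The difficulty is that $\Omega^{-1}$ and $M^{-1}$ need not commute, so writing $\tr(Q^{-2}\Omega^{-1})$ as a clean sum of the form $q\Lambda(\tau_0^2\lambda_n/(\overline c_{\Omega}km),\bar h)$ requires care. My plan is to sandwich $Q^2 = I + \tfrac{\tau_0^2\lambda_n}{km}\Omega^{-1}M^{-1}$ below by $I + \tfrac{\tau_0^2\lambda_n}{\overline c_\Omega km}M^{-1}$ using $\Omega^{-1}\succeq \overline c_\Omega^{-1} I$, then diagonalize in the eigenbasis of $M$ to read off the desired $\Lambda$ sum; the extra factors $C_\varphi^2\tilde C_Z^2\underline c_\Omega^{-2}$ in the statement will come from the $\vertiii{\Omega^{-1}}$ and $\vertiii{m^{-1}\Phi_j^\T\tilde Z_j^\T\tilde Z_j\Phi_j}$ bounds used along the way.

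The remaining pieces are routine. The high-frequency tail $\|\delta_j^{\uparrow}\|_2^2 \leq \mu_{(\bar h+1)*}\|\Delta_j\|_{\HH}^2$ combined with the new $\HH$-norm bound yields the $\mu_{(\bar h+1)*}(km/\lambda_n + 2\|\nu_0\|_{\HH}^2)$ term. The off-event contribution on $\Ecal_{j1}^c$ is absorbed as in Lemma \ref{lem:bias_bound} via Lemma \ref{lem:zhanglem10}, using the crude bound $\|\delta_j^{\downarrow}\|_2^2 \leq \Tr(\rho)\|\Delta_j\|_{\HH}^2$, which delivers the $q\bar h\Tr(\rho)(km/\lambda_n + 2\|\nu_0\|_{\HH}^2)\exp\{-m/[8(B^2+B)]\}$ term. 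Summing all contributions and collecting constants produces the claimed bound.
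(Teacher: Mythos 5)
Your proposal follows essentially the same route as the paper's proof: bounding the trace of the variance by the mean squared error $\EE_{y_j\mid u_j}\|\Delta_j(u^*)\|_2^2$, splitting via Parseval into low- and high-frequency parts, deriving the crude bound $\EE_{y_j\mid u_j}\|\Delta_j\|_{\HH}^2\leq 2km/\lambda_n+4\|\nu_0\|_{\HH}^2$ from the optimizer property of \eqref{eq:wjopt}, keeping the noise term $\tfrac{1}{m}\Phi_j^\T\tilde Z_j^\T\epsilon_j$ in the finite-dimensional normal equations, and handling the three resulting pieces plus the off-event contribution exactly as in Lemma \ref{lem:bias_bound} and Lemma \ref{lem:zhanglem10}. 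The only cosmetic difference is in the new noise term, where the paper evaluates $\EE_{y_j\mid u_j}$ of the quadratic form directly using the independence of the errors and the uniform bounds $|\varphi_{ah}|\leq C_{\varphi}$, $|\tilde Z_a|\leq \tilde C_Z$ (without invoking $\Ecal_{j1}$), whereas you control the empirical Gram matrix on $\Ecal_{j1}$ and reduce to $\tr(Q^{-2}\Omega^{-1})$; both yield a bound of the form $m^{-1}\tau_0^2 q\,\Lambda(\tau_0^2\lambda_n/(\overline c_{\Omega}km),\bar h)$ up to constants.
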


\begin{proof}[Proof of Lemma \ref{lem:var1_bound}]
We use the same notations as in the proof of Lemma \ref{lem:bias_bound}. We further expand the functions $\Delta_{ja}(\cdot)$ defined in \eqref{eq:Deltaj} for $a=1,\ldots,q$ and $j=1,\ldots,k$ as
\begin{align*}
& \Delta_{ja}(\cdot) = \sum_{i=1}^{\infty} \tilde \delta_{jai} \varphi_{ai}(\cdot), \quad  \Delta^{\downarrow}_{ja}(\cdot) = \sum_{i=1}^{\bar h} \tilde\delta_{jai} \varphi_{ai}(\cdot), \quad
  \Delta^{\uparrow}_{ja} (\cdot) = \sum_{i=\bar h+1}^{\infty} \tilde \delta_{jai} \varphi_{ai}(\cdot), \\
& \tilde \delta_{ja}^{\downarrow} = \left(\tilde \delta_{ja1}, \ldots, \tilde \delta_{ja\bar h}\right)^\T, \quad \tilde\delta_{ja}^{\uparrow} = \left(\tilde\delta_{ja(\bar h+1)}, \ldots, \tilde\delta_{j a\infty}\right)^\T, \\
& \tilde \delta_j^{\downarrow} = \left(\tilde \delta_{j1}^{\downarrow \T}, \ldots, \tilde \delta_{jq}^{\downarrow \T}\right)^\T, \quad \tilde \delta_j^{\uparrow} = \left(\tilde \delta_{j1}^{\uparrow \T}, \ldots, \tilde \delta_{jq}^{\uparrow \T}\right)^\T.
\end{align*}
From \eqref{subset_var1}, we can see that
\begin{align} \label{eq:var1deltaj}
& \quad \EE_{u^*} \tr\left({\var}_{y_j\mid u_j} \left[\EE_{\nu_j\mid y_j,u_j}\{\nu_j(u^*)\} \right]\right)\nonumber \\
& = \EE_{u^*} \tr\left[ {\var}_{y_j\mid u_j} \left\{R_{j}(u^*) \tilde Z_j^\T \left(\tilde Z_j \tilde R_{jj} \tilde Z_j^\T + \frac{\tau_0^2\lambda_n}{k} I_{m} \right)^{-1} y_j \right\} \right] \nonumber \\
&\leq \EE_{u^*}\EE_{y_j\mid u_j} \left\| R_{j}(u^*) \tilde Z_j^\T \left(\tilde Z_j \tilde R_{jj} \tilde Z_j^\T + \frac{\tau_0^2\lambda_n}{k} I_{m} \right)^{-1} y_j - \nu_0(u^*) \right\|_2^2 \nonumber \\
&= \EE_{u^*} \EE_{y_j\mid u_j} \|\Delta_j(u^*)\|_2^2 =  \EE_{u^*}\EE_{y_j\mid u_j} \|\Delta_j^{\downarrow}(u^*)\|_2^2 +\EE_{u^*}\EE_{y_j\mid u_j} \|\Delta_j^{\uparrow}(u^*)\|_2^2 \nonumber \\
&= \EE_{y_j\mid u_j} \|\tilde\delta_j^{\downarrow}\|_2^2 +\EE_{y_j\mid u_j} \|\tilde\delta_j^{\uparrow}\|_2^2.
\end{align}
Therefore, we will find an upper bound for $\EE_{u^*} \EE_{y_j\mid u_j} \|\Delta_j(u^*)\|_2^2$ in the following. We start with finding a rough upper bound for $\EE_{y_j\mid u_j} \|\Delta_j\|_{\HH}^2$. Using the definition of $\hat \nu_j$ in \eqref{eq:Deltaj} and the optimizer property in \eqref{eq:wjopt}, we have that
\begin{align} \label{eq:vv12}
\| \hat \nu_j \|_{\HH}^2 &\overset{(i)}{\leq}  \sum_{i=1}^m \frac{\left\{y(u_{ji}) - \tilde Z(u_{ji}) \hat \nu_j(u_{ji})  \right\}^2}{\tau_0^2\lambda_n /k} + \| \hat \nu_j \|_{\HH}^2\nonumber \\
& \overset{(ii)}{\leq}
\sum_{i=1}^m \frac{\left\{y(u_{ji})- \tilde Z(u_{ji}) \nu_{0}(u_{ji}) \right\}^2} {\tau_0^2\lambda_n /k} +  \| \nu_0 \|_{\HH}^2 \nonumber \\
& \overset{(iii)}{\leq}
\sum_{i=1}^m \frac{\left\{ \epsilon(u_{ji}) \right\}^2} {\tau_0^2\lambda_n /k} +  \| \nu_0 \|_{\HH}^2,
\end{align}
where $(i)$ follows because the term inside the summation is non-negative, $(ii)$ follows because $\hat \nu_j$ minimizes the objective, and $(iii)$ follows from our model assumption. Since the error variance is $\tau_0^2$, \eqref{eq:vv12} implies that
\begin{align} \label{eq:vv3}
\EE_{y_j\mid u_j} \|\Delta_j\|_{\HH}^2 &\leq 2\EE_{y_j\mid u_j} \|\hat \nu_j \|_{\HH}^2 + 2\EE_{y_j\mid u_j} \|\nu_0\|_{\HH}^2 \nonumber \\
&\leq 2\EE_{y_j\mid u_j} \left[\sum_{i=1}^m \frac{\left\{ \epsilon(u_{ji}) \right\}^2} {\tau_0^2\lambda_n /k} \right] + 4  \|\nu_0\|_{\HH}^2 \nonumber \\
&\leq \frac{2km}{\lambda_n} + 4 \|\nu_0\|_{\HH}^2.
\end{align}
Using this bound for $\EE_{y_j\mid u_j} \|\Delta_j\|_{\HH}^2$, we can find an upper bound for $\EE_{y_j\mid u_j} \|\delta_j^{\uparrow}\|_2^2$:
\begin{align} \label{eq:v31}
& \quad \EE_{y_j\mid u_j} \| \tilde \delta_j^{\uparrow} \|_2^2 =  \sum_{a=1}^q \sum_{i=\bar h+1}^{\infty} \EE_{y_j\mid u_j} (\tilde \delta_{jai}^2) \nonumber \\
&= \sum_{a=1}^q \mu_{a(\bar h+1)}\sum_{i=\bar h+1}^{\infty} \frac{\EE_{y_j\mid u_j} (\tilde \delta_{jai}^2)}{\mu_{a(\bar h+1)}} \leq \sum_{a=1}^q \mu_{a(\bar h+1)}\sum_{i=\bar h+1}^{\infty} \frac{\EE_{y_j\mid u_j} (\tilde \delta_{jai}^2)}{\mu_{ai}}  \nonumber \\
&= \sum_{a=1}^q \mu_{a(\bar h+1)} \EE_{y_j\mid u_j} \| \Delta_{ja}^{\uparrow} \|^2_{\HH_a} \leq \sum_{a=1}^q \mu_{a(\bar h+1)} \EE_{y_j\mid u_j} \| \Delta_{ja} \|^2_{\HH_a} \nonumber \\
&\leq \mu_{(\bar h+1)*} \left( \frac{2km}{\lambda_n} + 4  \|\nu_0\|_{\HH}^2 \right),
\end{align}
and also an upper bound for $\EE_{y_j\mid u_j} \|\tilde\delta_j\|_2^2$:
\begin{align} \label{eq:v32}
& \quad \EE_{y_j\mid u_j} \|\tilde\delta_j\|_2^2  = \sum_{a=1}^q \sum_{i=1}^{\infty} \EE_{y_j\mid u_j} (\tilde \delta_{jai}^2) \nonumber \\
&\leq \sum_{a=1}^q\mu_{a1}\sum_{i=1}^{\infty} \frac{\EE_{y_j\mid u_j} (\tilde \delta_{ji}^2)}{\mu_{ai}} = \sum_{a=1}^q\mu_{a1} \EE_{y_j\mid u_j} \| \Delta_j^{\uparrow} \|^2_{\HH_a} \leq \mu_{1*}  \EE_{y_j\mid u_j} \| \Delta_j \|^2_{\HH} \nonumber \\
&\leq \Tr(\rho) \left( \frac{2km}{\lambda_n} + 4  \|\nu_0\|_{\HH}^2 \right),
\end{align}

Now we find an upper bound for $\EE_{y_j\mid u_j} \| \tilde \delta_j^{\downarrow} \|_2^2$. Define the error vectors
\begin{align*}
& \tilde v_{jai} = \sum_{h=\bar h+1}^\infty \tilde \delta_{jai} \varphi_{ah}(u_{ji}), \\
& \tilde v_{ja} = (\tilde v_{ja1},\ldots,\tilde v_{jam})^\T \in \RR^m, ~~ i=1,\ldots,m,~~ a=1,\ldots,q,\\
& \tilde v_j = \left(\tilde v_{j1}^\T, \ldots, \tilde v_{jq}^\T \right)^\T \in \RR^{qm}.
\end{align*}
Now we use an argument similar to the derivation of \eqref{eq:ba22}, \eqref{eq:ba6}, \eqref{eq:16}, \eqref{eq:18}, \eqref{eq:19}, \eqref{eq:20}, and \eqref{eq:21}. Instead of taking the ${\EE}_{\nu_j,y_j|u_j}$ as in \eqref{eq:16}, we do not take this expectation and keep the error term $\epsilon(u_{ji})$ all the way along the derivation. We can obtain the following relation similar to \eqref{eq:21}:
\begin{align} \label{eq:v5}
&\left( \frac{1}{m}\Phi_{j}^\T \tilde Z_{j}^\T \tilde Z_{j} \Phi_{j} + \frac{\tau_0^2\lambda_n}{km} M^{-1}  \right)  \tilde \delta_j^{\downarrow} = -\frac{\tau_0^2\lambda_n}{km}  M^{-1}\zeta_{0}^{\downarrow}  - \frac{1}{m} \Phi_{j}^\T \tilde Z_{j}^\T \tilde Z_{j} \tilde v_{j} + \frac{1}{m} \Phi_{j}^\T \tilde Z_{j}^\T \epsilon_j.
\end{align}
We use the same $Q$ matrix as defined in the proof of Lemma \ref{lem:bias_bound}. Then \eqref{eq:v5} can be rewritten as
\begin{align} \label{eq:v15}
&\left\{ I_{q\bar h} + Q^{-1}\Omega^{-1} \left(\frac{1}{m}\Phi_{j}^\T \tilde Z_{j}^\T \tilde Z_{j} \Phi_{j} - \Omega\right) Q^{-1}\right\} Q  \tilde \delta_j^{\downarrow} \nonumber\\
={}& - \frac{\tau_0^2\lambda_n}{km} Q^{-1}\Omega^{-1} M^{-1} \zeta_0^{\downarrow}  -  \frac{1}{m} Q^{-1}\Omega^{-1} \Phi_{j}^\T \tilde Z_{j}^\T \tilde Z_{j} \tilde v_{j} + \frac{1}{m} Q^{-1}\Omega^{-1} \Phi_{j}^\T \tilde Z_{j}^\T \epsilon_j.
\end{align}
On the event $\Ecal_{j1}$ defined as in \eqref{eset1}, using \eqref{eq:IQ1} and the fact $Q\succeq I_{q\bar h}$ \eqref{eq:v15} imply that
\begin{align} \label{eq:v1:3terms}
& \EE_{y_j\mid u_j} \| \tilde \delta_j^{\downarrow} \|_2^2 \leq  \EE_{y_j\mid u_j} \| Q  \tilde \delta_j^{\downarrow} \|_2^2 \nonumber \\
&\leq 4 \EE_{y_j\mid u_j}  \left\| - \frac{\tau_0^2\lambda_n}{km} Q^{-1}\Omega^{-1} M^{-1} \zeta_0^{\downarrow}  -  \frac{1}{m} Q^{-1}\Omega^{-1} \Phi_{j}^\T \tilde Z_{j}^\T \tilde Z_{j} \tilde v_{j} + \frac{1}{m} Q^{-1}\Omega^{-1} \Phi_{j}^\T \tilde Z_{j}^\T \epsilon_j  \right\|_2^2 \nonumber \\
&\leq 12  \left\| \frac{\tau_0^2\lambda_n}{km} Q^{-1}\Omega^{-1} M^{-1} \zeta_0^{\downarrow}\right\|_2^2 + 12\EE_{y_j\mid u_j}  \left\| \frac{1}{m} Q^{-1}\Omega^{-1} \Phi_{j}^\T \tilde Z_{j}^\T \tilde Z_{j} \tilde v_{j}  \right\|_2^2  \nonumber \\
&\quad +  12\EE_{y_j\mid u_j}  \left\| \frac{1}{m} Q^{-1}\Omega^{-1} \Phi_{j}^\T \tilde Z_{j}^\T \epsilon_j  \right\|_2^2,
\end{align}
where the last inequality follows because $(a + b + c)^2 \leq 3 a^2 + 3 b^2 + 3 c^2$ for any $a, b, c \in \RR$. We bound the three terms on the right hand side of \eqref{eq:v1:3terms}. The first term can be bounded as in \eqref{eq:28a}. The second term can be bounded similar to the proof of Lemma \ref{lem:bias_bound}: By Assumption \ref{eigen_assumption}, we have that
\begin{align} \label{eq:v1:28b1}
& \frac{1}{m} Q^{-1}\Omega^{-1} \Phi_{j}^\T \tilde Z_{j}^\T \tilde Z_{j} \tilde v_{j}
= \left(M + \frac{\tau_0^2 \lambda_n}{km}\Omega^{-1} \right)^{-1/2} \cdot \frac{1}{m} M^{1/2} \Omega^{-1} \Phi_{j}^\T \tilde Z_{j}^\T \tilde Z_{j} \tilde v_j.
\end{align}
The first term in \eqref{eq:v1:28b1} has bounded matrix operator norm in \eqref{eq:28b2}. For the second term in \eqref{eq:28b1}, we repeatedly apply Assumptions \ref{eigen_assumption} and \ref{z_assumption} to obtain that
\begin{align} \label{eq:v1:28b3}
&\quad \EE_{y_j\mid u_j}  \left\|\frac{1}{m} M^{1/2} \Omega^{-1} \Phi_{j}^\T \tilde Z_{j}^\T \tilde Z_{j} \tilde v_j\right\|_2^2 \nonumber \\
&= \frac{\underline c_{\Omega}^{-2}}{m^2} \EE_{y_j\mid u_j}  \sum_{a=1}^q \left(\tilde Z_{j} v_j\right)^\T \left(\tilde Z_{ja} \Phi_{ja} M_a \Phi_{ja}^\T \tilde Z_{ja} \right) \left(\tilde Z_{j} \tilde v_j\right) \nonumber \\
& \leq  \frac{\underline c_{\Omega}^{-2}}{m^2} \EE_{y_j\mid u_j} \sum_{a=1}^q \sum_{g=1}^{\bar h} \mu_{ag} \left\{\sum_{i=1}^m \tilde Z_{a}(u_{ji}) \varphi_{ag}(u_{ji}) \left(\tilde Z_{j} \tilde v_j\right)_{i} \right\}^2  \nonumber \\
&\leq \frac{C_{\varphi}^2 \tilde C_Z^2 \underline c_{\Omega}^{-2}}{m^2} \EE_{y_j\mid u_j}\sum_{a=1}^q \sum_{g=1}^{\bar h} \mu_{ag} \left\{\sum_{i=1}^m \left|\left(\tilde Z_{j} \tilde v_j\right)_{i}\right| \right\}^2 \nonumber \\
&= \frac{C_{\varphi}^2 \tilde C_Z^2 \underline c_{\Omega}^{-2}}{m^2} \EE_{y_j\mid u_j}\sum_{a=1}^q \sum_{g=1}^{\bar h} \mu_{ag} \left\{\sum_{i=1}^m \sum_{b=1}^q \left| \tilde Z_{b}(u_{ji}) \tilde v_{jbi}\right| \right\}^2 \nonumber \\
&\leq \frac{C_{\varphi}^2 \tilde C_Z^4 \underline c_{\Omega}^{-2}}{m^2} \EE_{y_j\mid u_j} \sum_{a=1}^q \sum_{g=1}^{\bar h} \mu_{ag} \left\{\sum_{i=1}^m \sum_{b=1}^q \sum_{h=\bar h+1}^{\infty}\left|\tilde \delta_{jbh} \varphi_{bh}(u_{ji})\right| \right\}^2 \nonumber \\
&\leq C_{\varphi}^4 \tilde C_Z^4 \underline c_{\Omega}^{-2} \left(\sum_{a=1}^q \sum_{g=1}^{\bar h} \mu_{ag} \right) \EE_{y_j\mid u_j} \left(\sum_{b=1}^q \sum_{h=\bar h+1}^{\infty} \left|\tilde \delta_{jbh}\right| \right )^2 \nonumber \\
&\stackrel{(i)}{\leq} C_{\varphi}^4 \tilde C_Z^4 \underline c_{\Omega}^{-2} \left(\sum_{a=1}^q \tr(\rho_a)\right) \cdot \left(\sum_{b=1}^q\sum_{h=\bar h+1}^{\infty}  \mu_{bh} \right) \EE_{y_j\mid u_j}\left(\sum_{b=1}^q \sum_{h=\bar h+1}^{\infty} \frac{\tilde \delta_{jbh}^2}{\mu_{bh}} \right ) \nonumber \\
&= C_{\varphi}^4 \tilde C_Z^4 \underline c_{\Omega}^{-2} \Tr(\rho) \left(\sum_{a=1}^q \tr(\rho_{a,\bar h})\right) \EE_{y_j\mid u_j} \left(\sum_{b=1}^q \left\|\Delta_{jb} \right\|_{\HH_b}^2 \right)  \nonumber \\
&= C_{\varphi}^4 \tilde C_Z^4 \underline c_{\Omega}^{-2} \Tr(\rho) \Tr(\rho,\bar h) \EE_{y_j\mid u_j} \left\|\Delta_{j} \right\|_{\HH}^2 \nonumber \\
&\stackrel{(ii)}{\leq} C_{\varphi}^4 \tilde C_Z^4 \underline c_{\Omega}^{-2} \Tr(\rho) \Tr(\rho,\bar h) \left(\frac{2km}{\lambda_n} + 4 \|\nu_0\|_{\HH}^2\right) ,
\end{align}
where $(i)$ is from the Cauchy-Schwarz inequality and the definition of $\tr(C)$, and $(ii)$ is from the relation \eqref{eq:vv3}. Combining \eqref{eq:v1:28b1}, \eqref{eq:28b2}, and \eqref{eq:v1:28b3} leads to
\begin{align} \label{eq:v1:28b4}
\EE_{y_j\mid u_j}  \left\| \frac{1}{m} Q^{-1}\Omega^{-1} \Phi_{j}^\T \tilde Z_{j}^\T \tilde Z_{j} \tilde v_{j}  \right\|_2^2 &\leq \frac{2km}{\tau_0^2 \lambda_n} C_{\varphi}^4 \tilde C_Z^4 \overline c_{\Omega} \underline c_{\Omega}^{-2} \Tr(\rho) \Tr(\rho,\bar h) \left(\frac{km}{\lambda_n} + 2\|\nu_0\|_{\HH}^2\right).
\end{align}
For the third term in \eqref{eq:v1:3terms}, by Assumptions \ref{eigen_assumption} and \ref{z_assumption}, we have that
\begin{align} \label{eq:v1:28c}
& \quad \EE_{y_j\mid u_j}  \left\| \frac{1}{m} Q^{-1}\Omega^{-1} \Phi_{j}^\T \tilde Z_{j}^\T \epsilon_j \right\|_2^2 \nonumber \\
&\leq \frac{\underline c_{\Omega}^{-2}}{m^2} \EE_{y_j\mid u_j} \left\{ \epsilon_j^\T \tilde Z_{j} \Phi_{j} \left(I_{q\bar h} + \frac{\tau_0^2 \lambda_n}{km} \Omega^{-1}M^{-1}\right)^{-1} \Phi_{j}^\T \tilde Z_{j}^\T \epsilon_j \right\} \nonumber \\
&\leq \frac{\underline c_{\Omega}^{-2}}{m^2} \EE_{y_j\mid u_j} \sum_{a=1}^q \sum_{h=1}^{\bar h} \frac{1}{1+\tfrac{\tau_0^2\lambda_n}{\overline c_{\Omega} km\mu_{ah}}} \left\{ \sum_{i=1}^m \tilde Z_a(u_{ji})\varphi_{ah}(u_{ji}) \epsilon(u_{ji})\right\}^2 \nonumber \\
&\stackrel{(i)}{=} \frac{\underline c_{\Omega}^{-2}}{m^2} \sum_{a=1}^q \sum_{h=1}^{\bar h} \frac{1}{1+\tfrac{\tau_0^2\lambda_n}{\overline c_{\Omega} km\mu_{ah}}} \EE_{y_j\mid u_j} \left\{ \sum_{i=1}^m \tilde Z_a(u_{ji})^2 \varphi_{ah}(u_{ji})^2 \epsilon(u_{ji})^2\right\} \nonumber\\
&\leq \frac{C_{\varphi}^2 \tilde C_Z^2 \underline c_{\Omega}^{-2}}{m^2} \sum_{a=1}^q \sum_{h=1}^{\bar h} \frac{1}{1+\tfrac{\tau_0^2\lambda_n}{\overline c_{\Omega} km\mu_{ah}}} \EE_{y_j\mid u_j} \left\{ \sum_{i=1}^m \epsilon(u_{ji})^2\right\} \nonumber\\
&\leq \frac{C_{\varphi}^2 \tilde C_Z^2 \underline c_{\Omega}^{-2} \tau_0^2 q}{m} \Lambda(\tau_0^2\lambda_n/(\overline c_{\Omega} km),\bar h).
\end{align}
where $(i)$ follows from the independence between $\{\epsilon_{j1},\ldots,\epsilon_{jm}\}$.
Therefore, we can obtain that
\begin{align}
& \quad \EE_{u^*}\EE_{u_j} \tr\left( {\var}_{y_j\mid u_j} \left[\EE_{\nu_j\mid y_j,u_j}\left\{\nu_j(u^*)\right\} \right] \right) \nonumber \\
&\stackrel{(i)}{\leq} \EE_{y_j\mid u_j} \|\tilde\delta_j^{\downarrow}\|_2^2 +\EE_{y_j\mid u_j} \|\tilde\delta_j^{\uparrow}\|_2^2 \nonumber \\
&= \EE_{y_j\mid u_j} \left\{\|\tilde\delta_j^{\downarrow}\|_2^2 1(\Ecal_{j1})\right\} + \EE_{y_j\mid u_j} \left\{\|\tilde\delta_j^{\downarrow}\|_2^2 1(\Ecal_{j1}^c)\right\}  + \EE_{y_j\mid u_j} \|\tilde\delta_j^{\uparrow}\|_2^2 \nonumber \\
&\stackrel{(ii)}{\leq} 12  \left\| \frac{\tau_0^2\lambda_n}{km} Q^{-1}\Omega^{-1} M^{-1} \zeta_0^{\downarrow}\right\|_2^2 + 12\EE_{y_j\mid u_j}  \left\| \frac{1}{m} Q^{-1}\Omega^{-1} \Phi_{j}^\T \tilde Z_{j}^\T \tilde Z_{j} \tilde v_{j}  \right\|_2^2  \nonumber \\
&\quad +  12\EE_{y_j\mid u_j}  \left\| \frac{1}{m} Q^{-1}\Omega^{-1} \Phi_{j}^\T \tilde Z_{j}^\T \epsilon_j  \right\|_2^2 +
\mu_{(\bar h+1)*} \left( \frac{2km}{\lambda_n} + 4  \|\nu_0\|_{\HH}^2 \right) \nonumber\\
&\quad + \Tr(\rho) \left( \frac{2km}{\lambda_n} + 4 \|\nu_0\|_{\HH}^2 \right) \PP_{u_j}(\Ecal_{j1}^c) \nonumber \\
&\stackrel{(iii)}{\leq} 12\frac{\tau_0^2\lambda_n}{\underline c_{\Omega}km} \|\nu_0\|_{\HH}^2 + \frac{24km}{\tau_0^2 \lambda_n} C_{\varphi}^4 \tilde C_Z^4 \overline c_{\Omega} \underline c_{\Omega}^{-2} \Tr(\rho) \Tr(\rho,\bar h) \left(\frac{km}{\lambda_n} + 2\|\nu_0\|_{\HH}^2\right) \nonumber \\
&\quad + 12 \frac{C_{\varphi}^2 \tilde C_Z^2 \underline c_{\Omega}^{-2} \tau_0^2 q}{m} \Lambda(\tau_0^2\lambda_n/(\overline c_{\Omega} km),\bar h) + 2\mu_{(\bar h+1)*} \left( \frac{km}{\lambda_n} + 2 \|\nu_0\|_{\HH}^2 \right) \nonumber\\
&\quad + 4q\bar h \Tr(\rho) \left( \frac{km}{\lambda_n} + 2 \|\nu_0\|_{\HH}^2 \right) \exp\left\{-\frac{m}{8(B^2+B)} \right\}, \nonumber
\end{align}
where $B=C_{\varphi}^2 \tilde C_Z^2 \underline c_{\Omega}^{-1} q \Lambda\left(\tau_0^2\lambda_n/(\underline c_{\Omega}km),\bar h\right)+ 1$, $(i)$ is from \eqref{eq:var1deltaj}, $(ii)$ is from \eqref{eq:v31}, \eqref{eq:v32}, and \eqref{eq:v1:3terms}, and $(iii)$ is from \eqref{eq:28a}, \eqref{eq:v1:28b4}, \eqref{eq:v1:28c}, and Lemma \ref{lem:zhanglem10}. This completes the proof of Lemma \ref{lem:var1_bound}.
\end{proof}

\vspace{5mm}

\begin{lemma} \label{lem:var2_bound}
Suppose that Assumptions \ref{sampling}--\ref{combine_assumption} hold. Then for every $j=1,\ldots,k$,
\begin{align*}
& \EE_{u^*}\EE_{y_j,u_j} \tr \left[{\var}_{\nu_j\mid y_j,u_j}\{\nu_j(u^*)\}\right]  \\
\leq{} & \frac{5\tau_0^2\lambda_n}{2\underline c_{\Omega} km} \Lambda\left(\tau_0^2\lambda_n/(\underline c_{\Omega} km),\bar h\right) + \frac{4\tilde C_Z^2 km}{\tau_0^2\lambda_n^2} \Tr(\rho,\bar h) \Tr(\rho)  + \lambda_n^{-1} \Tr(\rho,\bar h)  \\
&+  2\lambda_n^{-1}q\bar h\Tr(\rho) \exp\left\{-\frac{m}{8(B^2+B)}\right\},
\end{align*}
where $B=C_{\varphi}^2 \tilde C_Z^2 \underline c_{\Omega}^{-1} q \Lambda\left(\tau_0^2\lambda_n/(\underline c_{\Omega}km),\bar h\right)+ 1$.
\end{lemma}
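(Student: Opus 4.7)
The proof of Lemma \ref{lem:var2_bound} parallels the structure of the proof of Lemma \ref{lem:var1_bound} but applied to the posterior variance formula \eqref{subset_var2} directly. The first observation is that the posterior covariance in a Gaussian conjugate model does not depend on the realized response $y_j$, so $\EE_{y_j, u_j}$ collapses to $\EE_{u_j}$. Writing $A_j = (\tilde Z_j \tilde R_{jj} \tilde Z_j^\T + \frac{\tau_0^2 \lambda_n}{k} I_m)^{-1}$ and using the block-diagonal structure of $R_j(u^*)$, the trace reduces to
$$\tr\{\Var_{\nu_j|y_j,u_j}(\nu_j(u^*))\} = \lambda_n^{-1} \sum_{a=1}^q \rho_a(u^*,u^*) - \sum_{a=1}^q R_{ja}(u^*)^\T \tilde Z_{ja} A_j \tilde Z_{ja} R_{ja}(u^*).$$
I will analyse this quantity by splitting each $\rho_a$ and each $R_{ja}(u^*)$ into a head part (top $\bar h$ eigenfunctions) and a tail part, using exactly the notation ($M$, $\Omega$, $\Phi_j$, $Q$) developed in the proofs of Lemmas \ref{lem:bias_bound} and \ref{lem:var1_bound}.

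The tail of the first term, $\EE_{u^*}\sum_a \rho_a^{\uparrow}(u^*,u^*) = \Tr(\rho,\bar h)$, immediately yields the third contribution $\lambda_n^{-1}\Tr(\rho,\bar h)$. The coupling between the head of $R_{ja}(u^*)$ and the tail of the kernel, when combined with the naive operator bound $A_j \preceq (k/(\tau_0^2\lambda_n)) I_m$ and the uniform bound $|\tilde Z_a|\leq \tilde C_Z$ of Assumption \ref{z_assumption}, produces (after Cauchy--Schwarz and summing the tail series) the second term $4\tilde C_Z^2 km/(\tau_0^2\lambda_n^2)\cdot \Tr(\rho,\bar h)\Tr(\rho)$. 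The fourth term is controlled on the complementary event: whenever $\Ecal_{j1}^c$ occurs, I use the trivial dominance $\Var_{\nu_j|y_j,u_j}\{\nu_j(u^*)\}\preceq \lambda_n^{-1}R(u^*,u^*)$ (posterior variance never exceeds the prior variance), take trace and $\EE_{u^*}$ to get $\lambda_n^{-1}\Tr(\rho)$, and multiply by $\PP_{u_j}(\Ecal_{j1}^c)\leq 2q\bar h\exp\{-m/(8(B^2+B))\}$ from Lemma \ref{lem:zhanglem10}.

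The main work is the head-head contribution on $\Ecal_{j1}$, which is expected to give the leading $\Lambda$-factor. Writing $R_{ja}^{\downarrow}(u^*)=\Phi_{ja}M_a\varphi_a^{\downarrow}(u^*)$ with $\varphi_a^{\downarrow}(u^*)=(\varphi_{a1}(u^*),\dots,\varphi_{a\bar h}(u^*))^\T$, the head contribution to the subtracted term is
$$\sum_{a=1}^q \varphi_a^{\downarrow}(u^*)^\T M_a \Phi_{ja}^\T \tilde Z_{ja} A_j \tilde Z_{ja} \Phi_{ja} M_a \varphi_a^{\downarrow}(u^*),$$
and after taking $\EE_{u^*}$ (using $\EE_{u^*}[\varphi_a^{\downarrow}(u^*)\varphi_a^{\downarrow}(u^*)^\T]=I_{\bar h}$) it becomes $\tr[M\Phi_j^\T \tilde Z_j^\T A_j \tilde Z_j \Phi_j M]$. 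On $\Ecal_{j1}$, the matrix identity
$$\frac{1}{m}\Phi_j^\T \tilde Z_j^\T \tilde Z_j \Phi_j + \frac{\tau_0^2\lambda_n}{km}M^{-1}=\Omega Q\left\{I_{q\bar h}+ Q^{-1}\Omega^{-1}\Big(\frac{1}{m}\Phi_j^\T \tilde Z_j^\T\tilde Z_j\Phi_j-\Omega\Big)Q^{-1}\right\}Q$$
(from the derivation of \eqref{eq:23}) together with the operator inequality \eqref{eq:IQ1} will let me replace the empirical Gram matrix by $\Omega$ (up to a factor $3/2$), reducing the head contribution of $\EE_{u^*}\tr(\text{pV})$ to a computable trace of the form $\frac{\tau_0^2\lambda_n}{km}\tr[(M+\frac{\tau_0^2\lambda_n}{km}\Omega^{-1})^{-1}\Omega^{-1}]$. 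Bounding $\Omega^{-1}\preceq \underline c_\Omega^{-1}I_{q\bar h}$ and recognising $\sum_{a,h\leq \bar h}\mu_{ah}/(\mu_{ah}+\tau_0^2\lambda_n/(\underline c_\Omega km))\leq q\Lambda(\tau_0^2\lambda_n/(\underline c_\Omega km),\bar h)$ delivers the main $\frac{5\tau_0^2\lambda_n}{2\underline c_\Omega km}\Lambda(\cdot)$ term.

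The principal obstacle is precisely this head-on-$\Ecal_{j1}$ calculation: getting the arithmetic tight enough to end up with the constant $5/2$ (rather than something larger) requires carefully combining the $3/2$ factor coming from \eqref{eq:IQ1} with the slack introduced when bounding $A_j$ from below to extract $\Omega^{-1}$ on the right scale, while simultaneously separating the tail contribution of $A_j$ (the $\tilde R_{jj}^{\uparrow}$ part) so that it can be folded into the tail-cross term without double-counting. Once that bookkeeping is completed, the four pieces add together in the same style as in the proofs of Lemmas \ref{lem:bias_bound} and \ref{lem:var1_bound}, delivering the stated upper bound.
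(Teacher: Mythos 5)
Your skeleton is right in outline (head/tail split after taking $\EE_{u^*}$, the event $\Ecal_{j1}$ together with Lemma~\ref{lem:zhanglem10} for the exponential term, and the prior-dominates-posterior bound $\lambda_n^{-1}\Tr(\rho)$ on the bad event), but the core of the argument is missing and the second term is misattributed. After $\EE_{u^*}$ the cross terms between head and tail eigenfunctions vanish by orthonormality, so there is no head--tail Cauchy--Schwarz coupling term at all: the tail of the subtracted (negative) part is simply dropped, and the quantity you must control is $\EE_{u_j}\tr(\tilde M)$ with $\tilde M = M - M\Phi_j^\T\tilde Z_j^\T\bigl(\tilde Z_j\tilde R_{jj}\tilde Z_j^\T + \tfrac{\tau_0^2\lambda_n}{k}I_m\bigr)^{-1}\tilde Z_j\Phi_j M$. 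The obstacle you flag at the end --- that the $m\times m$ inverse contains the \emph{full} $\tilde R_{jj}$, so the $\Ecal_{j1}$ identity for the $q\bar h\times q\bar h$ empirical Gram matrix does not directly apply --- is resolved in the paper by the Woodbury identity, which rewrites $\tilde M = \bigl\{M^{-1} + \Phi_j^\T\tilde Z_j^\T\bigl(\tilde Z_j\tilde R_{jj}^{\uparrow}\tilde Z_j^\T + \tfrac{\tau_0^2\lambda_n}{k}I_m\bigr)^{-1}\tilde Z_j\Phi_j\bigr\}^{-1}$ so that only the \emph{tail} Gram matrix survives inside the middle inverse. Your proposal has no substitute for this step, and without it the reduction to $\tfrac{\tau_0^2\lambda_n}{km}\tr(Q^{-2}\Omega^{-1})$ does not go through.

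The second consequence of the Woodbury step is a second event $\Ecal_{j2} = \bigl\{\tfrac{k}{\tau_0^2\lambda_n}\tilde Z_j\tilde R_{jj}^{\uparrow}\tilde Z_j^\T \preceq \tfrac14 I_m\bigr\}$, whose probability is bounded via Markov's inequality on the trace by $\tfrac{4\tilde C_Z^2 km}{\tau_0^2\lambda_n}\Tr(\rho,\bar h)$; the term $\tfrac{4\tilde C_Z^2 km}{\tau_0^2\lambda_n^2}\Tr(\rho,\bar h)\Tr(\rho)$ in the statement is exactly $\lambda_n^{-1}\Tr(\rho)\cdot\PP_{u_j}(\Ecal_{j2}^c)$, i.e.\ the trivial bound on the complement of this event --- not the output of a tail-series Cauchy--Schwarz computation. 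The constant $5/2$ likewise comes from this event: on $\Ecal_{j1}\cap\Ecal_{j2}$ the inner matrix is bounded below by $I_{q\bar h} - \tfrac15 I_{q\bar h} - \tfrac45\cdot\tfrac12 I_{q\bar h} = \tfrac25 I_{q\bar h}$, the $\tfrac45$ arising from $(\tfrac14 I_m + I_m)^{-1}$. Without introducing $\Ecal_{j2}$ you can neither obtain the stated second term nor the stated constant, so the proof as proposed cannot be completed along the lines you describe.
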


\begin{proof}[Proof of Lemma \ref{lem:var2_bound}]
For each $a=1,\ldots,q$, we have the eigen-decomposition $\rho_a(u, u') = \sum_{i=1}^{\infty} \mu_{ai} \varphi_{ai}(u) \varphi_{ai}(u')$ for $u, u' \in [0,1]^d$.
This together with the expression of ${\var}_{\nu_j\mid y_j,u_j}\{\nu_j(u^*)\}$ in \eqref{subset_var2} and the orthonormal property of $\left\{\varphi_{ai}\right\}_{i=1}^{\infty}$ imply that
\begin{align} \label{eq:v2a}
&\quad \EE_{u^*}\EE_{y_j,u_j} \tr\left[{\var}_{\nu_j\mid y_j,u_j}\{\nu_j(u^*)\}\right] \nonumber \\
&=  \lambda_n^{-1} \EE_{u^*}\EE_{y_j,u_j} \sum_{a=1}^q \bigg\{ \rho_a(u^*,u^*) - R_{ja}(u^*)^\T \tilde Z_{ja}^\T \left(  \tilde Z_j \tilde R_{jj} \tilde Z_j^\T + \frac{\tau_0^2\lambda_n}{k} I_{m} \right)^{-1} \tilde Z_{ja} R_{ja}(u^*) \bigg\} \nonumber \\
&= \lambda_n^{-1} \sum_{a=1}^q \sum_{h=1}^{\infty} \mu_{ah} \EE_{u^*} \left\{\varphi_{ah}(u^*)^2 \right\}  \nonumber \\
& \quad - \lambda_n^{-1} \EE_{u_j}\sum_{a=1}^q \sum_{i=1}^{m} \sum_{i'=1}^{m} \sum_{h=1}^{\infty} \sum_{h'=1}^{\infty}  \mu_{ah} \mu_{ah'} \left\{ \tilde Z_{ja}^\T \left(  \tilde Z_j \tilde R_{jj} \tilde Z_j^\T + \frac{\tau_0^2\lambda_n}{k} I_{m} \right)^{-1} \tilde Z_{ja}\right\}_{i' i''} \nonumber\\
& \quad \times \left[ \varphi_{ah}(u_{ji}) \varphi_{ah'}(u_{ji'}) \EE_{u^*} \left\{ \varphi_{ah}(u^*) \varphi_{ah'}(u^*) \right\}\right]\nonumber\\
&= \lambda_n^{-1} \sum_{a=1}^q \sum_{h=1}^{\infty} \mu_{ah} - \lambda_n^{-1}\EE_{u_j} \sum_{a=1}^q \sum_{i=1}^{m} \sum_{i'=1}^{m} \sum_{h=1}^{\infty}  \mu^2_{ah} \varphi_{ah}(u_{ji}) \varphi_{ah}(u_{ji'}) \nonumber \\
&\quad \times \left\{ \tilde Z_{ja}^\T \left(  \tilde Z_j \tilde R_{jj} \tilde Z_j^\T + \frac{\tau_0^2\lambda_n}{k} I_{m} \right)^{-1} \tilde Z_{ja} \right\}_{i i'}  \nonumber\\
&= \lambda_n^{-1} \sum_{a=1}^q \sum_{h=1}^{\bar h} \mu_{ah} - \lambda_n^{-1}\EE_{u_j} \sum_{a=1}^q \sum_{i=1}^{m} \sum_{i'=1}^{m} \sum_{h=1}^{\bar h}  \mu^2_{ah} \varphi_{ah}(u_{ji}) \varphi_{ah}(u_{ji'}) \nonumber \\
&\quad \times \left\{ \tilde Z_{ja}^\T \left(  \tilde Z_j \tilde R_{jj} \tilde Z_j^\T + \frac{\tau_0^2\lambda_n}{k} I_{m} \right)^{-1} \tilde Z_{ja} \right\}_{i i'}  \nonumber\\
&\quad + \lambda_n^{-1} \sum_{a=1}^q \sum_{h=\bar h+1}^{\infty} \mu_{ah} - \lambda_n^{-1}\EE_{u_j} \sum_{a=1}^q \sum_{i=1}^{m} \sum_{i'=1}^{m} \sum_{h=\bar h+1}^{\infty}  \mu^2_{ah} \varphi_{ah}(u_{ji}) \varphi_{ah}(u_{ji'}) \nonumber \\
&\quad \times \left\{ \tilde Z_{ja}^\T \left(  \tilde Z_j \tilde R_{jj} \tilde Z_j^\T + \frac{\tau_0^2\lambda_n}{k} I_{m} \right)^{-1} \tilde Z_{ja} \right\}_{i i'}  \nonumber\\
&\overset{(i)}{\leq}  \lambda_n^{-1} \EE_{u_j} \sum_{a=1}^q \sum_{h=1}^{\bar h} \left\{ \mu_{ah} -  \mu^2_{ah} \Phi_{jah}^{\T} \tilde Z_{ja}^\T \left(  \tilde Z_j \tilde R_{jj} \tilde Z_j^\T + \frac{\tau_0^2\lambda_n}{k} I_{m} \right)^{-1} \tilde Z_{ja} \Phi_{jah} \right\}  \nonumber \\
&\quad + \lambda_n^{-1} \Tr(\rho,\bar h) \nonumber \\
& =  \lambda_n^{-1} \EE_{u_j} \tr\left\{ M - M \Phi_{j}^{\T} \tilde Z_{j}^\T \left(  \tilde Z_j \tilde R_{jj} \tilde Z_j^\T + \frac{\tau_0^2\lambda_n}{k} I_{m} \right)^{-1} \tilde Z_{j} \Phi_{j} M \right\} + \lambda_n^{-1} \Tr(\rho,\bar h),
\end{align}
where $\Phi_{jah}$ denotes the $h$th column of the matrix $\Phi_{ja}$ defined in \eqref{eq:ba4}, and $(i)$ follows because we dropped the last negative term to make it larger.

If we let
$$\tilde M = M - M \Phi_{j}^{\T} \tilde Z_{j}^\T \left(  \tilde Z_j \tilde R_{jj} \tilde Z_j^\T + \frac{\tau_0^2\lambda_n}{k} I_{m} \right)^{-1} \tilde Z_{j} \Phi_{j} M ,$$
then \eqref{eq:v2a} has shown that
\begin{align} \label{eq:defvj}
&\EE_{u^*}\EE_{y_j,u_j} \left[{\var}_{\nu_j\mid y_j,u_j}\{\nu_j(u^*)\}\right] \leq \lambda_n^{-1} \EE_{u_j} \tr(\tilde M) + \lambda_n^{-1} \Tr(\rho,\bar h) .
\end{align}
For $j=1,\ldots,k$, $a=1,\ldots,q$, and $h=1,2,\ldots$, we define the following matrices
\begin{align*}
& M_a^{\uparrow} = \diag \left\{\mu_{a(\bar h+1)}, \ldots, \mu_{a \infty}\right\}, \quad M^{\uparrow} = \diag \left\{M_1^{\uparrow}, \ldots, M_q^{\uparrow}\right\}, \\
& \Phi_{jah} = \left\{\varphi_{ah}(u_{j1}),\ldots,\varphi_{ah}(u_{jm})\right\}^\T, \\
& \Phi_{ja}^{\uparrow} = \left\{\Phi_{ja(\bar h+1)},\ldots,\Phi_{ja\infty}\right\}, \quad \Phi_j^{\uparrow}= \diag \left\{\Phi_{j1}^{\uparrow},\ldots,\Phi_{jq}^{\uparrow}\right\}, \\
& \tilde R_{jj}^{\uparrow} = \Phi_j^{\uparrow} M^{\uparrow}  \Phi_j^{\uparrow}, \quad \tilde R_{jj} = \Phi_j M \Phi_j + \tilde R_{jj}^{\uparrow},
\end{align*}
where $\Phi_j \in \RR^{qm\times q\bar h}$ is defined in \eqref{eq:ba4}. Then the Woodbury formula \citep{Har97} and the definition of $Q$ imply that
\begin{align} \label{eq:v2b}
& \tilde M = \left\{ M^{-1} + \Phi_j^{\T} \tilde Z_j^{\T} \left( \tilde Z_j \tilde R_{jj}^{\uparrow} \tilde Z_j^{\T} + \frac{\tau_0^2\lambda_n}{k} I_{m} \right)^{-1} \tilde Z_j \Phi_j   \right\}^{-1} \nonumber\\
&=  \frac{\tau_0^2\lambda_n}{km} \left\{\Omega + \frac{\tau_0^2\lambda_n} {km} M^{-1} + \frac{1}{m}\Phi_j^{\T} \tilde Z_j^{\T} \left(  \tfrac{k}{\tau_0^2\lambda_n} \tilde Z_j \tilde R_{jj}^{\uparrow} \tilde Z_j^{\T} + I_{m} \right)^{-1} \tilde Z_j \Phi_j - \Omega   \right\}^{-1} \nonumber \\
&=  \frac{\tau_0^2\lambda_n}{km} Q^{-1} \left[ I_{q\bar h}  +  Q^{-1} \Omega^{-1} \left\{ \frac{1}{m}\Phi_j^{\T} \tilde Z_j^{\T} \left(  \tfrac{k}{\tau_0^2\lambda_n} \tilde Z_j \tilde R_{jj}^{\uparrow} \tilde Z_j^{\T} + I_{m} \right)^{-1} \tilde Z_j \Phi_j - \Omega  \right\} Q^{-1} \right]^{-1} Q^{-1}\Omega^{-1}.
\end{align}
For $j=1,\ldots,k$, define the event $\Ecal_{j2} = \left\{ \tfrac{k}{\tau_0^2\lambda_n} \tilde Z_j \tilde R_{jj}^{\uparrow} \tilde Z_j^{\T} \preceq \frac{1}{4} I_{m} \right\}$. Since the matrix $ \tilde R_{jj}^{\uparrow}$ is semi-positive definite, we have the relation that
\begin{align*}
\left\{\tr\left(\tfrac{k}{\tau_0^2\lambda_n} \tilde Z_j \tilde R_{jj}^{\uparrow} \tilde Z_j^{\T} \right) \leq \frac{1}{4} \right\} \subseteq \left\{\text{s}_{\max} \left(\tfrac{k}{\tau_0^2\lambda_n} \tilde Z_j \tilde R_{jj}^{\uparrow} \tilde Z_j^{\T} \right) \leq \frac{1}{4} \right\} \subseteq \Ecal_{j2},
\end{align*}
$\text{s}_{\max}(A)$ is the maximum eigenvalue of the square matrix $A$. Therefore, by Markov's inequality and Assumption \ref{z_assumption}, we have that
\begin{align}\label{traceerr1}
& \PP_{u_j} \left( \Ecal_{j2}^c \right) \leq \PP_{u_j} \left\{ \tr\left(\tfrac{k}{\tau_0^2\lambda_n} \tilde Z_j \tilde R_{jj}^{\uparrow} \tilde Z_j^{\T} \right) > \frac{1}{4}\right\} \leq 4 \EE_{u_j} \tr\left(\tfrac{k}{\tau_0^2\lambda_n} \tilde Z_j \tilde R_{jj}^{\uparrow} \tilde Z_j^{\T} \right)  \nonumber \\
& = \frac{4k}{\tau_0^2\lambda_n} \EE_{u_j} \tr\left(\tilde Z_{j} \Phi_j^{\uparrow} M^{\uparrow}  \Phi_j^{\uparrow} \tilde Z_j^{\T}\right) \nonumber \\
& = \frac{4k}{\tau_0^2\lambda_n} \sum_{a=1}^q \sum_{h=\bar h+1}^{\infty} \mu_{ah} \EE_{u_j} \left\{ \sum_{i=1}^m  Z_a(u_{ji})^2 \varphi_{ah}(u_{ji})^2 \right\} \nonumber\\
&\leq \frac{4\tilde C_Z^2 k}{\tau_0^2\lambda_n} \sum_{a=1}^q \sum_{h=\bar h+1}^{\infty} \sum_{i=1}^m \mu_{ah} \EE_{u_j} \left\{ \varphi_{ah}(u_{ji})^2 \right\} \nonumber\\
& = \frac{4\tilde C_Z^2 km}{\tau_0^2\lambda_n} \Tr(\rho,\bar h).
\end{align}
On the event $\Ecal_{j1}\cap \Ecal_{j2}$ (with $\Ecal_{j1}$ defined in \eqref{eset1}), we have that
\begin{align}\label{Bmatbound1}
& \quad I_{q\bar h}  +  Q^{-1} \Omega^{-1} \left\{ \frac{1}{m}\Phi_j^{\T} \tilde Z_j^{\T} \left(  \tfrac{k}{\tau_0^2\lambda_n} \tilde Z_j \tilde R_{jj}^{\uparrow} \tilde Z_j^{\T} + I_{m} \right)^{-1} \tilde Z_j \Phi_j - \Omega  \right\} Q^{-1} \nonumber \\
& \stackrel{(i)}{\succeq} I_{q\bar h}  +  Q^{-1} \Omega^{-1}  \left\{ \frac{1}{m}\Phi_j^{\T} \tilde Z_j^{\T} \left( \frac{1}{4}I_{m} + I_{m} \right)^{-1} \tilde Z_j \Phi_j - \Omega  \right\} Q^{-1} \nonumber \\
& = I_{\bar h} - \frac{1}{5} Q^{-2} + \frac{4}{5} Q^{-1} \Omega^{-1} \left\{ \frac{1}{m}\Phi_j^{\T} \tilde Z_j^{\T} \left( \frac{1}{4}I_{m} + I_{m} \right)^{-1} \tilde Z_j \Phi_j - \Omega \right\} Q^{-1} \nonumber \\
& \stackrel{(ii)}{\succeq} I_{q\bar h} - \frac{1}{5} I_{q\bar h} - \frac{4}{5}\cdot \frac{1}{2}I_{q\bar h} = \frac{2}{5}I_{q\bar h},
\end{align}
where $(i)$ follows on the event $\Ecal_{j2}$, and $(ii)$ holds on the event $\Ecal_{j1}$ and from the fact $Q^{-2}\preceq I_{q\bar h}$. Therefore, from \eqref{eq:v2b} and \eqref{Bmatbound1}, we can obtain that
\begin{align} \label{eq:v2c}
& \EE_{u_j} \left\{\tr(\tilde M)1\left(\Ecal_{j1}\cap \Ecal_{j2}\right)\right\}
\leq  \EE_{u_j} \left\{\frac{\tau_0^2\lambda_n}{km} Q^{-1} \cdot \frac{5}{2}I_{q\bar h} \cdot Q^{-1} \Omega^{-1} \right\} \nonumber \\
={}& \frac{5\tau_0^2\lambda_n}{2km} \tr\left(Q^{-2}\Omega^{-1}\right)
\leq \frac{5\tau_0^2\lambda_n}{2 km} \tr\left\{\left(\Omega + \frac{\tau_0^2 \lambda_n}{km} M^{-1}\right)^{-1}\right\} \nonumber \\
\leq{}& \frac{5\tau_0^2\lambda_n}{2 km} \sum_{a=1}^q \sum_{h=1}^{\bar h} \frac{1}{\underline c_{\Omega} +\frac{\tau_0^2 \lambda_n}{km \mu_{ah}}}
= \frac{5\tau_0^2\lambda_n}{2\underline c_{\Omega} km} \Lambda\left(\tau_0^2\lambda_n/(\underline c_{\Omega} km),\bar h\right).
\end{align}
Therefore, by combining \eqref{eq:defvj}, \eqref{traceerr1}, \eqref{eq:v2c}, and Lemma \ref{lem:zhanglem10}, we obtain that
\begin{align} 
&\quad \EE_{u^*}\EE_{y_j,u_j} \tr \left[{\var}_{\nu_j\mid y_j,u_j}\{\nu_j(u^*)\}\right] \nonumber \\
&\leq \lambda_n^{-1} \EE_{u_j} \tr(\tilde M) + \lambda_n^{-1} \Tr(\rho,\bar h) \nonumber \\
&\leq \lambda_n^{-1} \EE_{u_j} \left\{\tr(\tilde M)1\left(\Ecal_{j1} \cap \Ecal_{j2} \right)\right\} + \lambda_n^{-1}\EE_{u_j} \left\{\tr(\tilde M) 1\left(\Ecal_{j1}^c\right)\right\} \nonumber \\
&\quad + \lambda_n^{-1}\EE_{u_j} \left\{\tr(\tilde M)1\left(\Ecal_{j2}^c\right)\right\} + \lambda_n^{-1} \Tr(\rho,\bar h) \nonumber \\
&\leq \frac{5\tau_0^2\lambda_n}{2\underline c_{\Omega} km} \Lambda\left(\tau_0^2\lambda_n/(\underline c_{\Omega} km),\bar h\right) + \lambda_n^{-1} \Tr(\rho) \PP_{u_j} \left(\Ecal_{j1}^c\right) \nonumber \\
&\quad \frac{4\tilde C_Z^2 km}{\tau_0^2\lambda_n^2} \Tr(\rho,\bar h) \Tr(\rho)  + \lambda_n^{-1} \Tr(\rho,\bar h) \nonumber \\
&\leq \frac{5\tau_0^2\lambda_n}{2\underline c_{\Omega} km} \Lambda\left(\tau_0^2\lambda_n/(\underline c_{\Omega} km),\bar h\right) + \frac{4\tilde C_Z^2 km}{\tau_0^2\lambda_n^2} \Tr(\rho,\bar h) \Tr(\rho)  + \lambda_n^{-1} \Tr(\rho,\bar h)  \nonumber \\
&\quad +  2\lambda_n^{-1}q\bar h\Tr(\rho) \exp\left\{-\frac{m}{8(B^2+B)}\right\}, \nonumber
\end{align}
where $B=C_{\varphi}^2 \tilde C_Z^2 \underline c_{\Omega}^{-1} q \Lambda\left(\tau_0^2\lambda_n/(\underline c_{\Omega}km),\bar h\right)+ 1$. This completes the proof of Lemma \ref{lem:var2_bound}.
\end{proof}

\vspace{8mm}

\begin{lemma}\label{lem:zhanglem10}
For the event $\Ecal_{j1}$ defined in \eqref{eset1}, the probability of the event $\Ecal_{j1}^c$ is upper bounded by
\begin{align} \label{eset1:prob}
& \PP_{u_j} \left(\Ecal_{j1}^c \right) \leq  2q\bar h \exp\left\{-\frac{m}{8(B^2+B)} \right\},
\end{align}
where $B=C_{\varphi}^2 \tilde C_Z^2 \underline c_{\Omega}^{-1} q \Lambda\left(\tau_0^2\lambda_n/(\underline  c_{\Omega}km),\bar h\right)+ 1$.
\end{lemma}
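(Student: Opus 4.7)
The plan is to recognize Lemma \ref{lem:zhanglem10} as a matrix Bernstein statement in disguise. Writing $A_i = W(u_{ji})W(u_{ji})^\T$, where $W(\cdot)$ is the $q\bar h$-vector in Assumption \ref{z_assumption}, gives
$$\frac{1}{m}\Phi_j^\T \tilde Z_j^\T \tilde Z_j \Phi_j \;=\; \frac{1}{m}\sum_{i=1}^m A_i, \qquad \EE_{u_{ji}}A_i = \Omega.$$
Therefore, with $M_i := Q^{-1}\Omega^{-1}(A_i-\Omega)Q^{-1}$, the matrix controlling the event $\Ecal_{j1}$ is precisely the average $m^{-1}\sum_{i=1}^m M_i$ of i.i.d., mean-zero $(q\bar h)\times(q\bar h)$ random matrices. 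Since Assumption \ref{sampling} makes $u_{j1},\ldots,u_{jm}$ i.i.d.\ on $[0,1]^d$, the non-Hermitian version of the matrix Bernstein inequality (see Theorem~1.6 of Tropp, 2012) applies, yielding for every $t>0$
$$\PP\!\left(\vertiii{\textstyle\sum_{i=1}^m M_i}\geq t\right)\;\leq\; 2(q\bar h)\,\exp\!\left(-\frac{t^2/2}{\sigma^2 + L t/3}\right),$$
where $L \geq \vertiii{M_i}$ almost surely and $\sigma^2\geq \max\{\vertiii{\sum_i \EE M_iM_i^*},\vertiii{\sum_i \EE M_i^*M_i}\}$. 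Choosing $t=m/2$ (so that the event $\Ecal_{j1}^c$ corresponds to the complement) will ultimately produce the stated exponential bound, provided $L$ and $\sigma^2$ are of the advertised orders.

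The first quantitative input is the uniform bound $\vertiii{M_i}\leq B$. Split $M_i$ via the triangle inequality into the rank-one piece $Q^{-1}\Omega^{-1}W_iW_i^\T Q^{-1}$ and the deterministic piece $Q^{-1}\Omega^{-1}\Omega Q^{-1}=Q^{-2}$. The latter has operator norm at most $1$ because $Q^2\succeq I_{q\bar h}$ by construction. For the former, use the identity $\vertiii{uv^\T} = \|u\|_2\|v\|_2$ and the key algebraic fact $Q^2\Omega = \Omega + \tfrac{\tau_0^2\lambda_n}{km}M^{-1}$ to reduce the bound to controlling the quadratic form $W(u_{ji})^\T(\Omega + \tfrac{\tau_0^2\lambda_n}{km}M^{-1})^{-1}W(u_{ji})$. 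Applying the matrix inequality $\Omega\succeq \underline c_\Omega I_{q\bar h}$ (Assumption \ref{z_assumption}) gives
$$\bigl(\Omega + \tfrac{\tau_0^2\lambda_n}{km}M^{-1}\bigr)^{-1}\;\preceq\;\underline c_\Omega^{-1}\bigl(I_{q\bar h}+\tfrac{\tau_0^2\lambda_n}{\underline c_\Omega km}M^{-1}\bigr)^{-1},$$
whose right side is diagonal with entries $\mu_{ah}/(\mu_{ah}+\tfrac{\tau_0^2\lambda_n}{\underline c_\Omega km})$. Combining this diagonal inequality with $|\tilde Z_a(u)\varphi_{ah}(u)|\leq \tilde C_Z C_\varphi$ (from Assumptions \ref{eigen_assumption}(i) and \ref{z_assumption}) and the monotonicity $\mu_{ah}/(\mu_{ah}+c)\leq \mu_{h*}/(\mu_{h*}+c)$ produces
$$\vertiii{Q^{-1}\Omega^{-1}W_iW_i^\T Q^{-1}}\;\leq\; C_\varphi^2\tilde C_Z^2 \underline c_\Omega^{-1}\,q\,\Lambda\!\bigl(\tfrac{\tau_0^2\lambda_n}{\underline c_\Omega km},\bar h\bigr),$$
so $\vertiii{M_i}\leq B$ as required.

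The variance input is then immediate: because $\vertiii{M_i M_i^*}\leq \vertiii{M_i}^2\leq B^2$ almost surely (and likewise for $M_i^*M_i$), we may take $\sigma^2 = mB^2$. Plugging $L=B$, $\sigma^2=mB^2$, and $t=m/2$ into the Bernstein bound yields
$$\PP(\Ecal_{j1}^c)\;\leq\; 2q\bar h\,\exp\!\left(-\frac{m^2/8}{mB^2 + Bm/6}\right)\;=\; 2q\bar h\,\exp\!\left(-\frac{m}{8(B^2+B/6)}\right)\;\leq\; 2q\bar h\,\exp\!\left(-\frac{m}{8(B^2+B)}\right),$$
which is exactly \eqref{eset1:prob}.

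The most delicate step, I expect, will be the operator-norm bound on the rank-one term $Q^{-1}\Omega^{-1}W_iW_i^\T Q^{-1}$: because $Q$ is defined as the principal square root of the (not necessarily Hermitian) matrix $I+\tfrac{\tau_0^2\lambda_n}{km}\Omega^{-1}M^{-1}$, one must carefully justify the relation $Q^{-2}\Omega^{-1}=(\Omega+\tfrac{\tau_0^2\lambda_n}{km}M^{-1})^{-1}$ and handle the asymmetry of $M_i$. The asymmetry is the reason the right-hand side of \eqref{eset1:prob} carries the prefactor $2q\bar h$ (from the rectangular Bernstein) rather than $q\bar h$; no symmetrization trick should be needed if one invokes the non-Hermitian statement directly.
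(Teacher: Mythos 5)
Your proposal is correct and follows essentially the same route as the paper: rewrite the deviation matrix as an average of i.i.d.\ mean-zero terms $Q^{-1}\Omega^{-1}(W_{ji}W_{ji}^\T-\Omega)Q^{-1}$, bound each summand's operator norm by $B$ via the triangle inequality, the identity $Q^2\Omega=\Omega+\tfrac{\tau_0^2\lambda_n}{km}M^{-1}$, and the eigenvalue/uniform-boundedness assumptions, then apply matrix Bernstein with $t=m/2$ and variance proxy $mB^2$ to get the prefactor $2q\bar h$ and the exponent $m/(8B^2+4B/3)\geq m/(8(B^2+B))$. The delicate point you flag about $Q$ being the square root of a non-Hermitian product is a fair observation, but it affects the paper's own argument identically and does not change the conclusion.
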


\begin{proof}[Proof of Lemma \ref{lem:zhanglem10}]
For $j=1,\ldots,k$ and $i=1,\ldots,m$, let
\begin{align*}
W_{ji}={}&\big\{\tilde Z_1(u_{ji})\varphi_{11}(u_{ji}),\ldots, \tilde Z_1(u_{ji})\varphi_{1\bar h}(u_{ji}),\ldots, \nonumber\\
&~~\tilde Z_q(u_{ji})\varphi_{q1}(u_{ji}),\ldots, \tilde Z_q(u_{ji})\varphi_{q\bar h}(u_{ji})\big\}^\T \in \RR^{q \bar h},
\end{align*}
similar to $W(u)$ in Assumption \ref{z_assumption}. With some linear algebra, the matrix in the definition \eqref{eset1} can be rewritten as
\begin{align} \label{eq:ecal:1}
& Q^{-1} \Omega^{-1} \left( \frac{1}{m}\Phi_j^\T \tilde Z_j^\T \tilde Z_j \Phi_j - \Omega \right) Q^{-1} = Q^{-1} \Omega^{-1} \left( \frac{1}{m} \sum_{i=1}^m W_{ji} W_{ji}^\T - \Omega \right) Q^{-1} \nonumber\\
={}& \frac{1}{m} \sum_{i=1}^m Q^{-1} \Omega^{-1}\left(W_{ji} W_{ji}^\T - \Omega \right) Q^{-1}.
\end{align}
Using Assumptions \ref{eigen_assumption}, \ref{z_assumption}, and the fact that $Q\succeq I_{q\bar h}$, we can obtain that for every $j=1,\ldots,k$ and $i=1,\ldots,m$,
\begin{align} \label{eq:ecal:2}
&\quad ~\vertiii{Q^{-1} \Omega^{-1}\left(W_{ji} W_{ji}^\T - \Omega \right) Q^{-1}}  \nonumber \\
&\leq  \vertiii{Q^{-1}  \Omega^{-1} W_{ji} W_{ji}^\T Q^{-1}} + \vertiii{Q^{-2}} \nonumber \\
&\leq  W_{ji}^\T Q^{-2}\Omega^{-1} W_{ji}   +  1\nonumber\\
&\leq W_{ji}^\T \left(\Omega + \frac{\tau_0^2 \lambda_n}{km} M^{-1}\right)^{-1} W_{ji}  + 1   \nonumber \\
&\leq \underline c_{\Omega}^{-1} \sum_{a=1}^q\sum_{h=1}^{\bar h}  \frac{\mu_{ah}}{\mu_{ah} + \frac{\tau_0^2\lambda_n}{\underline c_{\Omega} km}} \tilde Z_a(u_{ji})^2 \varphi_{ah}(u_{ji})^2 + 1 \nonumber \\
&\leq C_{\varphi}^2 \tilde C_Z^2 \underline c_{\Omega}^{-1} q \Lambda\left(\tau_0^2\lambda_n/(\underline c_{\Omega} km),\bar h\right)+ 1 \equiv B.
\end{align}
Furthermore,
\begin{align} \label{eq:ecal:3}
&\vertiii{\left\{Q^{-1} \Omega^{-1}\left(W_{ji} W_{ji}^\T - \Omega \right) Q^{-1}\right\}^2} \nonumber\\
\leq{}& \vertiii{Q^{-1} \Omega^{-1}\left(W_{ji} W_{ji}^\T - \Omega \right) Q^{-1}}^2 \leq B^2.
\end{align}
Now from \eqref{eq:ecal:1}, \eqref{eq:ecal:2} and \eqref{eq:ecal:3}, we apply the matrix Bernstein inequality (Theorem 6.1.1 of \citealt{Tro15}) to the sequence of $\left\{Q^{-1} \Omega^{-1}\left(W_{ji} W_{ji}^\T - \Omega \right) Q^{-1} \right\}_{i=1}^m$ to obtain that
\begin{align}
&\quad \PP_{u_j} \left(\Ecal_{j1}^c \right)  = \PP_{u_j} \left(\vertiii{  Q^{-1} \Omega^{-1} \left( \frac{1}{m}\Phi_j^\T \tilde Z_j^\T \tilde Z_j \Phi_j - \Omega \right) Q^{-1} } > 1/2\right) \nonumber\\
&\leq \PP_{u_j} \left(\vertiii{  \frac{1}{m} \sum_{i=1}^m Q^{-1} \Omega^{-1}\left(W_{ji} W_{ji}^\T - \Omega \right) Q^{-1} } > 1/2\right) \nonumber\\
&\leq 2q\bar h \exp\left\{-\frac{(m/2)^2/2}{m B^2 + mB /6 } \right\} \nonumber \\
&\leq 2q\bar h \exp\left\{-\frac{m}{8(B^2+B) } \right\}, \nonumber
\end{align}
which completes the proof.
\end{proof}

\vspace{8mm}

\begin{proof}[Proof of Theorem \ref{thm:main}]
We prove part (i) and part (ii), respectively. We simplify the upper bounds in Lemmas \ref{lem:bias_bound}, \ref{lem:var1_bound}, and \ref{lem:var2_bound} with the choice $\lambda_n=1$ and $\lambda_n\asymp n^{-(2\vv)/(2\vv+d)}$.
\vspace{2mm}

\noindent (i) With $\lambda_n=1$, we first derive a bound for the quantity $\Lambda(\tau_0^2\lambda_n/(ckm),\bar h)$ for a generic constant $c>0$ (which will be replaced by $\overline c_{\Omega}$ and $\underline c_{\Omega}$ in the upper bounds in Lemmas \ref{lem:bias_bound}, \ref{lem:var1_bound}, and \ref{lem:var2_bound}). Using Assumption \ref{eigen_assumption}, there exists some constant $c_{\mu}>0$ such that $\mu_{i*} \leq c_{\mu} i^{-2\vv/d}$ for $i=1,2,\ldots$. Therefore, given the fact that $2\vv>d$, we have that if $\lambda_n=1$,
\begin{align} \label{Lambda:order1}
& \Lambda(\tau_0^2\lambda_n/(ckm),\bar h) \leq \sum_{h=1}^{\infty} \left(1 + \frac{\tau_0^2}{ckm\mu_{h*}}\right)^{-1} \leq \sum_{h=1}^{\infty} \left(1 + \frac{c_{\mu}c_2\tau_0^2}{cn}h^{2\vv/d}\right)^{-1} \nonumber \\
& \leq \sum_{h\leq n^{d/(2\vv)}} \left(1 + \frac{c_{\mu}c_2\tau_0^2}{cn}h^{2\vv/d}\right)^{-1} + \sum_{h>n^{d/(2\vv)}} \left(1 + \frac{c_{\mu}c_2\tau_0^2}{cn}h^{2\vv/d}\right)^{-1} \nonumber \\
& \leq n^{d/(2\vv)} + \sum_{h>n^{d/(2\vv)}} \left( \frac{c_{\mu}c_2\tau_0^2}{cn}h^{2\vv/d}\right)^{-1} \nonumber \\
& \leq n^{d/(2\vv)} + \left( \frac{c_{\mu}c_2\tau_0^2}{cn}\right)^{-1}\sum_{h>n^{d/(2\vv)}}  \int_{h}^{h+1} x^{-2\vv/d} dx \nonumber \\
& \leq n^{d/(2\vv)} + \left( \frac{c_{\mu}c_2\tau_0^2}{cn}\right)^{-1} n^{-\frac{d}{2\vv}\cdot \left(\tfrac{2\vv}{d}-1\right)} \nonumber \\
& \asymp n^{d/(2\vv)} + n^{1-\frac{2\vv-d}{2\vv}} \nonumber \\
& \asymp n^{d/(2\vv)}.
\end{align}
Given the condition $m \gtrsim n^{(d/\vv)+\eta}$, in the exponent of \eqref{eset1:prob}, we have
\begin{align*}
& \frac{m}{8(B^2+B) } \asymp \frac{m}{ \Lambda(\tau_0^2\lambda_n/(\underline c_{\Omega}km),\bar h)^2 } \gtrsim \frac{n^{(d/\vv)+\eta}}{n^{d/\vv}} = n^{\eta}.
\end{align*}
This implies that for some positive constant $c'>0$,
\begin{align}\label{exp:bound1}
& \exp\left\{- \frac{m}{8(B^2+B)  } \right\} \leq \exp(-c'n^\eta).
\end{align}
With the choice $\bar h=\lceil n^{3d/(2\vv-d)} \rceil$, we have that
\begin{align} \label{trace:dstar}
& \Tr(\rho,\bar h) = \sum_{a=1}^q \sum_{h\geq \bar h} \mu_{ah} \leq q \sum_{h\geq \bar h } \mu_{h*} \leq q \sum_{h\geq \bar h } c_{\mu} h^{-2\vv/d} \nonumber \\
&\leq q c_{\mu} \sum_{h\geq \bar h} \int_{h}^{h+1} x^{-2\vv/d} dx \leq c_{\mu} \int_{n^{3d/(2\vv-d)}}^{\infty} x^{-2\vv/d} dx \nonumber \\
&= q c_{\mu} n^{-\frac{3d}{2\vv-d}\cdot \frac{2\vv-d}{d}} \asymp n^{-3}.
\end{align}

Note that Assumption \ref{eigen_assumption} (ii) and Assumption \ref{w0_assumption} imply that $\Tr(\rho)=O(1)$ and $\|\nu_0\|_{\HH}=O(1)$. Using the orders in \eqref{Lambda:order1}, \eqref{exp:bound1}, and \eqref{trace:dstar}, the order of the upper bound in Lemma \ref{lem:bias_bound} can be quantified as
\begin{align}\label{bias_bound1}
&\quad \frac{4}{k} \sum_{j=1}^k \EE_{u^*} \EE_{u_j} \left\| \EE_{\nu_{j}, y_j\mid u_j}\{\nu_{j}(u^*)\} - \nu_{0}(u^*) \right\|_2 ^2   \nonumber \\
&\lesssim n^{-1}  + n \cdot n^{-3} +  n^{-3} + n^{\lceil 3d/(2\vv-d)\rceil } \cdot \exp (-c'n^\eta)  \nonumber \\
&\lesssim n^{-1}.
\end{align}
Similarly, the order of the upper bound in Lemma \ref{lem:var1_bound} can be quantified as
\begin{align}\label{var1_bound1}
&\quad \frac{4}{k^2} \sum_{j=1}^k \EE_{u^*}\EE_{u_j} \tr\left( {\var}_{y_j\mid u_j} \left[\EE_{\nu_{j} \mid y_j,u_j}\{\nu_{j}(u^*)\} \right]\right)  \nonumber  \\
&\lesssim \frac{1}{kn} + \frac{n}{k}\cdot n^{-3}\cdot \left(n + 1 \right) \nonumber \\
&\quad + \frac{1}{km} \cdot n^{d/(2\vv)}+ \frac{1}{k}\cdot n^{-3} \left(n + 1 \right)  \nonumber \\
&\quad + \frac{1}{k}\cdot n^{\lceil 3d/(2\vv-d)\rceil} \left( n + 1\right) \cdot \exp(-c'n^\eta)  \nonumber \\
&\lesssim k^{-1} n^{-1} + n^{-(2\vv-d)/(2\vv)} + k^{-1} n^{-2} + k^{-1} n^{\lceil 3d/(2\vv-d)\rceil+1}\cdot \exp(-c'n^\eta)  \nonumber \\
&\lesssim n^{-(2\vv-d)/(2\vv)}.
\end{align}
The order of the upper bound in Lemma \ref{lem:var2_bound} can be quantified as
\begin{align}\label{var2_bound1}
&\quad \frac{\overline c}{k} \sum_{j=1}^k \EE_{u^*}\EE_{y_j,u_j}\tr\left({\var}_{\nu_{j} \mid y_j,u_j}\{\nu_{j}(u^*)\}\right) \nonumber \\
&\lesssim n^{-1} \cdot n^{d/(2\vv)} + \left(n+1\right) \cdot n^{-3}  + n^{\lceil 3d/(2\vv-d)\rceil} \cdot \exp(-c'n^\eta)  \nonumber \\
&\lesssim n^{-(2\vv-d)/(2\vv)}.
\end{align}
Finally, we combine \eqref{bias_bound1}, \eqref{var1_bound1}, \eqref{var2_bound1}, \eqref{L2_risk1}, and \eqref{combine_subset_bias_var2} to obtain that
\begin{align*}
\LL(\overline \Pi) &\lesssim n^{-1} + n^{-(2\vv-d)/(2\vv)} + n^{-(2\vv-d)/(2\vv)} + n^{-1} \lesssim n^{-(2\vv-d)/(2\vv)}.
\end{align*}
The rate for $\overline w(\cdot)$ follows trivially from the inequality
\begin{align*}
& |\overline w(u^*)-w_0(u^*)| = \left| Z(u^*) \overline \beta(u^*) - Z(u^*) \beta_0(u^*)\right| \leq C_Z\left\|\overline \beta(u^*) - \beta_0(u^*)\right\|_2.
\end{align*}
This proves the conclusion of Theorem \ref{thm:main} (i).

\vspace{5mm}

\noindent (ii) When $\lambda_n\asymp n^{d/(2\vv+d)}$, similar to part (i), we first derive a bound for the quantity $\Lambda_n(\tau_0^2\lambda_n/(ckm),\bar h)$ with $c>0$ being a generic constant. Using Assumption \ref{eigen_assumption}, there exists some constant $c_{\mu}>0$ such that $\mu_j \leq c_{\mu} j^{-2\vv/d}$. Therefore, we have that
\begin{align} \label{Lambda:order}
& \Lambda(\tau_0^2\lambda_n/(km),\bar h) \leq \sum_{h=1}^{\infty} \left(1 + \frac{\tau_0^2\lambda_n}{ckm\mu_{h*}}\right)^{-1} \leq \sum_{h=1}^{\infty} \left(1 + \frac{c_{\mu}c_2\tau_0^2\lambda_n}{cn}h^{2\vv/d}\right)^{-1} \nonumber \\
& \leq \sum_{h\leq n^{d/(2\vv+d)}} \left(1 + \frac{c_{\mu}c_2\tau_0^2\lambda_n}{n}h^{2\vv/d}\right)^{-1} + \sum_{h>n^{d/(2\vv+d)}} \left(1 + \frac{c_{\mu}c_2\tau_0^2\lambda_n}{cn}h^{2\vv/d}\right)^{-1} \nonumber \\
& \leq n^{d/(2\vv+d)} + \sum_{h>n^{d/(2\vv+d)}} \left( \frac{c_{\mu}c_2\tau_0^2\lambda_n}{cn}h^{2\vv/d}\right)^{-1} \nonumber \\
& \leq n^{d/(2\vv+d)} + \left( \frac{c_{\mu}c_2\tau_0^2\lambda_n}{cn}\right)^{-1}\sum_{h>n^{d/(2\vv+d)}}  \int_{h}^{h+1} x^{-2\vv/d} dx \nonumber \\
& \leq n^{d/(2\vv+d)} + \left( \frac{c_{\mu}c_2\tau_0^2\lambda_n}{cn}\right)^{-1} n^{-\frac{d}{2\vv+d}\cdot \left(\tfrac{2\vv}{d}-1\right)} \nonumber \\
& \asymp n^{d/(2\vv+d)} + \lambda_n^{-1} n^{1-\frac{2\vv-d}{2\vv+d}} \nonumber \\
& \asymp n^{d/(2\vv+d)},
\end{align}
where the last step follows because $\lambda_n \asymp n^{d/(2\vv+d)}$.

Given our condition $m \gtrsim n^{2d/(2\vv+d)+\eta}$, in the exponent of \eqref{eset1:prob}, we have
\begin{align*}
& \frac{m}{8(B^2+B) }  \asymp \frac{m}{ \Lambda(\tau_0^2\lambda_n/(ckm),\bar h)^2 } \gtrsim \frac{n^{2d/(2\vv+d)+\eta}}{n^{2d/(2\vv+d)}} = n^{\eta}.
\end{align*}
This implies that for some positive constant $c'>0$,
\begin{align}\label{exp:bound}
& \exp\left\{\frac{m}{8(B^2+B) } \right\} \leq \exp(-c'n^\eta).
\end{align}

With the choice $\bar h=\lceil n^{3d/(2\vv-d)} \rceil$, $\Tr(\rho,\bar h)$ is upper bounded by $n^{-3}$ as in \eqref{trace:dstar}.

Using the orders in \eqref{Lambda:order}, \eqref{exp:bound}, and \eqref{trace:dstar}, the order of the upper bound in Lemma \ref{lem:bias_bound} can be quantified as
\begin{align}\label{bias_bound2}
&\quad \frac{4}{k} \sum_{j=1}^k \EE_{u^*} \EE_{u_j} \left\| \EE_{\nu_{j}, y_j\mid u_j}\{\nu_{j}(u^*)\} - \nu_{0}(u^*) \right\|_2 ^2 \nonumber \\
&\lesssim \frac{n^{d/(2\vv+d)}}{n}  + \frac{n}{n^{d/(2\vv+d)}}\cdot n^{-3} +  n^{-3} + n^{\lceil 3d/(2\vv-d) \rceil} \cdot \exp (-c'n^\eta)  \nonumber \\
&\lesssim n^{-2\vv/(2\vv+d)}.
\end{align}
Similarly, the order of the upper bound in Lemma \ref{lem:var1_bound} can be quantified as
\begin{align}\label{var1_bound2}
&\quad \frac{4}{k^2} \sum_{j=1}^k \EE_{u^*}\EE_{u_j} \tr\left( {\var}_{y_j\mid u_j} \left[\EE_{\nu_{j} \mid y_j,u_j}\{\nu_{j}(u^*)\} \right]\right)  \nonumber  \\
&\lesssim \frac{n^{d/(2\vv+d)}}{kn} + \frac{n}{kn^{d/(2\vv+d)}}\cdot n^{-3}\cdot \left(\frac{n}{n^{d/(2\vv+d)}} + 1 \right) \nonumber \\
&\quad + \frac{1}{km} \cdot n^{d/(2\vv+d)}+ \frac{1}{k}\cdot n^{-3} \left( \frac{2n}{n^{d/(2\vv+d)}} + 1 \right)  \nonumber \\
&\quad + \frac{1}{k}\cdot n^{\lceil 3d/(2\vv-d) \rceil} \left( \frac{n}{n^{d/(2\vv+d)}} + 1\right) \cdot \exp(-c'n^\eta)  \nonumber \\
&\lesssim k^{-1} n^{-2\vv/(2\vv+d)} + k^{-1} n^{4\vv/(2\vv+d)-3} + n^{-2\vv/(2\vv+d)} + k^{-1} n^{2\vv/(2\vv+d)-3} \nonumber \\
&\quad + k^{-1} n^{\lceil 3\vv/(2\vv-d)\rceil +2\vv/(2\vv+d)}\cdot \exp(-c'n^\eta)  \nonumber \\
&\lesssim n^{-2\vv/(2\vv+d)}.
\end{align}
The order of the upper bound in Lemma \ref{lem:var2_bound} can be quantified as
\begin{align}\label{var2_bound2}
&\quad \frac{\overline c}{k} \sum_{j=1}^k \EE_{u^*}\EE_{y_j,u_j}\tr\left({\var}_{\nu_{j} \mid y_j,u_j}\{\nu_{j}(u^*)\}\right) \nonumber \\
&\lesssim \frac{1}{n} \cdot n^{d/(2\vv+d)} + \left(n\cdot n^{-2d/(2\vv+d)} +n^{-d/(2\vv+d)}\right) \cdot n^{-3} \nonumber \\
&\quad + n^{-d/(2\vv+d)}\cdot n^{\lceil 3d/(2\vv-d) \rceil} \cdot \exp(-c'n^\eta)  \nonumber \\
&\lesssim n^{-2\vv/(2\vv+d)}.
\end{align}
Finally, we combine \eqref{bias_bound2}, \eqref{var1_bound2}, and \eqref{var2_bound2}, \eqref{L2_risk1}, and \eqref{combine_subset_bias_var2} to obtain that
\begin{align*}
\LL(\overline \Pi) &\lesssim n^{-2\vv/(2\vv+d)} + n^{-2\vv/(2\vv+d)} + n^{-2\vv/(2\vv+d)} + n^{-1} \lesssim n^{-2\vv/(2\vv+d)}.
\end{align*}
The rate for $\overline w(\cdot)$ follows similarly. This proves the conclusion of Theorem \ref{thm:main} part (ii).
\end{proof}

\bibliographystyle{plainnat}
\bibliography{papers}

\end{document}